\PassOptionsToPackage{unicode}{hyperref}
\PassOptionsToPackage{hyphens}{url}
\PassOptionsToPackage{dvipsnames,svgnames,x11names}{xcolor}
\documentclass[
  12pt]{article}

\usepackage{amsmath,amssymb}
\usepackage{iftex}

\usepackage{graphicx,psfrag,epsf}
\usepackage{enumerate}
\usepackage{natbib}
\usepackage{url} 
\usepackage{amssymb}
\usepackage{extarrows} 
\usepackage{mathtools}\usepackage{amsmath}
\usepackage{algorithm}
\usepackage{algorithmic}
\usepackage{amsthm}
\usepackage{subcaption}  
\usepackage{booktabs}  
\usepackage{xcolor,colortbl}
\usepackage{url} 
\usepackage{longtable}
\usepackage{multirow}
\usepackage{array}
\usepackage{rotating}
\usepackage{comment}  
\usepackage{natbib}
\usepackage{comment}

\usepackage{bbm}

\newtheorem{theorem}{Theorem}

\newtheorem{lemma}{Lemma}

\ifPDFTeX
  \usepackage[T1]{fontenc}
  \usepackage[utf8]{inputenc}
  \usepackage{textcomp} 
\else 
  \usepackage{unicode-math}
  \defaultfontfeatures{Scale=MatchLowercase}
  \defaultfontfeatures[\rmfamily]{Ligatures=TeX,Scale=1}
\fi
\usepackage{lmodern}
\ifPDFTeX\else  
\fi
\IfFileExists{upquote.sty}{\usepackage{upquote}}{}
\IfFileExists{microtype.sty}{
  \usepackage[]{microtype}
  \UseMicrotypeSet[protrusion]{basicmath} 
}{}
\makeatletter
\@ifundefined{KOMAClassName}{
  \IfFileExists{parskip.sty}{%
    \usepackage{parskip}
  }{
    \setlength{\parindent}{0pt}
    \setlength{\parskip}{6pt plus 2pt minus 1pt}}
}{
  \KOMAoptions{parskip=half}}
\makeatother
\usepackage{xcolor}
\makeatletter
\ifx\paragraph\undefined\else
  \let\oldparagraph\paragraph
  \renewcommand{\paragraph}{
    \@ifstar
      \xxxParagraphStar
      \xxxParagraphNoStar
  }
  \newcommand{\xxxParagraphStar}[1]{\oldparagraph*{#1}\mbox{}}
  \newcommand{\xxxParagraphNoStar}[1]{\oldparagraph{#1}\mbox{}}
\fi
\ifx\subparagraph\undefined\else
  \let\oldsubparagraph\subparagraph
  \renewcommand{\subparagraph}{
    \@ifstar
      \xxxSubParagraphStar
      \xxxSubParagraphNoStar
  }
  \newcommand{\xxxSubParagraphStar}[1]{\oldsubparagraph*{#1}\mbox{}}
  \newcommand{\xxxSubParagraphNoStar}[1]{\oldsubparagraph{#1}\mbox{}}
\fi
\makeatother

\usepackage{longtable,booktabs,array}
\usepackage{calc} 
\usepackage{etoolbox}
\makeatletter
\patchcmd\longtable{\par}{\if@noskipsec\mbox{}\fi\par}{}{}
\makeatother
\IfFileExists{footnotehyper.sty}{\usepackage{footnotehyper}}{\usepackage{footnote}}
\makesavenoteenv{longtable}
\usepackage{graphicx}
\makeatletter
\def\maxwidth{\ifdim\Gin@nat@width>\linewidth\linewidth\else\Gin@nat@width\fi}
\def\maxheight{\ifdim\Gin@nat@height>\textheight\textheight\else\Gin@nat@height\fi}
\makeatother
\setkeys{Gin}{width=\maxwidth,height=\maxheight,keepaspectratio}
\makeatletter
\def\fps@figure{htbp}
\makeatother

\makeatletter
\@ifpackageloaded{caption}{}{\usepackage{caption}}
\AtBeginDocument{%
\ifdefined\contentsname
  \renewcommand*\contentsname{Table of contents}
\else
  \newcommand\contentsname{Table of contents}
\fi
\ifdefined\listfigurename
  \renewcommand*\listfigurename{List of Figures}
\else
  \newcommand\listfigurename{List of Figures}
\fi
\ifdefined\listtablename
  \renewcommand*\listtablename{List of Tables}
\else
  \newcommand\listtablename{List of Tables}
\fi
\ifdefined\figurename
  \renewcommand*\figurename{Figure}
\else
  \newcommand\figurename{Figure}
\fi
\ifdefined\tablename
  \renewcommand*\tablename{Table}
\else
  \newcommand\tablename{Table}
\fi
}
\@ifpackageloaded{float}{}{\usepackage{float}}
\floatstyle{ruled}
\@ifundefined{c@chapter}{\newfloat{codelisting}{h}{lop}}{\newfloat{codelisting}{h}{lop}[chapter]}
\floatname{codelisting}{Listing}

\makeatother
\makeatletter
\@ifpackageloaded{caption}{}{\usepackage{caption}}
\@ifpackageloaded{subcaption}{}{\usepackage{subcaption}}
\makeatother

\ifLuaTeX
  \usepackage{selnolig}  
\fi
\usepackage[]{natbib}
\usepackage{bookmark}

\IfFileExists{xurl.sty}{\usepackage{xurl}}{} 
\hypersetup{
  pdftitle={Title},
  pdfauthor={Author 1; Author 2},
  pdfkeywords={3 to 6 keywords, that do not appear in the title},
  colorlinks=true,
  linkcolor={blue},
  filecolor={Maroon},
  citecolor={Blue},
  urlcolor={Blue},
  pdfcreator={LaTeX via pandoc}}

\begin{document}

\def\spacingset#1{\renewcommand{\baselinestretch}%
{#1}\small\normalsize} \spacingset{1}


{
  \title{\bf  Mixed Effects Mixture of Experts: Modeling Double Heterogeneous Trajectories}
  \author{Xinkai Yue, Xiaodong Yan, Haohui Han and
    Liya Fu\thanks{
    The authors gratefully acknowledge \textit{the National Natural Science Foundation of China (No. 12371295, 12371292) and the Shaanxi Fundamental Science Research Project for Mathematics and Physics (No. 22JSY002)}}
    \thanks{Email: fuliya@mail.xjtu.edu.cn} \\
    School of Mathematics and Statistics, Xi’an Jiaotong University}
  \maketitle
} 

\bigskip
\begin{abstract}
Linear mixed-effects model (LMM) is a cornerstone of longitudinal data analysis, but is limited to adeptly make heterogeneous analyses predictable under both group-specific fixed effects and subject-specific random effects.  To address this challenge, we propose a novel statistical framework by using a large model prototype: a mixed effects mixture of experts model (MEMoE). This framework integrates the `divide-and-conquer' paradigm of Mixture of Experts Models with classical mixed-effect modeling. In the proposed MEMoE, each `expert' is a full LMM dedicated to capturing the longitudinal trajectory of a specific latent subpopulation, while another model “gating function” learns to route subjects to the most appropriate expert in a data-driven manner based on baseline covariates. We develop a robust inferential procedure for parameter estimation based on the Laplace Expectation-Maximization algorithm, with standard errors calibrated using robust sandwich estimators to account for potential model misspecification.  Extensive simulation studies and an empirical application demonstrate that MEMoE outperforms both traditional single-population LMM and conventional Mixture of Experts models in terms of parameter recovery, classification accuracy, and overall model fit. 
\end{abstract}

\noindent%
{\it Keywords:} Large Model Prototype;  Mixture of Experts; Mixed Effects; Prediction Set. 
\vfill

\newpage
\spacingset{1.8} 

\section{Introduction}\label{sec-intro}

Longitudinal data provides substantial information on dynamic processes, including disease progression, cognitive development, and behavioral changes \citep{diggle2002, fitzmaurice2011}. The linear mixed-effects model (LMM) accounts for subject-level correlations in longitudinal data by adding random effects, avoids the underestimation of standard errors, and thus ensures the validity of hypothesis testing for correlated data \citep{hedaker2006} by separating variance into fixed (population-level) and random (subject-level) components \citep{laird1982,verbeke2000,pinheiro2000,little2019}. To address the limitations of the homogeneous population assumption in longitudinal data analysis, mixture linear mixed models \citep{verbeke1996, muthen2004} or subgroup analysis approaches \citep{yang2019high, yan2021subgroup, huang2023integrative} have been developed to accommodate non-Gaussian random-effect distributions and enable clustering of subjects with distinct trajectories.  Nonetheless, LMM and its extensions have limited capacity to accurately predict under heterogeneous model structures that simultaneously incorporate group-specific fixed effects and subject-specific random effects.

The large model prototype ``mixture of experts'' (MoE) paradigm is a foundational technique that leverages a gating function to route inputs to appropriate specialized submodels, thereby facilitating the adaptive assignment of expert-specific (i.e., group-specific) effects \citep{jacobs1991}. In this framework, each expert focuses on a distinct region of the input space and is accountable for a specific data subset. This methodology has demonstrated efficacy across various domains, enabling flexible capture and discovery of latent subgroups in regression data \citep{yuksel2012, bishop2006}. In large-scale data contexts, MoE enhances scalability—a key reason for its adoption in modern deep learning—by efficiently managing complex, multimodal distributions without imposing uniformity assumptions \citep{shazeer2017a}.
Recent applications to longitudinal trajectories further illustrate MoE's capacity to identify latent classes while accounting for heterogeneity, as in growth curve analysis \citep{gao2002, quiroz2013}. The adaptive gating mechanism weights expert contributions differentially across individuals or time points, thereby effectively accommodating population heterogeneity \citep{he2025}. 
Consequently, MoE is particularly advantageous for longitudinal applications involving divergent subgroup patterns, such as varying treatment responses in patient cohorts or differential growth curves in educational studies, where it can uncover time-dependent latent classes, such as heterogeneous brain activity trajectories in infant emotional reactivity research \citep{che2023}.

However, standard MoE models lack an explicit mechanism to account for within correlation, simply treating all observations as independent, which can lead to biased estimates in longitudinal settings \citep{jordan1994}. Hierarchical extensions of MoE have been proposed, but they typically fail to incorporate random effects within the experts, thereby limiting their ability to capture unobserved subject-level heterogeneity \citep{xu1996}. In contrast, existing mixed-effects extensions of regression models, such as mixtures of linear mixed models, effectively capture correlations yet fail to leverage the gating-expert architecture of MoE. Recent research has sought to bridge this gap. \citet{fung2022} proposed a mixed MoE for multilevel data, demonstrating that such models can approximate arbitrary mixed-effects distributions. Similarly, \citet{kock2025deep} developed a deep mixture of linear mixed models to handle irregular longitudinal data with complex temporal dynamics, incorporating deep latent factors to model high-dimensional random effects. These pioneering studies underscore the promise of combining mixtures and random effects. However, practical parameter estimation remains difficult, stemming not only from the presence of hidden variables but also from the substantial computational burden associated with integrating over random effects.

To address the challenges of classical LMMs and MoE models, we propose a Mixed-Effects Mixture of Experts (MEMoE) model, which unifies the MoE framework with subject-specific random effects. In MEMoE, each expert comprises a linear mixed-effects model with its own random intercepts and slopes, governed by a multivariate normal prior; the gating function then probabilistically assigns observations to the respective experts. This approach captures heterogeneity between-subjects through mixture components while simultaneously accounting for within correlations through random effects, thereby bridging multilevel data structures with MoE architecture. The key contributions are listed as follows:

(i) For methodology, the proposed MEMoE model advances the synergy between MoE and LMMs, because it flexibly models double heterogeneous longitudinal trajectories, accommodating subject-specific variability.  Relative to mixtures of linear mixed models, MEMoE employs explicit gating for subgroup discovery; in contrast to standard MoE models, it formally incorporates within correlations.

(ii) In computation, we develop a new Laplace-EM algorithm to address the challenge of intractable marginal likelihood, which integrates over the latent random effects. 

(iii) Practically, MEMoE is especially appropriate for applications such as patient trajectory modeling in healthcare, where latent subgroups may exhibit distinct progression patterns (e.g., heterogeneous epidemic disease progression, \citep{cui2022}); educational growth curves accounting for student-specific differences; or socioeconomic trends with subgroup-specific dynamics.

The remainder of this paper is organized as follows. In Section 2, we present the MEMoE models and the Laplace-EM algorithm. Section 3 provides a set of constructions for response prediction. In Section 4, we establish the theoretical properties of the proposed estimator, including the consistency of the Laplace-EM estimator and the asymptotic normality of the predictor. Section 5 evaluates the performance of the proposed method through extensive simulation studies and real-data analysis. Section 6 gives some concluding remarks.

\section{Methods}\label{sec-meth}
This section introduces the MEMoE framework, which demonstrates the primary advantage of making both observation- and subgroup-specific heterogeneous analyses predictable when the subgroup structure is unknown.

\subsection{The  Mixed Effects Mixture of Experts Models}
\label{ssec:MEMoE}

Consider the data collected from $N$ subjects.
Let $y_{ij}$ denote the  $j$-th outcome for subject $i$, and ${x}_{ij}=(1, x_{ij1},\dots,x_{ij(p-1)})^\top\in\mathbb{R}^p$ be the fixed-effect covariates and
${z}_{ij}\in\mathbb{R}^q$ be the random-effect covariates, where $i=1,\ldots N$ and  $j=1,\dots,n_i$.
We posit $K$ latent subgroups and assign to each observation a latent label $v_{ij} \in \{1, \dots, K\}$ indicating its subgroup membership. Given the expert label $v_{ij}=k$, the response follows an expert-specific mixed effects model: 
\begin{equation*}\label{eq:lmm-obs}
y_{ij} \;=\; x_{ij}^\top \beta_k \;+\; z_{ij}^\top u_i \;+\; \varepsilon_{ij}^k,
\end{equation*}
where ${\beta}_{k}\in\mathbb{R}^p$ is an expert-specific fixed effect,  ${u}_i\in\mathbb{R}^q$ is a subject-level random effect shared between experts, and the random error $\varepsilon_{ij}^k\sim\mathcal N(0,\sigma_k^2)$. Then
\begin{equation}\label{eq:expert-cond}
y_{ij}\mid x_{ij},z_{ij},u_i, v_{ij}=k \;\sim\;
\mathcal N\!\big(x_{ij}^\top \beta_k + z_{ij}^\top u_i,\ \sigma_k^2\big).
\end{equation}

To identify which expert a subject belongs to, we assume that the gating function depends only on the covariates at the subject-level:
\begin{equation*}\label{eq:gate-cat}
v_{ij}\mid {x}_{ij},{z}_{ij}\ \sim\
\mathrm{Multinomial}\big(\,\pi_1({x}_{ij},{z}_{ij};\alpha),\ldots,\pi_K({x}_{ij},{z}_{ij};\alpha)\big),
\end{equation*}
where $\alpha$ is an unknown parameter vector.
The covariate-dependent weight $\pi_k({x}_{ij}, {z}_{ij};\alpha) $ can be interpreted as the probability that the \((i,j)\)-th observation comes from the  $k$th expert model and serves to automatically discover latent subgroups and regime changes. 

To provide a more explanatory and predictive structure for 
subject-level random effects $u_i$, we construct the following covariate-dependent model:
\begin{equation}\label{eq:random-ui}
u_i \mid w_i \sim \mathcal{N}(\kappa  w_i, \Sigma),
\end{equation}
where $w_i$ collects subject-level covariates (e.g., demographics, baseline measures),  $\kappa \in \mathbb{R}^{q \times d}$, and $\Sigma$ is a $q \times q$  covariance matrix. This specification encodes explainable between-subject variation through $w_i$; setting $\kappa = 0$ recovers the commonly used random-effects  $u_i \sim \mathcal{N}(0, \Sigma)$. In this zero-mean random-effects case, we refer to the resulting expert model as the random-effects mixture-of-experts (ReMoE) specification.
Figure \ref{MEMoEfigure} shows the workflow framework of the proposed MEMoE model.

\begin{figure}[htbp]
\centering
\includegraphics[width=1\textwidth]{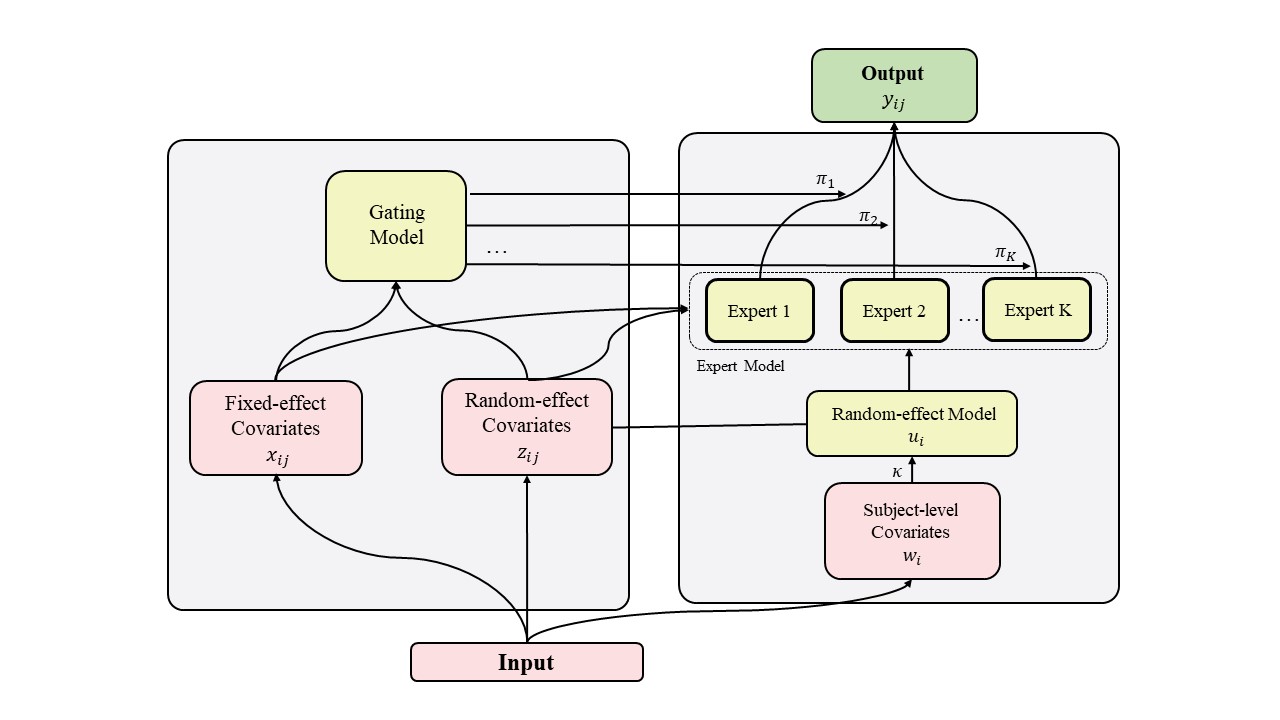}
\caption{Structure of the MEMoE. Pink blocks represent data inputs, yellow blocks denote the created model framework, and the green block signifies the model output.}
\label{MEMoEfigure}
\end{figure}

The proposed MEMoE specification is a unifying formulation that nests several widely used expert models. In particular, when each subject has a single observation ($n_i=1$ for all $i$) or the random effects vanish (i.e., $\Sigma=0$, such that $u_i\equiv 0$), the model coincides with the classical MoE regression, dedicating an expert model to each partitioned observation. When the gating function is degenerate with only one expert, i.e., $\beta_1=\cdots=\beta_K$ and $\sigma_1^2=\cdots=\sigma_K^2$,  MEMoE reduces to a single homoscedastic linear mixed effects model.

\subsection{Laplace-EM Algorithm}
\label{ssec:exact-likelihood}
In this subsection, we construct a Laplace EM algorithm to estimate the parameters in the MEMoE.  The E-step relies on a Laplace approximation to compute the approximate modes of the random effects \citep{Breslow1993, skaug2006}. The M-step updates model parameters via block-wise optimizations. To ensure stable and monotonic convergence, particularly for non-convex components, we integrate a majorize-minimize strategy \citep{HunterLange2004MM}.

Conditional on the random effect $u_i$ and covariates $x_{ij}$ and $z_{ij}$, the conditional density of $y_{ij}$ is
\begin{equation*}\label{eq:cond-mixture-ij}
f\!\left(y_{ij}\mid u_i,x_{ij},z_{ij};\Psi\right)
=\sum_{k=1}^K \pi_k(x_{ij},z_{ij};\alpha)\;
\varphi\!\left(y_{ij};\,x_{ij}^\top\beta_k+z_{ij}^\top u_i,\ \sigma_k^2\right), 
\end{equation*}
where $\varphi(\cdot)$ is a normal density 
$$ \varphi\!\left(y_{ij};\,x_{ij}^\top\beta_k+z_{ij}^\top u_i,\ \sigma_k^2\right) = \frac{1}{\sqrt{2\pi\sigma_k^2}} \exp\left\{ -\frac{\left(y_{ij} - (x_{ij}^\top\beta_k + z_{ij}^\top u_i)\right)^2}{2\sigma_k^2} \right\},
$$
and $\Psi$ represents the set of all model parameters and belongs to a parameter space $\Theta$. Under the assumption  that $\{y_{ij}\}_{j=1}^{n_i}$ are conditionally independent given $u_i$, the likelihood function for the repeated measurements takes the following form:
\begin{align}\label{eq:cond-mix-i}
L(\Psi) \nonumber
&= \prod_{i=1}^N f(y_{i1}, y_{i2}, \ldots, y_{in_i} \mid \{x_{ij}, z_{ij}\}_{j=1}^{n_i}; \Psi) \\ \nonumber
&= \prod_{i=1}^N \int \prod_{j=1}^{n_i} f(y_{ij} \mid u_i, x_{ij}, z_{ij}; \Psi) \,
\varphi(u_i; \kappa w_i, \Sigma) \,
du_i \\ 
&= \prod_{i=1}^N \int \left[ \prod_{j=1}^{n_i} \sum_{k=1}^K \pi_k(x_{ij}, z_{ij}; \alpha) \, \varphi\big(y_{ij}; x_{ij}^\top\beta_k + z_{ij}^\top u_i, \sigma_k^2\big) \right] \varphi\big(u_i; \kappa w_i, \Sigma\big) \, du_i.
\end{align}
This integral has no closed-form in general because the subject-level mixture is embedded within the product over $j$, and all measurements share the same random effect $u_i$. Expanding the product of sums leads to a mixture of $K^{n_i}$ multivariate Gaussians. We therefore approximate the integral in \eqref{eq:cond-mix-i} using the Laplace method \citep{Breslow1993, Tierney1986}. Let $h_i(u_i; \Psi)$ denote the logarithm of the integrand:
$$
h_i(u_i; \Psi) = \log \varphi(u_i; \kappa w_i, \Sigma) + \sum_{j=1}^{n_i} \log \left[ \sum_{k=1}^{K} \pi_k(x_{ij}, z_{ij}; \alpha) \varphi\left(y_{ij}; x_{ij}^\top\beta_k + z_{ij}^\top u_i, \sigma_k^2\right) \right].
$$
The Laplace method approximates the integral by forming a second-order Taylor expansion of $h_i(u_i; \Psi)$ around its mode, $\hat{u}_i = \arg\max_{u_i} h_i(u_i; \Psi)$. Let ${H}_i={H}_i(\hat{u}_i; \Psi) = - \nabla^2_{u_i}h_i(u_i; \Psi) \big|_{u_i = \hat{u}_i}$ be the negative Hessian matrix evaluated in the mode, then
\begin{align*}
H_i &=  \Sigma^{-1} - \sum_{j=1}^{n_i} \sum_{k=1}^K \frac{\pi_k(x_{ij}, z_{ij}; \alpha) \varphi_{ijk}(\hat{u}_i)}{\sum_{\ell=1}^K \pi_\ell(x_{ij}, z_{ij}; \alpha) \varphi_{ij\ell}(\hat{u}_i)} \left(   \frac{e_{ijk}(\hat{u}_i)^2}{\sigma_k^4} -\frac{1}{\sigma_k^2}\right) z_{ij}z_{ij}^\top \\
& + \sum_{j=1}^{n_i} \left( \sum_{k=1}^K \frac{\pi_k(x_{ij}, z_{ij}; \alpha) \varphi_{ijk}(\hat{u}_i)}{\sum_{\ell=1}^K \pi_\ell(x_{ij}, z_{ij}; \alpha) \varphi_{ij\ell}(\hat{u}_i)} \frac{e_{ijk}(\hat{u}_i)}{\sigma_k^2} z_{ij} \right) \left( \sum_{k=1}^K \frac{\pi_k(x_{ij}, z_{ij}; \alpha) \varphi_{ijk}(\hat{u}_i)}{\sum_{\ell=1}^K \pi_\ell(x_{ij}, z_{ij}; \alpha) \varphi_{ij\ell}(\hat{u}_i)} \frac{e_{ijk}(\hat{u}_i)}{\sigma_k^2} z_{ij}^\top \right),
\end{align*}
where $e_{ijk}(\hat u_i) = y_{ij} - x_{ij}^\top\beta_k - z_{ij}^\top\hat u_i$. Therefore,
\begin{eqnarray*}
H_i &\simeq &  \Sigma^{-1} + \sum_{j=1}^{n_i} \sum_{k=1}^K \frac{\pi_k(x_{ij}, z_{ij}; \alpha) \varphi_{ijk}(\hat{u}_i)}{\sum_{\ell=1}^K \pi_\ell(x_{ij}, z_{ij}; \alpha) \varphi_{ij\ell}(\hat{u}_i)} \frac{1}{\sigma_k^2} z_{ij}z_{ij}^\top
\end{eqnarray*}
is positive definite. The Laplace approximation to the likelihood function is then given by:
$$
L( \Psi) \approx \prod_{i=1}^N \left[ (2\pi)^{q/2} |{H}_i|^{-1/2} \exp\left\{ h_i(\hat{u}_i; \Psi) \right\} \right]. 
$$
Taking the logarithms yields the approximated log-likelihood for the parameter set $\Psi$:
\begin{equation*}
\ell_{\rm LA}(\Psi) = \frac{N q}{2}\log(2\pi)+\sum_{i=1}^N \left[ h_i(\hat{u}_i; \Psi) - \frac{1}{2} \log|{H}_i| \right]. 
\end{equation*}

Optimizing the Laplace-approximated log-likelihood $\ell_{{\rm LA}}(\Psi)$ poses significant challenges.
In this context, we implement an EM-inspired minorize-maximize algorithm that iteratively builds and maximizes a lower-bound surrogate of $\ell_{\rm LA}(\Psi)$, thereby ensuring steady progress toward a local optimum. 
Given the current iterate $\Psi^{(t)}$, we construct a surrogate lower bound $Q_{{\rm LA}}\big(\Psi \mid \Psi^{(t)}\big)$ for $\ell_{{\rm LA}}(\Psi)$ and maximize it over $\Psi$, thereby increasing $\ell_{{\rm LA}}(\Psi) - \ell_{{\rm LA}}(\Psi^{(t)})$. Concretely, $Q_{{\rm LA}}$ is obtained by a Laplace expansion of the subject-level integral at the posterior mode $\hat{u}_i$ computed by the negative Hessian $H_i$.

\textbf{E-step (random effects mode and responsibilities).}

Let $\hat u_i^{(t)} = \arg\max_{u_i}\, h_i(u_i;\Psi^{(t)}).$
Denote $\gamma_{ijk}^{(t)} $ as the probability that observation is generated by expert $k$, conditional on the latent random effect:
\begin{equation}\label{eq:resp}
\gamma_{ijk}^{(t)} = 
\frac{\pi_k(x_{ij},z_{ij};\alpha^{(t)})\,
\varphi\!\big(y_{ij};\,x_{ij}^\top\beta_k^{(t)}+z_{ij}^\top \hat u_i^{(t)},\,(\sigma^2_k)^{(t)}\big)}
{\displaystyle \sum_{\ell=1}^{K}\pi_\ell(x_{ij},z_{ij};\alpha^{(t)})\,
\varphi\!\big(y_{ij};\,x_{ij}^\top\beta_\ell^{(t)}+z_{ij}^\top \hat u_i^{(t)},\,(\sigma^2_{\ell})^{(t)}\big)}.
\end{equation}
It is called the responsibility, as it quantifies how `responsibility' expert $k$ is for $y_{ij}$ at the current conditional state.

Form a touching lower limit $Q_{{\rm LA}}(\Psi\mid\Psi^{(t)})$  for
$\ell_{{\rm LA}}(\Psi)$ using Jensen’s inequality to the subject-level term $\log \left[ \sum_{k=1}^{K} \pi_k(x_{ij}, z_{ij}; \alpha) \varphi\left(y_{ij}; x_{ij}^\top\beta_k + z_{ij}^\top u_i, \sigma_k^2\right) \right]$  evaluated at $\hat u_i^{(t)}$, 
and linearize  $-\log|H_i|$ at the current estimate $H_i^{(t)}$, with $-\log|H_i|
\;\ge\;
-\log|H_i^{(t)}|
-\,\mathrm{tr}\Big\{\big(H_i^{(t)}\big)^{-1}\big(H_i-H_i^{(t)}\big)\Big\}$. 
Then the $Q_{{\rm LA}}\big(\Psi \mid \Psi^{(t)}\big)$ is
\begin{eqnarray}\label{eq:Q-function}\nonumber
Q_{{\rm LA}}\big(\Psi \mid \Psi^{(t)}\big)
&=& \sum_{i=1}^N\Bigg\{
\log \varphi\big(\hat u_i^{(t)};\kappa w_i,\Sigma\big)
+ \sum_{j=1}^{n_i}\sum_{k=1}^K \gamma_{ijk}^{(t)}
\Big[
\log \pi_k(x_{ij},z_{ij};\alpha)\\
&+&  \log \varphi\big(y_{ij}; x_{ij}^\top\beta_k+z_{ij}^\top \hat u_i^{(t)}, \sigma_k^2\big)
-\log\gamma_{ijk}^{(t)}
\Big]\Bigg\} {- \frac{1}{2}\sum_{i=1}^N \mathrm{tr}\big((H_i^{(t)})^{-1} H_i\big)}.
\end{eqnarray}


\textbf{M-step: parameter updates to maximize the objective function in \eqref{eq:Q-function}.}

Gating parameter vector $\alpha$ is updated by the following formula:
\begin{equation*}\label{eq:update-alpha}
\alpha^{(t+1)}
=\arg\max_{\alpha}\
\sum_{i=1}^N \sum_{j=1}^{n_i} \sum_{k=1}^{K}
\gamma_{ijk}^{(t)} \, \log \pi_k(x_{ij},z_{ij};\alpha).
\end{equation*}
Expert's coefficients $\{\beta_k\}_{k=1}^K$:
\begin{equation*}\label{eq:update-beta}
\hat\beta_k^{(t+1)}=\Big(\sum_{i,j}\tfrac{\gamma_{ijk}^{(t)}}{\hat\sigma_k^{2,(t)}}x_{ij}x_{ij}^\top\Big)^{-1}
\Big(\sum_{i,j}\tfrac{\gamma_{ijk}^{(t)}}{\hat\sigma_k^{2,(t)}} x_{ij}(y_{ij}-z_{ij}^\top \hat u_i^{(t)})\Big).
\end{equation*}
Noise variances $\{\sigma_k^2\}_{k=1}^K$:
\begin{equation*}\label{eq:update-sigma}
(\sigma^2_k)^{(t+1)} =
\frac{\displaystyle \sum_{i=1}^N \sum_{j=1}^{n_i} \gamma_{ijk}^{(t)}
\Big[\big(y_{ij}-x_{ij}^\top \beta_k^{(t+1)} - z_{ij}^\top \hat u_i^{(t)}\big)^2
+ z_{ij}^\top( H_i^{(t)})^{-1} z_{ij}\Big]}
{\displaystyle \sum_{i=1}^N \sum_{j=1}^{n_i} \gamma_{ijk}^{(t)}}.
\end{equation*}
Random-effect mean parameter matrix $\kappa$:
\begin{equation*}\label{eq:update-kappa}
\kappa^{(t+1)}
= \Big(\sum_{i=1}^N \hat u_i^{(t)}w_i^\top\Big)\Big(\sum_{i=1}^N w_iw_i^\top\Big)^{-1}.
\end{equation*}
Random-effect covariance matrix $\Sigma$:
\begin{equation*}\label{eq:update-sigmaRE}
\Sigma^{(t+1)} =
\frac{1}{N} \sum_{i=1}^{N}
\Big[
\big(\hat u_i^{(t)} - \kappa^{(t+1)} w_i\big)
\big(\hat u_i^{(t)} - \kappa^{(t+1)} w_i\big)^\top
+ (H_i^{(t)})^{-1}
\Big].
\end{equation*}

The algorithm alternates between computing subject-level membership probabilities and updating the model parameters. At iteration $t$, the E-step computes the responsibilities ${\gamma_{ijk}^{(t)}}$, $\hat u_i^{(t)}$, and $H_i^{(t)}$, using the current parameter vector $\Psi^{(t)}$. 
Subsequently, the M-step updates the elements of $\Psi$ via the specified iterative formulas.

\section{Prediction Sets}
\label{sec:conc}
Quantifying predictive uncertainty is essential for reliable inference in longitudinal studies that exhibit both population heterogeneity and within-subject dependence. We develop prediction sets for the MEMoE model — a covariate-gated mixture framework that captures the latent subgroup structures while accounting for subject-specific random effects. Given a fitted model based on $N$ subjects, a target confidence level $1-q\in(0,1)$, and a tuple of new covariates $(x_{\text{new}}, z_{\text{new}}, w_{\text{new}})$, our aim is to construct a conditionally valid prediction set $\Omega(x_{\text{new}}, z_{\text{new}}, w_{\text{new}})$ for a future response $y_{\text{new}}$, such that:
\begin{equation}\label{eq:p}
\Pr\left( y_{\text{new}} \in \Omega(x_{\text{new}}, z_{\text{new}}, w_{\text{new}}) \mid x_{\text{new}}, z_{\text{new}}, w_{\text{new}} \right) \geq 1 - q,
\end{equation}
a guaranty targeted at fixed covariates rather than averaged over their distribution (cf. conditional versus marginal coverage in predictive inference). In words, a prediction set is a range — possibly a union of disjoint intervals — within which a future response is expected to fall with a probability of at least $1-q$; shorter prediction sets are preferred for informativeness, whereas the trivial set $(-\infty,\infty)$ achieves $100\%$ coverage yet conveys no meaningful information. This conditional coverage target mirrors recent developments in predictive inference for MoE models and highlights the advantage of parametric modeling of the conditional law when per-covariate guarantees are required.  

Given the new input, the density of the response $y_{\text{new}}$ can be obtained via
\begin{align*}
f(y_{\text{new}} \mid x_{\text{new}}, z_{\text{new}}, w_{\text{new}}; \Psi)&= \int f(y_{\text{new}} \mid u_{\text{new}}, x_{\text{new}}, z_{\text{new}}; \Psi) \,
\varphi(u_{\text{new}}; \kappa w_{\text{new}}, \Sigma) \, du\\ 
&= \sum_{k=1}^K \pi_k(x_{\text{new}}, z_{\text{new}}; \alpha)  \mathcal{N}\left( y_{\text{new}}; x_{\text{new}}^\top \beta_k + z_{\text{new}}^\top \kappa w_{\text{new}}, \, \sigma_k^2 + z_{\text{new}}^\top \Sigma z_{\text{new}} \right).
\end{align*}
If $v_{\text{new}}=k$, 
we obtain the predictor $\hat{\Gamma}_k= x_{\text{new}}^\top \hat{\beta}_k + z_{\text{new}}^\top \hat{\kappa} w_{\text{new}} $  based on Laplace-EM estimates
and   approximate the variance of   $\hat{\Gamma}_k$ by a working variance matrix $\hat{V}_k = \hat{V}_k^{(\beta)} + \hat{V}_k^{(\kappa)}$, where  $ \hat{V}_k^{(\beta)}  = x_{\text{new}}^\top \left( \sum_{i,j} \frac{\gamma_{ijk}}{\hat{\sigma}_k^2} x_{ij}x_{ij}^\top \right)^{-1} x_{\text{new}}$ and 
$ \hat{V}_k^{(\kappa)} = (w_{\text{new}} \otimes z_{\text{new}})^\top \left( \hat{\Sigma} \otimes \left( \sum_{i=1}^N w_i w_i^\top \right)^{-1} \right)(w_{\text{new}} \otimes z_{\text{new}}).$
Then the variance estimate of the predicted response is
\begin{equation}
\hat{b}_k^2 = z_{\text{new}}^\top \hat{\Sigma} z_{\text{new}} + \hat{\sigma}_k^2 + \hat{V}_k.
\end{equation}

Recall that our goal is to construct a $100(1-q)\%$  prediction set $\Omega_q(x_{\mathrm{new}},z_{\mathrm{new}},w_{\mathrm{new}})$  satisfying \eqref{eq:p}. In MEMoE, the predictive distribution for $y_{\mathrm{new}}$ given covariates $(x_{\mathrm{new}}, z_{\mathrm{new}}, w_{\mathrm{new}})$  is a finite mixture of Gaussian, with each component corresponding to an expert indexed by $k = 1, \dots, K$.
Motivated by the mixture structure, we  establish a prediction set given as follows:
$$
\Omega_q(x_{\mathrm{new}},z_{\mathrm{new}},w_{\mathrm{new}})
\;=\;
\bigcup_{k=1}^{K} [\ell_k,u_k],
$$
where each $[\ell_k,u_k]$ is a estimator $\hat{\Gamma}_k$.
Treating the length of $\Omega_q(x_{\mathrm{new}},z_{\mathrm{new}},w_{\mathrm{new}})$ as 
our budget, it is intuitive that we should allocate more of this budget to the mixture components to which the new predictor is more likely to be assigned--that is, to those groups $k$ with larger value of ${\pi}(x_{\mathrm{new}}, z_{\mathrm{new}})$. To this end, we construct the marginal predictive density $\hat{f}(y)$ using a weighted mixture of Gaussian mixture densities:
$$
\hat{f}(y) = \sum_{k=1}^K \hat{\pi}_k  \frac{1}{\hat{b}_k}  \varphi\left( \frac{y - \hat{\Gamma}_k}{\hat{b}_k} \right),
$$
where $\varphi(\cdot)$ denotes the density function of the standard normal distribution,
and $\hat{\pi}_k = \pi_k(x_{\mathrm{new}}, z_{\mathrm{new}}; \hat{\alpha})$.

Our objective is to construct the shortest set $\Omega_q\subset\mathbb R$ such that $\int_{\Omega_q}\hat f(y)dy\ge 1-q$.
To ensure that the numerical search is limited to a finite and relevant region, we first define a conservative bounding interval $\mathcal Q$ guaranteed to contain the target set. Let $c_q=\Phi^{-1}(1-q/2)$ and define
$$
\mathcal Q=\Big[\min_k\{\hat\Gamma_k-c_q\hat{b}_k\},\ \max_k\{\hat\Gamma_k+c_q\hat{b}_k\}\Big],
$$
which  has a  probability of at least $1-q$. 
Next, we partition $\mathcal Q$  into $M$ equal width subintervals  $Q_r = [a_r, e_r)$  with length $\delta=$ Len$(\mathcal Q)/M$, where $a_r = \min_k\{\hat\Gamma_k-c_q\hat{b}_k\} + (r-1) \delta$ and $ e_r = \min_k\{\hat\Gamma_k-c_q\hat{b}_k\} + r \delta $, for $r = 1, \cdots, M$. Let $y_r=(a_r+e_r)/2 $ be the midpoint. For each cell, we compute the exact probability  under the estimated predictive density $\hat{f}(\cdot)$
$$ 
m_r=\int_{Q_r}\hat f(y)\,dy
=\sum_{k=1}^K \hat\pi_k\!\left[
\Phi\!\Big(\tfrac{e_r-\hat\Gamma_k}{\hat{b}_k}\Big)-\Phi\!\Big(\tfrac{a_r-\hat\Gamma_k}{\hat{b}_k}\Big)\right],
$$
and record the midpoint value $h_r=\hat f(y_r)$.
By sorting $\{h_r\}$ in decreasing order: $h_{(1)}\ge\cdots\ge h_{(M)}$ and applying the same permutation to $\{m_r\}$, we choose the smallest $n^\star$ such that:
$$
n^\star = \min\left\{ n :\sum_{i=1}^{n} m_{(i)}\ \ge\ 1-q\right\}.
$$
Define the selected set
$
\mathcal S =\{ (1),(2),\dots,(n^\star)\}.
$
Form the discrete region on this grid by taking the union of the selected cells
$$
\widehat{\Omega}_q
\;:=\;
\bigcup_{r\in\mathcal S} Q_r
\;=\;
\bigcup_{i=1}^{n^\star} Q_{(i)}.
$$
By construction,
$\int_{\widehat{\Omega}_q} \hat f(y)\,dy=\sum_{i=1}^{n^\star} m_{(i)}\;\ge\;1-q$, and thus
 $\widehat{\Omega}_q$ is a finite grid approximation to the shortest set $\Omega_q$ with $\hat f$–mass at least $1-q$. Algorithm \ref{alg:hdr-grid-concise} presents the detailed steps for constructing a $(1\!-\!q)$ prediction set for MEMoE.

\begin{algorithm}[H]
\caption{Constructing a $(1\!-\!q)$ Prediction Set for MEMoE }
\label{alg:hdr-grid-concise}
\begin{algorithmic}[1]
\STATE \textbf{Inputs:} Confidence level $1-q$;  discretization scale $\delta$, fitted value $\hat\Psi$; covariates $(x_{\mathrm{new}},z_{\mathrm{new}},w_{\mathrm{new}})$.
\STATE \textbf{Form a weighted mixture of Gaussian densities:} 
$$
\hat{f}(y) = \sum_{k=1}^K \hat{\pi}_k(x_{\mathrm{new}}, z_{\mathrm{new}}; \hat{\alpha})  \frac{1}{\hat{b}_k}  \varphi\left( \frac{y - \hat{\Gamma}_k}{\hat{b}_k} \right).
$$
\STATE \textbf{Coverage-safe truncation:} Let $c_q=\Phi^{-1}(1-q/2)$ and set  $\mathcal Q$ that $\int_{\mathcal Q}\hat f(y)dy\ge 1-q$, i.e.
\[
\mathcal Q=[\min_k(\hat\Gamma_k-c_q\hat{b}_k),\max_k(\hat\Gamma_k+c_q\hat{b}_k)].
\]

\STATE \textbf{Grid by density threshold:} Divide $\mathcal Q$ into mesh $Q_r$ of size $\delta$ with midpoints $y_r$ and 
$$
m_r=\int_{Q_r}\hat f(y)\,dy=\sum_{k=1}^K \hat\pi_k\!\left[\Phi\!\Big(\frac{e_r-\hat\Gamma_k}{\hat{b}_k}\Big)-\Phi\!\Big(\frac{a_r-\hat\Gamma_k}{\hat{b}_k}\Big)\right].
$$
Sort $h_r=\hat f(y_r)$ decreasing and find the smallest $n^\star$ such that $$\sum_{i=1}^{n^\star} m_{(i)}\ge 1-q.$$ 
Set the prediction set as:
$$
\hat\Omega_q(x_{\mathrm{new}},z_{\mathrm{new}},w_{\mathrm{new}})= \bigcup_{i=1}^{n^\star} Q_{(i)}.
$$
\STATE \textbf{Output:}
${\Omega}_q(x_{\mathrm{new}},z_{\mathrm{new}},w_{\mathrm{new}})$
\end{algorithmic}
\end{algorithm}

\section{Asymptotical Properties}
In this section, we will prove the consistency of the Laplace-EM estimator $\hat\Psi$ and asymptotic normality of the regression parameter estimator $\hat\beta_k$. Before giving the consistency of the estimator $\hat\Psi$, the following regularity conditions are required.

\begin{itemize}

\item[(A1)]  For covariates, $\mathbb{E}(\|x_{ij}\|^6) < \infty$,  $\mathbb{E}(\|z_{ij}\|^6) < \infty$, and $\mathbb{E}(\|w_i\|^6) < \infty$, where $\|\cdot\|$ denotes $L_2$ norm.

\item[(A2)] The parameter space $\Theta$ 
is a compact set.

\item[(A3)] There exist constants $0 < c_1 \le c_2 < \infty$ such that the eigenvalues of $\Sigma$ satisfy $c_1 \le \lambda_{\min}(\Sigma) \le \lambda_{\max}(\Sigma) \le c_2$. 

\item[(A4)] $ \mathbb{E}(\ell_{{\rm LA}}(\Psi))$ is uniquely maximized at the true parameter  value  $\Psi$.

\item[(A5)] Let $[K]$ denotes the set $\{1,\cdots, k\}$.  Assume that   $\{\alpha_k + c: k \in [K]\}$ corresponds to the same $\pi_k(x_{ij},z_{ij})$ as $\{\alpha_k:k \in [K]\}$ for a constant $c$.

\end{itemize}

Conditions (A1)--(A4) are standard regularity assumptions in the literature on mixture models and linear mixed effect models \citep {Kiefer1959, hennig2000}. Condition (A5) ensures that the  parameters in the gating function are identifiable. Based on these regularity conditions, we establish the consistency of the parameter estimators.

\begin{theorem}
\label{lem:la-consistency}
Let $\hat\Psi$ denote the estimator obtained by maximizing $\ell_{{\rm LA}}(\Psi)$ over the parameter space.
Under the regularity conditions (A1)–(A5), $\hat\Psi$ is a consistent estimator of  $\Psi$ as  $N\to\infty$; that is:
$\hat\Psi \xrightarrow{p} \Psi.$
\end{theorem}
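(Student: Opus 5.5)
The plan is to treat $\hat\Psi$ as an M-estimator and use the classical Wald/Newey--McFadden argument. A maximizer of a random criterion is consistent when three things hold: the normalized criterion converges uniformly in probability to a deterministic limit, that limit is continuous, and it has a unique well-separated maximizer at the true value. Throughout I write $\Psi_0$ for the true parameter, to keep it distinct from the generic argument. The normalized criterion is
\[
M_N(\Psi)=\frac1N\,\ell_{\rm LA}(\Psi)=\frac1N\sum_{i=1}^N m_i(\Psi),
\]
where
\[
m_i(\Psi)=\tfrac q2\log(2\pi)+h_i(\hat u_i(\Psi);\Psi)-\tfrac12\log|H_i(\Psi)|.
\]
The summands $m_i$ are i.i.d. across subjects, since subjects are independent and $n_i$ is treated as part of the subject's random design. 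The limit is $M(\Psi)=\mathbb{E}\,m_1(\Psi)$. By (A4), $M$ is uniquely maximized at $\Psi_0$, with the gating parameters interpreted modulo the location shift in (A5).

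\textbf{Step 1: uniform law of large numbers.} I would apply a ULLN of Jennrich/Tauchen type on the compact set $\Theta$ from (A2). This needs two ingredients.

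\emph{Continuity.} First I would show that $\Psi\mapsto m_i(\Psi)$ is continuous for almost every data point. The mode $\hat u_i(\Psi)$ is the maximizer of $h_i(\cdot;\Psi)$. The term $\log\varphi(u;\kappa w_i,\Sigma)$ is strongly concave with curvature at least $c_2^{-1}$ by (A3), and the log-mixture terms are smooth in $(u,\Psi)$. So if the mode is unique, it is continuous in $\Psi$ by the maximum theorem. If it is not unique, I would instead take the value function $\sup_u h_i(u;\Psi)$, which is continuous regardless. I would also use that the positive-definite working form of $H_i$ satisfies $\Sigma^{-1}\preceq H_i\preceq \Sigma^{-1}+\sigma_{\min}^{-2}\sum_j z_{ij}z_{ij}^\top$. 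Compactness keeps $\sigma_k^2$ bounded away from zero, so $\log|H_i|$ is continuous.

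\emph{Integrable envelope.} Second, I would construct an envelope $\sup_{\Theta}|m_i(\Psi)|\le D_i$ with $\mathbb{E}D_i<\infty$.
\begin{itemize}
\item An upper bound on $h_i$ is immediate: each $\varphi\le(2\pi\sigma_{\min}^2)^{-1/2}$ and $\log\varphi(u;\cdot)\le -\tfrac12\log|2\pi\Sigma|$.
\item For a lower bound, I would evaluate $h_i$ at $u=\kappa w_i$, which gives $h_i(\hat u_i)\ge h_i(\kappa w_i)$. Then I would bound the log of each mixture from below by the log of a single component times $\min_k\pi_k$. The gating probabilities are bounded below by a polynomial-in-covariates exponential factor, and on compact $\Theta$ this contributes $O(\|x_{ij}\|+\|z_{ij}\|)$ after taking logs. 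The result is that $|h_i|$ is bounded by a quadratic polynomial in $(y_{ij},x_{ij},z_{ij},w_i)$ summed over $j$.
\item Because $\Sigma^{-1}\preceq H_i\preceq\Sigma^{-1}+\sigma_{\min}^{-2}\sum_j z_{ij}z_{ij}^\top$, $\log|H_i|$ is bounded by $C+q\log(1+\sum_j\|z_{ij}\|^2)$.
\item Finally, the moment conditions (A1) together with Gaussian tails of $y_{ij}$ given covariates make these quantities integrable. The sixth moments are more than enough: $y^2$-type terms involve $(x^\top\beta)^2$, $(z^\top u)^2$ and $w$ moments.
\end{itemize}
With these two ingredients, the ULLN gives
\[
\sup_\Theta|M_N(\Psi)-M(\Psi)|\xrightarrow{p}0,
\]
and $M$ is continuous by dominated convergence.

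\textbf{Step 2: conclude consistency.} Fix an open neighbourhood $\mathcal N$ of the identified equivalence class of $\Psi_0$; (A5) removes the gating shift and the usual label-switching ambiguity. By compactness, continuity and uniqueness, $\varepsilon=M(\Psi_0)-\sup_{\Theta\setminus\mathcal N}M>0$. If $\hat\Psi\notin\mathcal N$, then
\[
M(\Psi_0)-\varepsilon\ge M(\hat\Psi)\ge M_N(\hat\Psi)-o_p(1)\ge M_N(\Psi_0)-o_p(1)=M(\Psi_0)-o_p(1).
\]
This has probability tending to zero, so $\hat\Psi\xrightarrow{p}\Psi_0$.

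\textbf{Main obstacle.} I expect the main obstacle to be the envelope and continuity argument for the Laplace term. The mode $\hat u_i(\Psi)$ is only implicitly defined, and $h_i$ need not be concave in $u$ because of the mixture. So I would avoid differentiability entirely. I would work with the value function $\sup_u h_i(u;\Psi)$ and with a two-sided bound on $H_i$, using the stated positive-definite approximation. The goal there is only to certify that $\log|H_i|$ is continuous and has at most logarithmic growth in the covariates.
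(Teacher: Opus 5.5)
Your proposal is correct. It follows the paper's skeleton: a uniform law of large numbers on compact $\Theta$ from continuity plus an integrable envelope (the paper's Proposition~1), then the well-separated-maximum argument from (A4) (Proposition~2). You prove the ingredients of Proposition~1 differently, though.

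\emph{Envelope.} The paper bounds the mode first. It rewrites the first-order condition as a linear system whose matrix dominates $\Sigma^{-1}$, obtains $\|\hat u_i(\Psi)\|^2\le C_1(W_i^{(1)})^2$, and substitutes this everywhere; that is what produces the $(W_i^{(1)})^3$ envelope and the sixth moments in (A1). Your comparison $h_i(\hat u_i)\ge h_i(\kappa w_i)$ never touches the mode and needs only about fourth moments. The worst term is $\|z_{ij}\|^2\|w_i\|^2$, so your envelope is quartic rather than ``quadratic'', but still integrable. Combined with $h_i(u;\Psi)\le C(1+n_i)-\|u-\kappa w_i\|^2/(2c_2)$ uniformly on $\Theta$, the same comparison confines all modes to a fixed ball. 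You need that to apply the maximum theorem on $\mathbb{R}^q$, so state it. Your covariate-dependent gating bound $\log\pi_k\ge-\log K-O(\|x_{ij}\|+\|z_{ij}\|)$ is also more careful than the paper's, which treats $|\log\pi_{\min}|$ as a constant although it depends on unbounded covariates.

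\emph{Continuity.} The paper's implicit-function argument needs $\nabla_u^2h_i$ to be nonsingular at the mode, which (A1)--(A5) do not give. The positive-definite matrix it invokes is the working form $\Sigma^{-1}+\sum_{j,k}\gamma_{ijk}\sigma_k^{-2}z_{ij}z_{ij}^\top$, which drops a responsibility-variance term. Your maximum-theorem argument needs only uniqueness of the mode. In return, the paper's route delivers differentiability of $\Psi\mapsto\hat u_i(\Psi)$ and an explicit mode bound, both reused in Theorem~2.

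One thing to tighten: the sandwich $\Sigma^{-1}\preceq H_i\preceq\Sigma^{-1}+\sigma_{\min}^{-2}\sum_j z_{ij}z_{ij}^\top$ gives finiteness and logarithmic growth of $\log|H_i|$, not continuity. Continuity comes from continuity of $\hat u_i(\Psi)$, so your value-function fallback does not cover the Laplace correction term when the mode is not unique. Since $\ell_{\rm LA}$ is not even well defined in that case, just assume uniqueness of the mode, as the paper does.
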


Theorem \ref{lem:la-consistency} establishes the consistency of the global parameter estimator $\hat{\Psi}$ as $N \to \infty$. To establish the asymptotic normality of the regression estimator, we require the following additional regularity conditions.

\begin{itemize}
\item[(B1)] There exists a constant $c_z > 0$ such that the second moment of the random-effect covariates $z_{ij}$ satisfies $ c_z\leq \mathbb{E}[z_{ij}z_{ij}^{\top}] < \infty$.

\item[(B2)] The information matrix is positive definite:
$$
0\prec\mathbb{E}\left[ - \frac{\partial^2 \ell_{\rm LA }}{\partial \beta_k \partial \beta_k^\top} \right] \prec \infty.
$$

\item[(B3)] The function $h_i(u_i; \Psi)$ is strictly concave in a neighborhood of the true random effect $u_i$. Furthermore, the negative Hessian matrix ${H}_i(u_i,\Psi) = - \nabla^2_{u_i}h_i(u_i; \Psi) $  is positive definite. The smallest eigenvalue of  $ - \nabla^2_{u_i} h_i(\hat{u}_i;\Psi) $ satisfies a linear growth condition:
$$
\lambda_{\min}\left( - \nabla^2_{u_i} h_i(\hat{u}_i;\Psi) \right) \ge \lambda_0 + c_1 n_i,
$$
where $\lambda_0 > 0$ depends on the variance of the random effects and $c_1 > 0$ depends on the error variance lower bound.

\item[(B4)] There exists a constant $L > 0$ such that for subjects $i$, 
all indices $j,k,l \in \{1,\ldots,q\}$,
$$
\left|\frac{\partial^3 h_i(u_i; \Psi)}{\partial u_j \partial u_k \partial u_l}\right| 
  \le L \cdot n_i.
$$

\end{itemize}

Condition (B1) ensures that the asymptotic variance of the estimators is well-defined and finite.
Condition (B2) guarantees the uniqueness of $\hat{u}_i$, as stated in \cite{pinheiro2000}. Condition (B3) serves as a regularity condition for deriving the convergence rate of the Taylor expansion, while also ensuring the existence of a unique mode $\hat{u}_i$ in a neighborhood of $u_i$. Condition (B4) assumes that the norm of the third derivative of  $h_i(u_i; \Psi)$ is bounded by a term linear in $n_i$.

\begin{theorem}
\label{thm:an-stud}
Suppose that the regularity conditions (A1)--(A5) and (B1)--(B3) hold. 
For any fixed expert component $k \in [K]$, 
the estimator converges in distribution to a normal distribution as $N \to \infty$:
$$
\sqrt{N}\,(\hat\beta_k-\beta_{k})
\ \xrightarrow{d}\
\mathcal N\big(0,\ V_{\beta_k}\big),
\qquad
V_{\beta_k}=J_{\beta_k}^{-1}K_{\beta_k}J_{\beta_k}^{-1},
$$
where
$$
J_{\beta_k}:=\mathbb{E}\left[ - \frac{\partial^2 \ell_{\rm LA }}{\partial \beta_k \partial \beta_k^\top} \right],\qquad
K_{\beta_k}:=\mathbb{E}\left[ \left(\frac{\partial  \ell_{\rm LA }}{\partial \beta_k}\right) \left(\frac{\partial  \ell_{\rm LA }}{\partial \beta_k}\right)^\top \right].
$$
\end{theorem}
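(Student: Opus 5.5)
The approach is the standard M-estimation (sandwich) argument applied to the Laplace-approximated log-likelihood $\ell_{\rm LA}(\Psi)=\sum_{i=1}^N \ell_i(\Psi)$. Here $\ell_i(\Psi)=\tfrac{q}{2}\log(2\pi)+h_i(\hat u_i(\Psi);\Psi)-\tfrac12\log|H_i(\hat u_i(\Psi);\Psi)|$ is a function of subject $i$'s data only. Subjects are independent, so $\ell_{\rm LA}$ is a sum of i.i.d. contributions.

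\textbf{Step 1: first-order condition.} By Theorem \ref{lem:la-consistency}, $\hat\Psi\xrightarrow{p}\Psi$, so with probability tending to one $\hat\Psi$ is interior. Hence $\partial\ell_{\rm LA}(\hat\Psi)/\partial\beta_k=0$.

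\textbf{Step 2: score for $\beta_k$.} By the envelope theorem, since $\nabla_{u}h_i(\hat u_i;\Psi)=0$, the dependence of $\hat u_i$ on $\beta_k$ drops out of the $h_i$ term. This gives
\[
\frac{\partial \ell_i}{\partial\beta_k}=\sum_{j=1}^{n_i}\frac{\gamma_{ijk}(\hat u_i)}{\sigma_k^2}\,x_{ij}\big(y_{ij}-x_{ij}^\top\beta_k-z_{ij}^\top\hat u_i\big)-\frac12\frac{\partial}{\partial\beta_k}\log|H_i| .
\]
The log-determinant term does depend on $\hat u_i$ through implicit differentiation, $\partial\hat u_i/\partial\beta_k=H_i^{-1}\partial_{\beta_k}\nabla_u h_i$. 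Condition (B3) keeps $H_i^{-1}$ bounded, with $\|H_i^{-1}\|\le(\lambda_0+c_1 n_i)^{-1}$, so this term is well defined and smooth in $\Psi$.

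\textbf{Step 3: Taylor expansion around the true $\Psi$.} Expanding the full score to first order and solving the block system gives
\[
\sqrt N(\hat\beta_k-\beta_k)=\Big[-\tfrac1N\partial^2_{\beta_k\beta_k}\ell_{\rm LA}(\tilde\Psi)\Big]^{-1}\tfrac1{\sqrt N}\partial_{\beta_k}\ell_{\rm LA}(\Psi)+o_p(1).
\]
Strictly, the cross blocks with the other parameters should enter through a Schur complement. I plan to follow the paper's definition of $J_{\beta_k}$ as the $\beta_k$ block. The cross terms are then handled either by treating the remaining parameters as fixed at their consistent limits, or by noting that in the M-step the $\beta_k$ update is orthogonal given $\gamma$.

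\textbf{Step 4: limits of the Hessian and the score.} For the Hessian, apply a uniform law of large numbers on a shrinking neighbourhood. The third derivatives of $h_i$ are dominated by an integrable envelope: polynomial in $\|x_{ij}\|,\|z_{ij}\|,\|w_i\|$, with moments from (A1), bounds from compactness (A2) and the eigenvalue bounds (A3). This gives $-\tfrac1N\partial^2_{\beta_k\beta_k}\ell_{\rm LA}(\tilde\Psi)\xrightarrow{p}J_{\beta_k}$, which is invertible by (B2). For the score, apply the multivariate Lindeberg–Lévy CLT to $N^{-1/2}\sum_i\partial_{\beta_k}\ell_i(\Psi)$. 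Finite second moments follow from (A1) together with (B1), since the residual times $x_{ij}$ has a finite sixth-order envelope. The limit is then $\mathcal N(0,K_{\beta_k})$, provided the score has mean zero. Slutsky's theorem then yields the sandwich $V_{\beta_k}=J_{\beta_k}^{-1}K_{\beta_k}J_{\beta_k}^{-1}$.

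\textbf{Main obstacle: centring the score.} The Laplace score is not exactly the true likelihood score, so $\mathbb E[\partial_{\beta_k}\ell_i(\Psi)]$ need not vanish at the true $\Psi$. I would first interpret $\Psi$ as the maximiser of $\mathbb E\,\ell_{\rm LA}$, as in (A4). Under that reading the population score is zero by definition, which is exactly why the sandwich (misspecification-robust) form appears.

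If one instead insists on the data-generating $\Psi$, the error has to be controlled. (B3) and (B4) bound the Laplace remainder by $O(L n_i/(\lambda_0+c_1n_i)^{2})=O(n_i^{-1})$, and its derivative in $\beta_k$ is of the same order. The resulting bias in $\sqrt N(\hat\beta_k-\beta_k)$ is of order $\sqrt N\,\overline{n^{-1}}$. This vanishes only if $n_i$ grows fast enough relative to $N$; that condition is where I expect the real difficulty, and I would state it explicitly.
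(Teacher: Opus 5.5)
Your overall architecture (first-order condition from consistency, Taylor expansion of the score, uniform LLN for the Hessian, CLT for the score, Slutsky) is the same as the paper's. The main difference is how you centre the score. You read (A4) as making the true $\Psi$ the maximiser of $\mathbb E\,\ell_{\rm LA}$. Then $\mathbb E[\partial_{\beta_k}\ell_{{\rm LA},i}(\Psi)]=0$, given interiority and differentiation under the expectation, which your envelopes would justify.

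The paper instead uses its Lemma 1, $\|\hat u_i-u_i\|=O_p(n_i^{-1/2})$ as $n_i\to\infty$, to ``replace $\hat u_i$ by $u_i$'' in the score. Your device is the more coherent one. It stays inside the theorem's $N\to\infty$ regime and explains why a sandwich appears. The paper's device needs $n_i\to\infty$ and never controls the $\sqrt N$-scaled replacement error; your $\sqrt N\,\overline{n^{-1}}$ remark identifies what that route would have to assume.

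You also differ in two smaller ways. You keep $-\tfrac12\partial_{\beta_k}\log|H_i|$ in the score, which the paper silently drops. And you control the Hessian with a derivative envelope, where the paper uses explicit Lipschitz bounds $\Delta_1,\dots,\Delta_4$.

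The step that fails is Step 3, and neither of your proposed repairs closes it. Write $\theta$ for all parameters other than $\beta_k$, and $J_{\beta_k\theta}$ for the expected cross block of the negative Hessian. Expanding the joint first-order condition gives
\[
J_{\beta_k}\sqrt N(\hat\beta_k-\beta_k)=N^{-1/2}\,\partial_{\beta_k}\ell_{\rm LA}(\Psi)-J_{\beta_k\theta}\sqrt N(\hat\theta-\theta)+o_p(1).
\]
Since $\sqrt N(\hat\theta-\theta)=O_p(1)$, consistency of $\hat\theta$ does not remove the second term. So ``fixing the other parameters at their consistent limits'' does not work.

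The M-step orthogonality is a property of the surrogate $Q_{\rm LA}$, in which $\gamma^{(t)}_{ijk}$ and $\hat u_i^{(t)}$ are frozen. It is not a property of the Hessian of $\ell_{\rm LA}$. In $\ell_{\rm LA}$, $\gamma_{ijk}$ depends on $\alpha$ and on the $\beta_l$, and $\hat u_i(\Psi)$ couples $\beta_k$ to $\kappa$ and $\Sigma$. Hence $J_{\beta_k\theta}\neq 0$ in general; already in a plain two-component mixture of regressions, $\mathbb E[\partial^2\log f/\partial\beta_1\partial\alpha^\top]\neq 0$.

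What your argument actually delivers is $V=[J^{-1}KJ^{-1}]_{\beta_k\beta_k}$, built from the full matrices. The stated block formula $J_{\beta_k}^{-1}K_{\beta_k}J_{\beta_k}^{-1}$ needs an extra hypothesis: block orthogonality $J_{\beta_k\theta}=0$, or the remaining parameters treated as known. The paper's proof works only with the $\beta_k$ block and does not address this either, so the defect lies in the statement as much as in your proposal. You should state the missing assumption explicitly rather than claim it is handled.
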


Theorem~\ref{thm:an-stud} explicitly characterizes the asymptotic behavior of the estimator of regression coefficients $\beta_k$. This result plays a pivotal role in quantifying predictive uncertainty and ensuring the validity of the constructed prediction sets. 
The proofs of Theorems \ref{lem:la-consistency} and \ref{thm:an-stud} are provided in the supplementary materials.

\section{Numerical studies}
\label{sec-sim}
In this section, we conduct three simulation experiments to investigate the finite-sample performance of the proposed MEMoE across a sequence of mixture-of-experts designs. 
  We compare MEMoE   against three methods:
(1) The linear mixed model (LMM), which incorporates random effects to account for within correlations but assumes a uniform model structure across all subjects (i.e., it lacks the mixture-of-experts framework to capture group-level heterogeneity);
(2) The classical mixture-of-experts (MoE) model, which captures group heterogeneity through latent components but treats repeated measurements as statistically independent (i.e., it ignores within correlations);
(3) The random effects mixture-of-experts (REMoE) model, which incorporates both the mixture-of-experts structure and random effects, with the random effects distribution specified as zero-mean.
In the three simulation experiments,
the generated data are randomly split into two datasets: $80\%$ allocated to the training set and the remaining $20\%$ to the test set.
The entire simulation procedure is replicated $500$ times for each case. 
Additionally,  we apply the proposed method to a real-data case study.

\subsection{Simulation studies}

\textbf{Example 1:}
We generate the data from a two-component mixture-of-experts model without a random-effect term.   The latent expert label
$$
v_{i}\sim\mathrm{Muitinomial}\bigl(\pi_{1}(x_{i}),\pi_{2}(x_{i})\bigr),
\quad
\pi_{k}(x_{i})
=\frac{\exp(x_{i}^\top\alpha_{k})}
     {\sum_{\ell=1}^{2}\exp(x_{i}^\top\alpha_{\ell})},~~ k=1,2,
$$
where the gating‐network parameters
$\alpha_{1} = (6,-5,3,2,1)^\top$ and $\alpha_{2}=(-4,2,-7,5,-3)^\top$.
Conditional on $v_{i}=k$, the response variable is generated via the following model:
$$
y_{i}=x_{i}^\top\beta_{k}+\epsilon_{i}, 
$$
with regression coefficients vectors
$\beta_{1}=(3,-3,1,-1,0)^\top$ and $\beta_{2}=(-5,5,0,2,-2)^\top$. The  covariate vector
$x_{i}$ is independently generated from $\mathcal{N}(\mathbf{0},I_{5})$, and the random error $\epsilon_{i}$ is generated from  $\mathcal{N}(0,1)$.
 A total of $1500$ independent observations are sampled.
The average biases and mean squared errors (MSEs) of regression parameters and gating-network parameters for MoE, ReMoE, and MEMoE 
are presented in Table \ref{tab:simulation1_beta_alpha}. The prediction errors are exhibited in the left panel of
Figure \ref{fig:simulation-exam12}.

\begin{table}[H]
\centering
\caption{Average biases and MSEs  of parameter estimators in Example 1 and MEMoE attains the performance of oracle estimators.}
\label{tab:simulation1_beta_alpha}
\begin{tabular}{@{}lccccccc@{}}
\toprule
&\multirow{2}{*}{\centering \textbf{Parameter}} & \multicolumn{6}{c}{\textbf{Method}}\\
\cmidrule(lr){3-8}
         &             &   \multicolumn{2}{c}{ MoE }   & \multicolumn{2}{c}{   ReMoE}          &  \multicolumn{2}{c}{  MEMoE  }         \\                                  
\midrule    
         &             &  Bias  &  MSE      & Bias   &  MSE    & Bias   & MSE        \\
\cmidrule(lr){3-8}
Expert 1 & $\beta_1$   & 0.0025 & (0.0015)  & 0.0026 & (0.0015)& 0.0024 & (0.0015) \\ \addlinespace[0.1em]       
         & $\alpha_1$  & 0.0661 & (0.0085)  & 0.0793 & (0.0106)& 0.0726 & (0.0094) \\  \addlinespace[0.1em]      
Expert 2 & $\beta_2$   & 0.0049 & (0.0013)  & 0.0049 & (0.0012)& 0.0048 & (0.0013) \\  \addlinespace[0.1em]      
         & $\alpha_2$  & 0.0628 & (0.0076)  & 0.0575 & (0.0051)& 0.0542 & (0.0047) \\                            
\bottomrule
\end{tabular}
\end{table}
As expected, the standard MoE estimator exhibits negligible  bias and small variance for both expert-specific regression coefficients and gating parameters (Table~\ref{tab:simulation1_beta_alpha}). The estimates of the expert coefficients $\beta_k$ and gating parameters $\alpha_k$ obtained from the proposed MEMoE and ReMoE are unbiased and have variances comparable to those of the correctly specified MoE.
The simulation results demonstrate that, when the data-generating process is a standard MoE without random effects, introducing a mixed-effects layer into the MoE framework does not result in significant efficiency loss in either parameter estimation or predictive performance.

\textbf{Example 2:}
We generate the data from a classical linear mixed‐effects model:
\begin{equation}\label{eq:example2}
y_{ij}=x_{ij}^\top\beta + u_i + \epsilon_{ij},~~ i=1,\cdots,100,~~ j=1,\cdots, 15,
\end{equation}
where $\beta = (3,\ -3,\ 1,\ -1,\ 0)^\top$, and the covariate vector $x_{ij}$ is independently sampled from  $\mathcal{N}({\bf 0},I_5)$, in which $I_5$ denotes the five-dimensional identity matrix. 
 The random effect $u_i$ is generated from a normal distribution $\mathcal{N}(0,\tau)$, where
  the variance parameter $\tau$ takes a value of $0.01$, $0.1$, $1$, $2.5$, and $5$.
The random error  $\epsilon_{ij}$ is drawn from $\mathcal{N}(0,1)$.
The average biases and mean squared errors (MSEs) of regression parameters and gating network parameters for the MoE, ReMoE, and MEMoE 
are presented in the top panel of Table \ref{tab:examples_123}. The prediction errors are exhibited in the right panel of
Figure \ref{fig:simulation-exam12}.

When $\tau$ is less than $1$, the performance of the four methods is comparable. As $\tau$ increases, the bias and variance of MoE increase and are larger than those of the other three methods; in contrast, the biases and variances of MEMoE and ReMoE exhibit slight fluctuations and remain stable, yielding results comparable to those of the classical LMM approach. The boxplots (Figure~\ref{fig:simulation-exam12}) show prediction mean squared errors (PMSE) of the four methods. The PMSE of MoE increases with $\tau$. In contrast, the PMSEs of MEMoE, ReMoE, and LMM have no significant change as $\tau$ increases.

Example 2 serves as a key validation: when the data conform to a classical LMM framework, the MEMoE procedures closely align with the LMM benchmark in terms of both estimation and prediction accuracy. However, it is worth noting that the performance of the MoE approach declines as variability between subjects increases.
When considered alongside the findings from Example 2, these results demonstrate that MEMoE adapts well to both the presence and absence of subject-level random effects. In contrast, models that neglect either the random effects (the standard MoE) or expert heterogeneity (LMM) exhibit severe bias when deployed in misspecified modeling scenarios.

\begin{figure}[H]
    \centering
    \subfloat[Example 1\label{fig:example1}]{
        \includegraphics[width=0.45\textwidth]{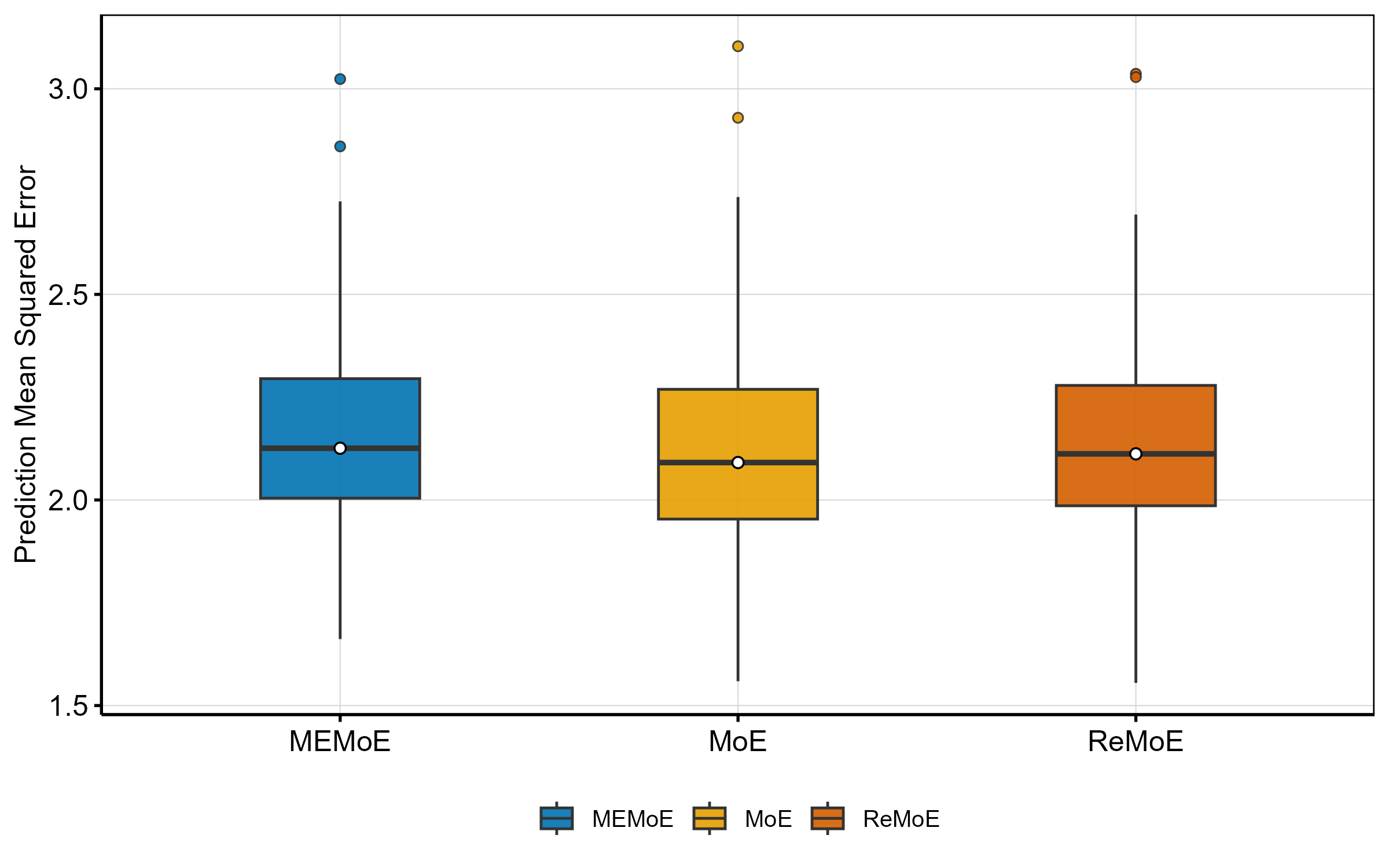}
    }
    \hfill 
    \subfloat[Example 2\label{fig:example2}]{
        \includegraphics[width=0.45\textwidth]{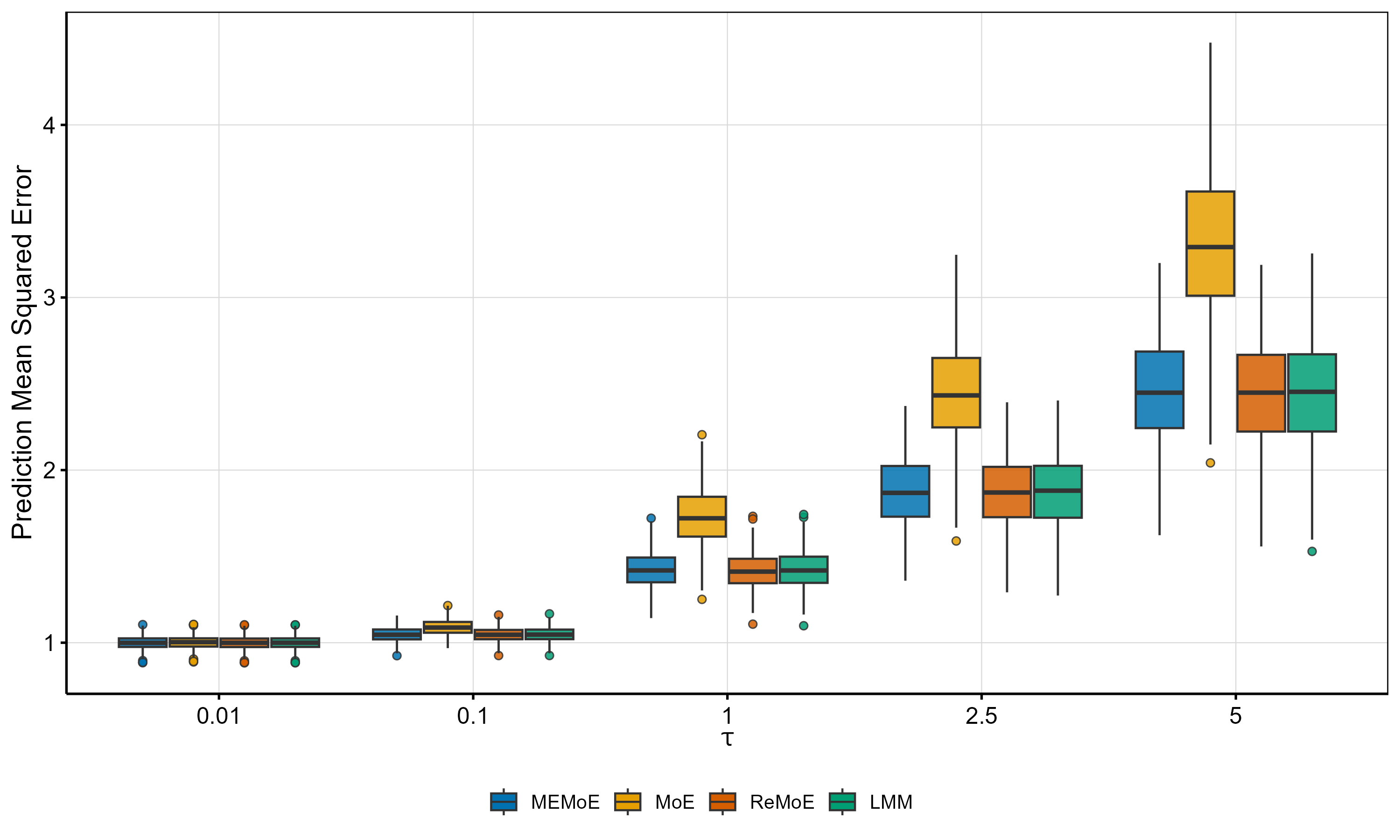}
    }
      \caption{Prediction
     mean square error under MEMoE, MoE, ReMoE and LMM 
     for Examples 1 and 2. LMM is deleted in Examples 1 because it  shows larger PMSE and all results imply that MEMoE attains oracle estimation performance.}
    \label{fig:simulation-exam12}
\end{figure}

\textbf{Example 3:}
We consider a linear mixed model mixture-of-experts framework with a Gaussian response. 
The observed data consist of $n = 100$  subjects, with each subject contributing $n_i = 15$ repeated measurements; this yields a total of $N = \sum_{i=1}^{n} n_i = 1500$ observations.
The  covariate vector $x_{ij} = (x_{ij1}, x_{ij2}, \dots, x_{ij5})^\top$ is independently drawn from a multivariate normal distribution $\mathcal{N}(\mathbf{0}, \mathbf{I}_5)$. 
Expert membership for each observation is determined by a multinomial gating mechanism, where a softmax function governs class assignment probabilities:
$$
\pi_k(x_{ij},z_{ij},\alpha) = \frac{\exp(x_{ij}^\top \alpha_k)}{\sum_{\ell=1}^K \exp(x_{ij}^\top \alpha_\ell)},\quad k=1,\cdots, K.
$$
For each observation with assigned latent expert $k$, the Gaussian response $y_{ij}$ is generated according to a linear mixed-effects model:
$$
y_{ij} = {x}_{ij}^\top \boldsymbol{\beta}_k+ u_{i} + \epsilon_{ij}.
$$

We consider the following three cases for the number of experts $K$, the random effect $u_i$, the random error $\epsilon_{ij}$, the regression coefficients  $ \boldsymbol{\beta}_k$, and the gating-network parameters
$\alpha_k$ for $k=1,\cdots, K$.

\textbf{Case 1:}
For $K=2$.
Expert-specific fixed effects and the gating-network parameters are given by:
\begin{align*}
\boldsymbol{\beta}_1 &= (3,\ -3,\ 1,\ -1,\ 0)^\top, \quad &\boldsymbol{\alpha}_1 &= (6,\ -5,\ 3,\ 2,\ 1)^\top, \\
\boldsymbol{\beta}_2 &= (-5,\ 5,\ 0,\ 2,\ -2)^\top, \quad &\boldsymbol{\alpha}_2 &= (-4,\ 2,\ -7,\ 5,\ -3)^\top.
\end{align*}
The random error $\epsilon_{ij}$ is independently sampled from $ \mathcal{N}(0, 1)$, and the random effect $u_{i}$ is generated from 
$ \mathcal{N}(0, \tau)$.

\textbf{Case 2:} The settings of the number of experts, the random error, the expert-specific fixed effects, and the gating-network parameters   are the same as those in Case 1.
The  random effect $u_{i} $ is generated from
$ \mathcal{N}(\kappa w_{i}, \tau)$, where $\omega_i = (1, \omega_{i2}, \omega_{i3}, \omega_{i4})^\top$ with $\omega_{ij} \stackrel{i.i.d.}{\sim} \mathcal{N}(0, 1)$ for $j=2,3,4$, and $\kappa = (2, 5, -1, 5)^\top$.

\textbf{Case 3:}
For $K=3$.
Expert-specific fixed effects and  the gating-network parameters are given by
\begin{align*}
\boldsymbol{\beta}_1 &= (3,\ -3,\ 1,\ -1,\ 0)^\top, \quad &\boldsymbol{\alpha}_1 &= (6,\ -5,\ 3,\ 2,\ 1)^\top, \\
\boldsymbol{\beta}_2 &= (-5,\ 5,\ 0,\ 2,\ -2)^\top, \quad &\boldsymbol{\alpha}_2 &= (-4,\ 2,\ -7,\ 5,\ -3)^\top, \\
\boldsymbol{\beta}_3 &= (1,\ -2,\ 3,\ 1,\ -4)^\top, \quad &\boldsymbol{\alpha}_3 &= (2,\ -1,\ 4,\ -3,\ 6)^\top.
\end{align*}
The  random effect $u_{i} $ is generated from $\mathcal{N}(\kappa w_{i}, \tau)$,
where $\kappa = (5, -3, 2, 1)$ and $\omega_i = (1, \omega_{i2}, \omega_{i3}, \omega_{i4})^\top$ with $\omega_{ij} \stackrel{i.i.d.}{\sim} \mathcal{N}(0, 1)$.
Conditional on the $k$th  expert, the random error  $\epsilon_{ij} \sim \mathcal{N}(0, \sigma_k^2)$. The standard deviations for three experts are $\sigma_1 = 1.0$, $\sigma_2 = 1.2$, and $\sigma_3 = 0.8$, respectively.

The average biases and mean squared errors  of the parameter estimators for fixed effects in three cases are presented in Table \ref{tab:examples_123}.
The prediction mean squared errors for three cases are displayed in panels (a), (c), and (e) of Figure \ref{fig:simulation-all}.
The empirical coverage probabilities for three cases are shown in panels (b), (d), and (f) of Figure \ref{fig:simulation-all}.

For the expert-specific regression coefficients $\beta_k$, both MEMoE and ReMoE yield essentially unbiased estimates with small variance across all values of $\tau$. In contrast, the LMM estimator, which ignores the underlying expert structure, exhibits substantial bias, reflecting its inability to recover expert-specific effects. 
The standard MoE performs well when $\tau$ is near zero but shows a marked increase in both bias and variance as $\tau$ grows, confirming the severe impact of ignoring subject-level random effects.

Across all three cases, MEMoE and ReMoE achieve the lowest and most stable prediction mean squared errors, while the performance of the standard MoE deteriorates sharply with larger $\tau$, and LMM yields intermediate but consistently inferior performance. 
Overall, Simulation 3 confirms the theoretical advantages of correctly specifying both the expert structure and the random-effects distribution, and highlights that MEMoE provides the most reliable parameter and prediction inference in heterogeneous mixed-effects settings.

Across all three simulation settings, the prediction sets of the proposed MEMoE maintain empirical $95\%$ coverage probabilities closely aligned with the nominal level across a wide range of random-effect variances. The mean interval lengths increase as $\tau$ grows.
The simulation results provide strong finite-sample evidence for the validity of the proposed prediction-set construction.

\begin{table}[H]
\centering
\caption{Biases and MSEs of  parameter estimates of $\beta$ in Examples 2 and 3 across different variances of random effects. MEMoE shows robust and best estimation performance except for the oracle estimators.}
\label{tab:examples_123}
\resizebox{\textwidth}{!}{%
\begin{tabular}{@{}lll ccccc ccccc@{}}
\toprule
 & & & \multicolumn{2}{c}{$\tau=0.01$} & \multicolumn{2}{c}{$\tau=0.1$} & \multicolumn{2}{c}{$\tau=1$} & \multicolumn{2}{c}{$\tau=2.5$} & \multicolumn{2}{c}{$\tau=5$} \\
 \cmidrule(lr){4-5} \cmidrule(lr){6-7} \cmidrule(lr){8-9} \cmidrule(lr){10-11} \cmidrule(lr){12-13} 
\textbf{Scenario} & \textbf{Parameter} & \textbf{Model} & Bias & MSE & Bias & MSE & Bias & MSE & Bias & MSE & Bias & MSE \\
\midrule

\multirow{4}{*}{\textbf{Example 2}} 
  & \multirow{4}{*}{\shortstack{ $\beta$}} 
    & LMM (oracle)       & 0.015 & 0.005 & 0.015 & 0.005 & 0.016 & 0.005 & 0.016 & 0.005 & 0.017 & 0.005 \\
  & & MoE        & 0.025 & 0.015 & 0.026 & 0.015 & 0.027 & 0.017 & 0.038 & 0.019 & 0.044 & 0.025 \\
  & & ReMoE (oracle) & 0.017 & 0.005 & 0.018 & 0.005 & 0.020 & 0.005 & 0.025 & 0.005 & 0.025 & 0.005 \\
  & & \textbf{MEMoE}     & \textbf{0.025} & \textbf{0.007} & \textbf{0.025} & \textbf{0.005} & \textbf{0.026} & \textbf{0.006} & \textbf{0.026} & \textbf{0.005} & \textbf{0.026} & \textbf{0.007} \\

\midrule
\midrule

\multicolumn{13}{l}{\textbf{Example 3:  (Cases 1--3)}} \\
\cmidrule(lr){1-13}
 & & & \multicolumn{2}{c}{$\tau=0.01$} & \multicolumn{2}{c}{$\tau=0.1$} & \multicolumn{2}{c}{$\tau=1$} & \multicolumn{2}{c}{$\tau=2.5$} & \multicolumn{2}{c}{$\tau=5$} \\
 \cmidrule(lr){4-5} \cmidrule(lr){6-7} \cmidrule(lr){8-9} \cmidrule(lr){10-11} \cmidrule(lr){12-13} 

\multirow{7}{*}{\textbf{Case 1}} 
 & \multirow{4}{*}{Expert 1 $\beta_1$} 
   & LMM           & 1.997 & 4.020 & 1.997 & 4.019 & 1.997 & 4.020 & 1.997 & 4.022 & 1.998 & 4.024 \\
 & & MoE           & 0.020 & 0.001 & 0.020 & 0.001 & 0.035 & 0.002 & 0.048 & 0.002 & 0.066 & 0.010 \\
 & & ReMoE (oracle) & 0.012 & 0.001 & 0.018 & 0.001 & 0.012 & 0.001 & 0.010 & 0.001 & 0.010 & 0.001 \\
 & & \textbf{MEMoE}& \textbf{0.015} & \textbf{0.003} & \textbf{0.014} & \textbf{0.004} & \textbf{0.019} & \textbf{0.003} & \textbf{0.020} & \textbf{0.003} & \textbf{0.021} & \textbf{0.003} \\
 \cmidrule(l){2-13}
 & \multirow{3}{*}{Expert 2 $\beta_2$} 
   & MoE           & 0.021 & 0.001 & 0.024 & 0.005 & 0.041 & 0.003 & 0.063 & 0.007 & 0.084 & 0.013 \\
 & & ReMoE & 0.021 & 0.001 & 0.019 & 0.001 & 0.016 & 0.001 & 0.017 & 0.001 & 0.017 & 0.001 \\
 & & \textbf{MEMoE}& \textbf{0.018} & \textbf{0.003} & \textbf{0.019} & \textbf{0.004} & \textbf{0.019} & \textbf{0.004} & \textbf{0.021} & \textbf{0.004} & \textbf{0.022} & \textbf{0.004} \\

\midrule

\multirow{7}{*}{\textbf{Case 2}} 
 & \multirow{4}{*}{Expert 1 $\beta_1$} 
   & LMM           & 5.955 & 2.210 & 5.938 & 2.205 & 5.954 & 2.210 & 5.945 & 2.205 & 5.958 & 2.210 \\
 & & MoE           & 0.097 & 0.403 & 0.071 & 0.411 & 0.099 & 0.421 & 0.108 & 0.406 & 0.109 & 0.415 \\
 & & ReMoE         & 0.013 & 0.044 & 0.018 & 0.047 & 0.014 & 0.047 & 0.010 & 0.046 & \textbf{0.006} & 0.048 \\
 & & \textbf{MEMoE}& \textbf{0.003} & \textbf{0.016} & \textbf{0.006} & \textbf{0.017} & \textbf{0.004} & \textbf{0.016} & \textbf{0.003} & \textbf{0.016} & 0.008 & \textbf{0.018} \\
 \cmidrule(l){2-13}
 & \multirow{3}{*}{Expert 2 $\beta_2$} 
   & MoE           & 0.092 & 0.403 & 0.096 & 0.399 & 0.118 & 0.418 & 0.104 & 0.415 & 0.102 & 0.430 \\
 & & ReMoE         & 0.011 & 0.045 & 0.019 & 0.047 & 0.013 & 0.047 & 0.011 & 0.047 & \textbf{0.003} & 0.046 \\
 & & \textbf{MEMoE}& \textbf{0.003} & \textbf{0.016} & \textbf{0.003} & \textbf{0.017} & \textbf{0.004} & \textbf{0.017} & \textbf{0.004} & \textbf{0.017} & 0.009 & \textbf{0.016} \\

\midrule

\multirow{10}{*}{\textbf{Case 3}} 
 & \multirow{4}{*}{Expert 1 $\beta_1$} 
   & LMM           & 2.641 & 8.557 & 2.641 & 8.557 & 2.641 & 8.557 & 2.642 & 8.557 & 2.642 & 8.557 \\
 & & MoE           & 1.380 & 2.483 & 1.381 & 2.489 & 1.385 & 2.510 & 1.387 & 2.533 & 1.389 & 2.561 \\
 & & ReMoE         & 0.015 & 0.014 & 0.015 & 0.014 & 0.015 & 0.014 & 0.015 & 0.014 & 0.015 & 0.014 \\
 & & \textbf{MEMoE}& \textbf{0.007} & \textbf{0.004} & \textbf{0.007} & \textbf{0.004} & \textbf{0.007} & \textbf{0.004} & \textbf{0.007} & \textbf{0.004} & \textbf{0.006} & \textbf{0.004} \\
 \cmidrule(l){2-13}
 & \multirow{3}{*}{Expert 2 $\beta_2$} 
   & MoE           & 1.679 & 3.079 & 1.681 & 3.086 & 1.685 & 3.108 & 1.689 & 3.129 & 1.693 & 3.155 \\
 & & ReMoE         & 0.014 & 0.013 & 0.017 & 0.014 & 0.016 & 0.014 & 0.016 & 0.014 & 0.016 & 0.014 \\
 & & \textbf{MEMoE}& \textbf{0.003} & \textbf{0.004} & \textbf{0.003} & \textbf{0.004} & \textbf{0.002} & \textbf{0.004} & \textbf{0.002} & \textbf{0.004} & \textbf{0.001} & \textbf{0.004} \\
 \cmidrule(l){2-13}
 & \multirow{3}{*}{Expert 3 $\beta_3$} 
   & MoE           & 1.396 & 2.643 & 1.397 & 2.645 & 1.398 & 2.654 & 1.400 & 2.666 & 1.400 & 2.681 \\
 & & ReMoE         & 0.015 & 0.012 & 0.014 & 0.012 & 0.015 & 0.012 & 0.015 & 0.012 & 0.015 & 0.012 \\
 & & \textbf{MEMoE}& \textbf{0.003} & \textbf{0.002} & \textbf{0.003} & \textbf{0.002} & \textbf{0.003} & \textbf{0.002} & \textbf{0.003} & \textbf{0.002} & \textbf{0.004} & \textbf{0.002} \\

\bottomrule
\end{tabular}%
}
\end{table}

\begin{figure}[H]
    \centering
    \subfloat[Example 3: Case 1 (PMSE)\label{fig:sim3-pmse-c1}]{
        \includegraphics[width=0.45\textwidth]{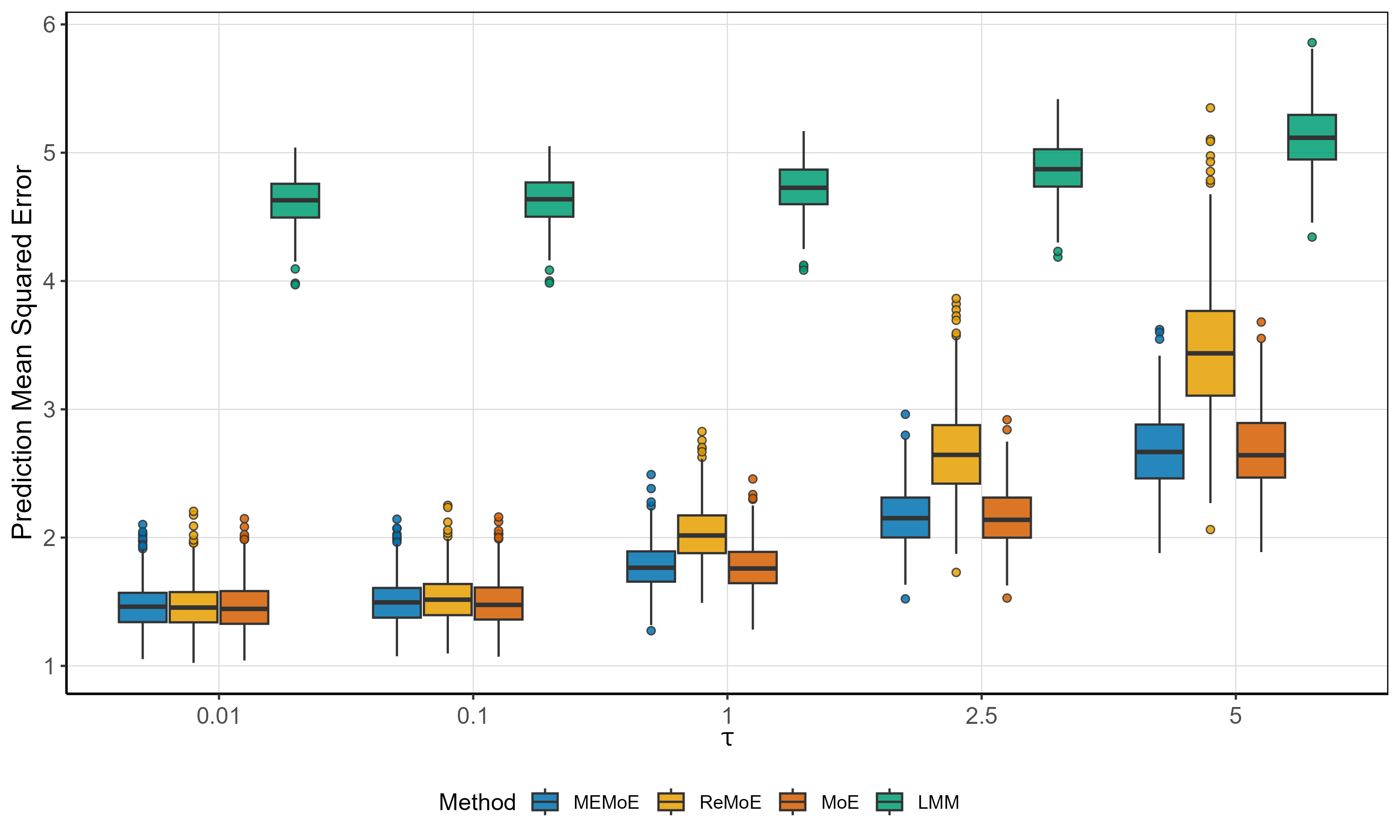}
    }
    \hfill
    \subfloat[Example 3: Case 1 (Coverage)\label{fig:sim3-cover-c1}]{
        \includegraphics[width=0.45\textwidth]{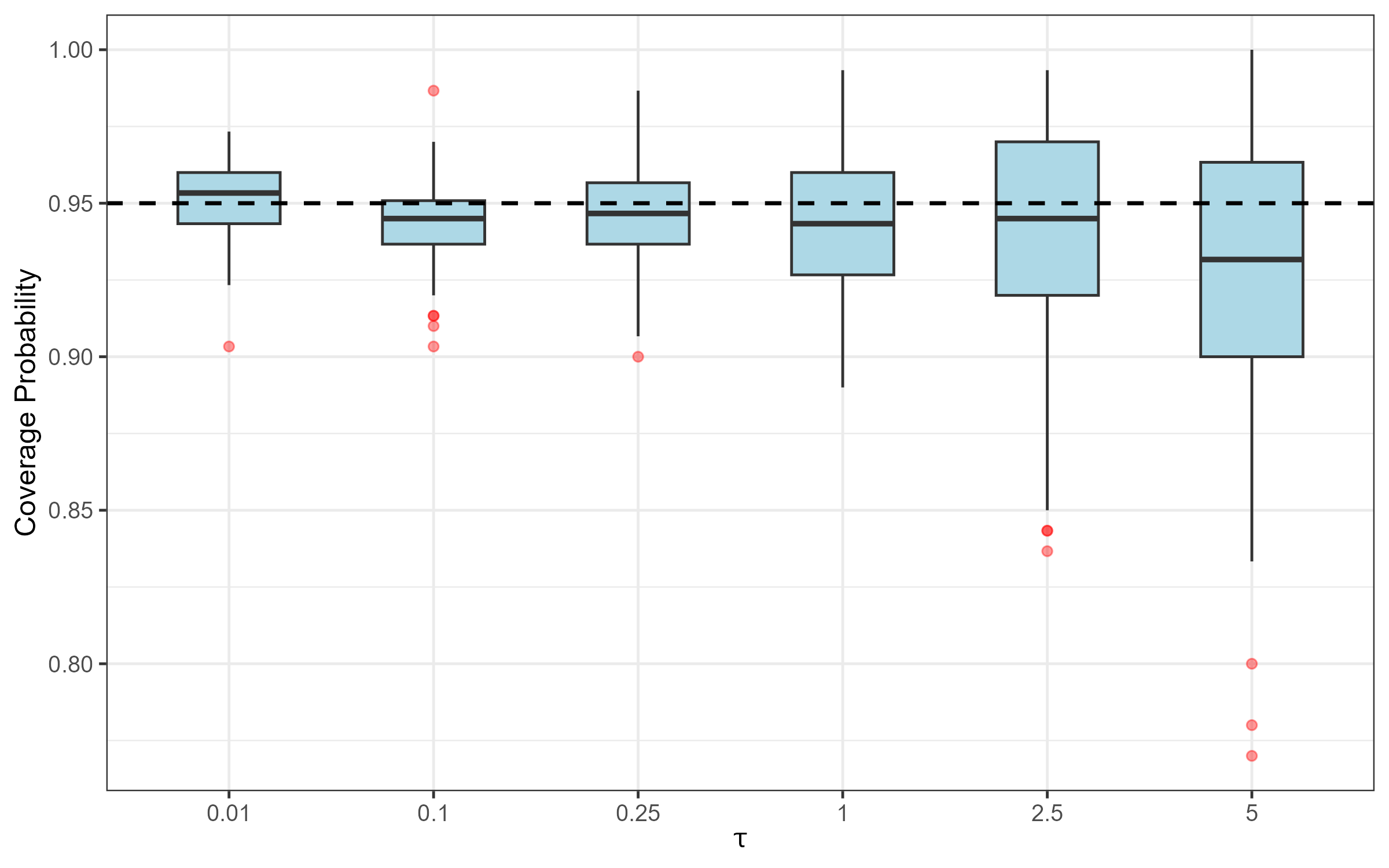}
    } 
    \vspace{-0.2cm} 
    \subfloat[Example 3: Case 2 (PMSE)\label{fig:sim3-pmse-c2}]{
        \includegraphics[width=0.45\textwidth]{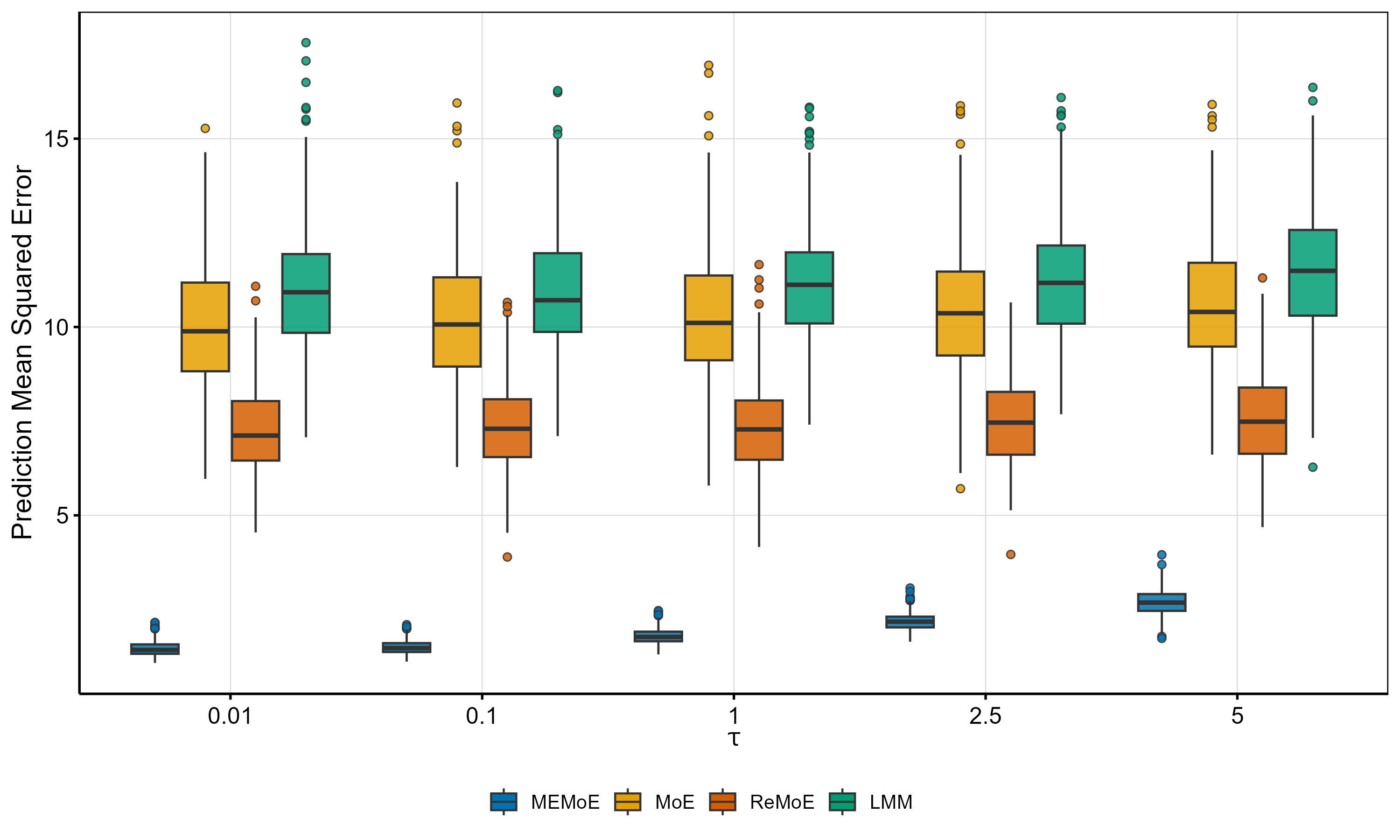}
    }
    \hfill
    \subfloat[Example 3: Case 2 (Coverage)\label{fig:sim3-cover-c2}]{
        \includegraphics[width=0.45\textwidth]{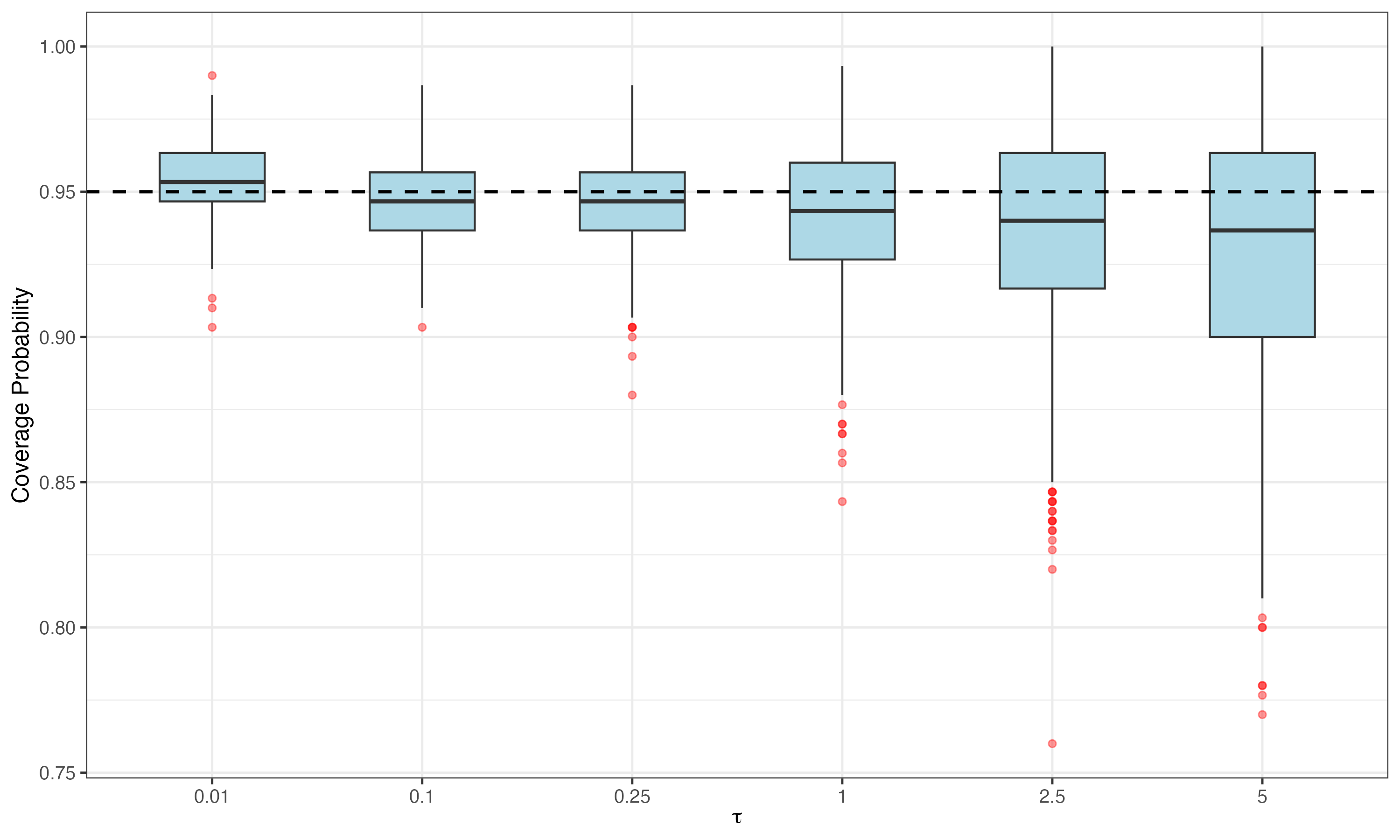}
    }
    \vspace{-0.2cm} 
    \subfloat[Example 3: Case 3 (PMSE)\label{fig:sim3-pmse-c3}]{
        \includegraphics[width=0.45\textwidth]{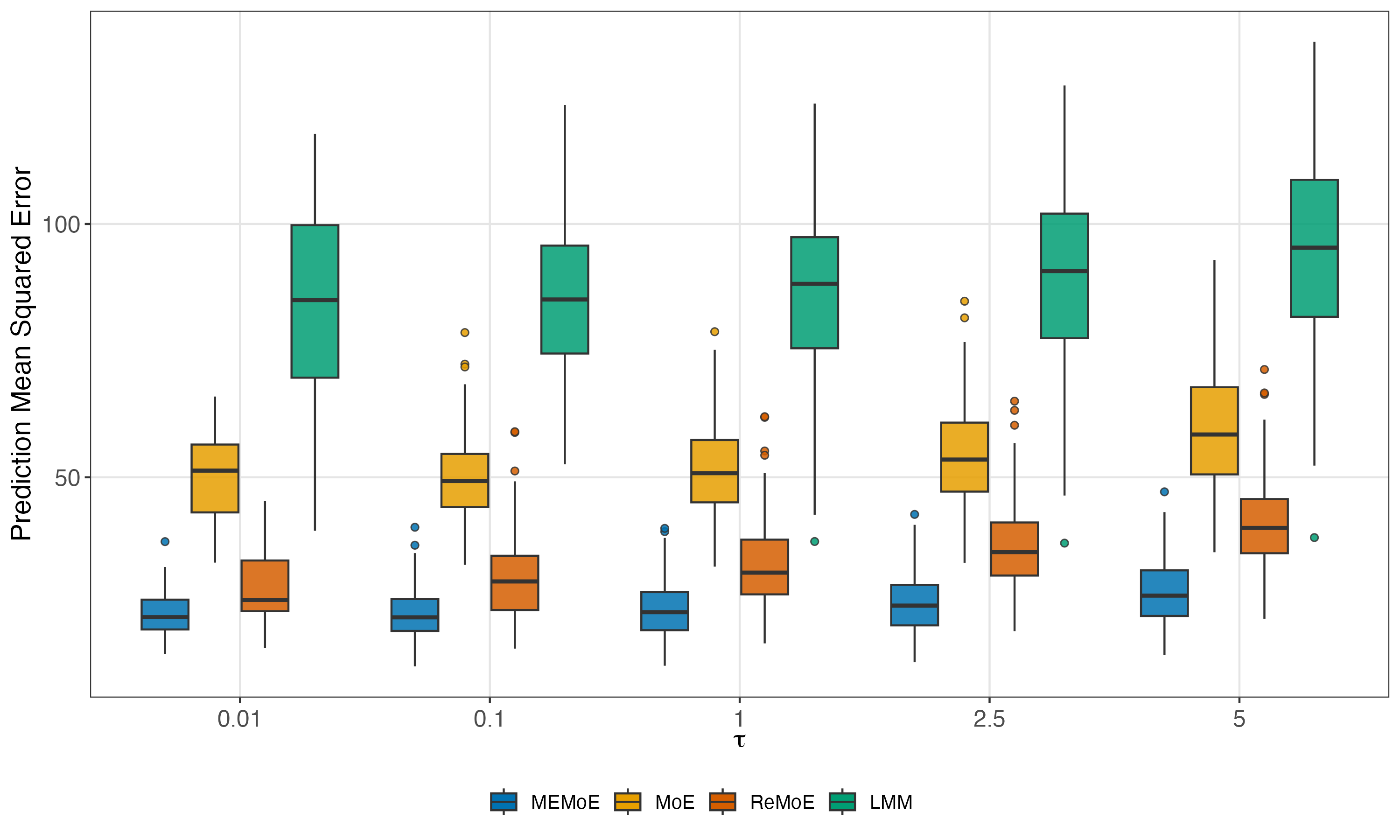}
    }
    \hfill
    \subfloat[Example 3: Case 3 (Coverage)\label{fig:sim3-cover-c3}]{
        \includegraphics[width=0.45\textwidth]{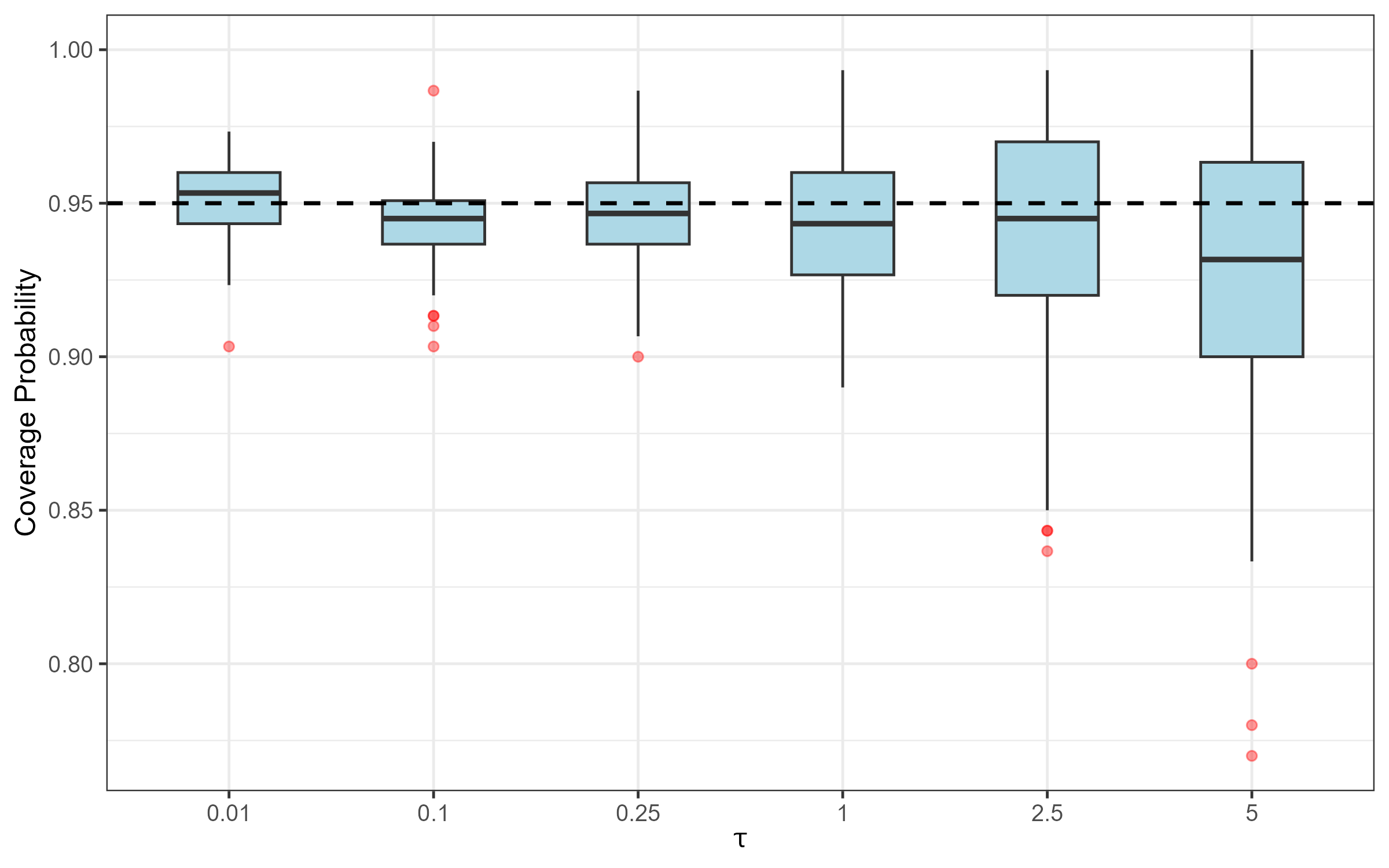}
    }
    \caption{Prediction
     mean square error (a),(c) and (e) under MEMoE, MoE, ReMoE and LMM 
     for Cases 1-3 of Example 3 and MEMoE shows the best prediction performance. MEMoE is also robust about the random effect variance in (b), (d) and (f).}
    \label{fig:simulation-all}
\end{figure}

\begin{table}[H]
\centering
\caption{Coverage probabilities (CP) and mean interval lengths of $ 95\%$ confidence intervals across $\tau$ values in Cases 1, 2, and 3.}
\label{tab:coverage_width-all-cases}
\small 
\begin{tabular}{ccccccc}
\toprule
\multirow{2}{*}{$\tau$} & \multicolumn{2}{c}{Case 1} & \multicolumn{2}{c}{Case 2} & \multicolumn{2}{c}{Case 3} \\
\cmidrule(lr){2-3} \cmidrule(lr){4-5} \cmidrule(lr){6-7}
& \multicolumn{1}{c}{CP} &  \multicolumn{1}{c}{Interval Length}
& \multicolumn{1}{c}{CP} &  \multicolumn{1}{c}{Interval Length}
& \multicolumn{1}{c}{CP} &  \multicolumn{1}{c}{Interval Length} \\
\midrule
0.01  & 0.9521 & 4.4463 & 0.9509 &  4.4592 & 0.9560 &4.7725 \\
0.10  & 0.9432 & 4.4730 & 0.9458 &  4.4891 & 0.9452 &4.7818 \\
0.25  & 0.9427 & 4.7090 & 0.9443 &  4.7204 & 0.9401 &5.0306 \\
1.00  & 0.9425 & 5.9020 & 0.9416 &  5.9086 & 0.9381 &6.2112 \\
2.50  & 0.9338 & 7.6414 & 0.9417 &  7.7088 & 0.9409 &8.0411 \\
5.00  & 0.9173 & 9.7118 & 0.9239 &  9.7735 & 0.9275 &10.0818 \\
\bottomrule
\end{tabular}
\end{table}

\subsection{Real Data Analysis }
Primary Biliary Cirrhosis (PBC) is a chronic, progressive autoimmune liver disease characterized by the destruction of intrahepatic bile ducts, leading to cholestasis, cirrhosis, and eventually liver failure. The data used in this section are from a randomized clinical trial conducted at the Mayo Clinic between January 1974 and May 1984 that compared D-penicillamine with a placebo \citep{murtaugh1994primary}. The randomized sample comprises $312$ patients with $1,945$ measurements. The number of visits per patient varies from $1$ to $16$ during a follow-up period of up to $14.1$ years. The dataset is available in the {\it survival} R package (named \text{pbcseq}).

Among the collected longitudinal biomarkers, serum bilirubin is widely recognized as the most powerful prognostic indicator of disease progression and survival.  To mitigate skewness and stabilize variance, a natural logarithm transformation is applied to serum bilirubin ($\log(\text{bili})$) as the response variable $y_{ij}$ for subsequent analyses.
To model the longitudinal disease progression while accounting for patient differences, we specify three design components: (i) observation-level covariates $x_{ij}$   for the expert mean functions; (ii) random-effects design vector $z_{ij}= (1, t_{ij})^{\top}$ for representing random intercepts and slopes and capturing within correlations, where $t_{ij}$ denotes the follow-up time in years; (iii) and subject-level covariates $w_i=(1, \text{Stage}_i)^{\top}$ for modeling the distribution of the random effects. 
The covariates maily include demographic factors: $\text{Age}_i$ and $\text{Sex}_i$ (coded as 1 for females); key biochemical markers of liver function: serum cholesterol ($\text{Chol}_{i0}$), baseline serum bilirubin ($\text{Bili}_{i0}$), albumin ($\text{Alb}_{i0}$), and prothrombin time ($\text{Protime}_{i0}$); and clinical signs of disease severity: presence of edema ($\text{Edema}_{i0}$), spiders angiomata ($\text{Spiders}_{i0}$), and ascites ($\text{Ascites}_{ij}$). Note that while $\text{Ascites}_{ij}$ is modeled as a time-varying covariate to capture disease progression, other clinical and biochemical markers are fixed at their baseline values (denoted by the subscript ``0'') to serve as baseline prognostic stratification factors. $\text{Trt}_i$ indicates the treatment group (1 for D-penicillamine, 0 for placebo).
The observation-level design vector $x_{ij}$ is constructed using a comprehensive set of feature factors. Specifically:
\begin{equation*}
x_{ij} = (1, t_{ij}, \text{Age}_{i}, \text{Chol}_{i0}, \text{Sex}_{i}, \text{Ascites}_{ij}, \text{Edema}_{i0}, \text{Trt}_{i}, \text{Bili}_{i0}, \text{Alb}_{i0}, \text{Protime}_{i0}, \text{Spiders}_{i0})^{\top}.
\end{equation*}
 Excluding time variable $t_{ij}$,  all predictors are fixed at enrollment. This approach ensures temporal precedence and reduces endogeneity bias caused by biomarkers that change with disease activity \citep{hernan2004structural}. Specifically, all continuous covariates are standardized.

The histogram (Figure \ref{fig:distribution:LMM}) of the residuals from the LMM exhibits a clear bimodal pattern, with two local peaks around the zero line and a noticeable dip in the middle. This deviation from unimodality indicates the presence of unobserved heterogeneity within the patient population. 
Therefore, given the implication of potential latent subgroups, we adopt the MEMoE framework to model this hidden heterogeneity.
\begin{figure}[ht]  
    \centering
    \includegraphics[width=0.8\textwidth]{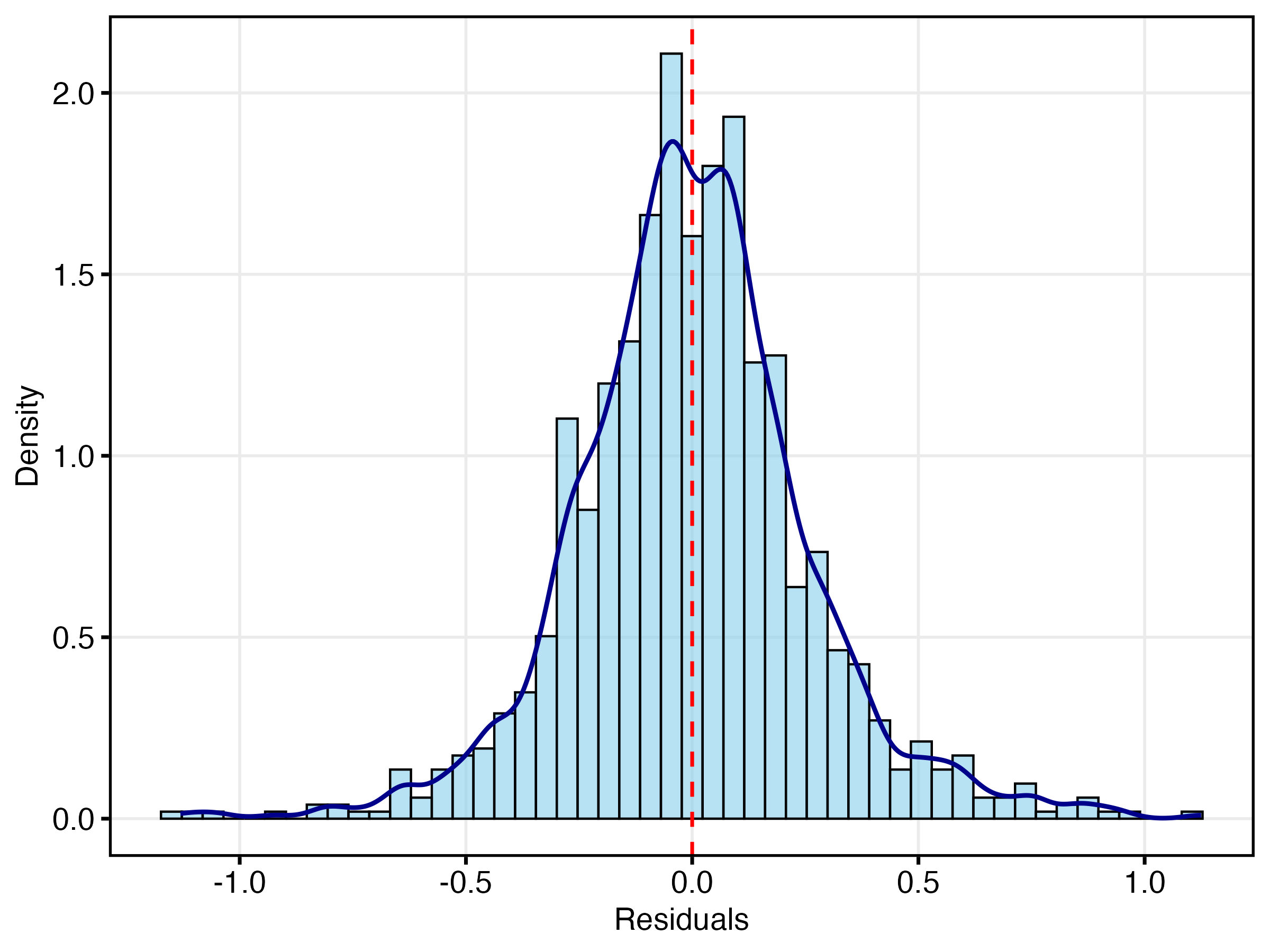} 
    \caption{The histogram  of  residuals from the LMM.}  
    \label{fig:distribution:LMM}
    \vspace{-0.2cm} 
\end{figure}

The data are randomly partitioned into $80\%$ for training and $20\%$ for testing. 
The five-fold cross-validation is conducted on the training set to select the optimal number of experts, with the training set split into four folds for model training and one for validation in each iteration. For each candidate number of experts 
$K$ ranging from 1 to 5, model parameters are estimated using the four training folds, and the corresponding predictive error is computed on the validation fold. 
Let $\hat{\pi}_k({x}_{ij}, z_{ij})$ and $\hat{{\beta}}_k$ denote the estimates of the gating probabilities and the expert-specific regression coefficients, respectively. 
We use
$$
\hat{y} = ({x}_{\mathrm{new}})^\top \hat{{\beta}}_{\hat{k}} + (\boldsymbol{z}_{\mathrm{new}})^\top \hat{{\kappa}}{w}_{\mathrm{new}} 
$$
to make predictions at a given ${x}_{\mathrm{new}}$, ${z}_{\mathrm{new}}$, ${w}_{\mathrm{new}}$,
where $\hat{k} = \operatorname*{arg\,max}_{k \in \{1,\dots,K\}} \; 
\hat{\pi}_k({x}_{\mathrm{new}},{z}_{\mathrm{new}})$ is the class with highest estimated probability. 
The cross-validated root mean squared error indicates that $K=3$ yields the lowest predictive error.
Subsequently, the full training set is used to re-estimate the model parameters using the selected optimal $K$. Table~\ref{tab:memoe-pbc-params} presents the expert-specific estimates of parameters in the three-expert MEMoE model.

\begin{table}[H]
\centering
\caption{Parameter estimates for the MEMoE model ($K=3$) on PBC data, and $^{*}$ indicates  $p$-value is less than $0.05$.}
\label{tab:memoe-pbc-params}
\small
\begin{tabular}{lrrr}
\toprule
\textbf{Parameter} & \textbf{Expert 1} & \textbf{Expert 2} & \textbf{Expert 3} \\
\midrule
\multicolumn{4}{l}{\textit{Fixed Effects ($\beta_k$)}} \\
Intercept                   & 0.393$^{*}$     & $-$2.375$^{*}$   & 1.308$^{*}$ \\
Time ($t_{ij}$)             & $-$0.030$^{*}$  & $-$0.098$^{*}$   & $-$0.041$^{*}$ \\
Baseline Age                & $-$0.048$^{*}$  & $-$0.486$^{*}$   & $-$0.096$^{*}$ \\
Baseline Albumin            & $-$0.093$^{*}$  & $-$0.573$^{*}$   & $-$0.242$^{*}$ \\
Baseline Bilirubin          & 1.496$^{*}$     & 0.673$^{*}$      & 0.698$^{*}$ \\
Baseline Cholesterol        & 0.100$^{*}$     & 0.411$^{*}$      & 0.042$^{*}$ \\
Baseline Prothrombin Time   & 0.021$^{*}$     & 0.467$^{*}$      & $-$0.030$^{*}$ \\
Sex (Female)                & $-$0.276$^{*}$  & 0.228$^{*}$      & $-$0.593$^{*}$ \\
Ascites                     & $-$0.149$^{*}$  & 1.556$^{*}$      & 0.010 \\
Edema score                 & 0.066$^{*}$     & $-$2.497$^{*}$   & $-$0.018 \\
Spiders                     & 0.123$^{*}$     & $-$0.522$^{*}$   & 0.212$^{*}$ \\
Treatment                   & 0.022           & $-$0.197$^{*}$   & 0.141$^{*}$ \\
\midrule
$\sigma_k^2$                & 0.090           & 0.041            & 0.058 \\
\bottomrule
\end{tabular}
\end{table}

The results indicate the existence of three distinct latent progression patterns, each characterized by specific factor drivers. Expert 1 represents a ``Biochemical Instability'' phenotype, exhibiting the strongest positive association with Baseline Bilirubin (1.496) among the experts. This group also demonstrates the largest variance ($\sigma_k^2 = 0.090$), reflecting a high degree of clinical unpredictability driven by biochemical derangement rather than physical signs.
Expert 2 characterizes a ``Clinical Decompensation'' phenotype. This component is dominated by severe clinical manifestations, showing the most significant coefficients for Ascites ($1.556$) and Edema ($-2.497$), as well as the strongest negative association with Baseline Albumin ($-0.573$). The substantial impact of these markers suggests this subgroup represents patients with advanced structural damage and synthetic failure. Notably, this is the only subgroup where Treatment shows a significant negative association ($-0.197$), suggesting a potential differential response.
Expert 3 captures a ``Demographic-Driven'' phenotype. Unlike the other groups, this pattern shows strong sensitivity to Sex ($-0.593$) and a high baseline intercept. In contrast to Expert 2, clinical signs of fluid retention (Ascites, Edema) contribute minimally (0.010 and $-$0.018, respectively) and are not statistically significant. This suggests a subgroup in which disease trajectory is influenced more by demographic factors and baseline metabolic state (Bilirubin 0.698) than by overt clinical decompensation.

The proposed MEMoE achieved the lowest predictive error (RMSE = $0.756$), outperforming both the standard MoE (RMSE = $0.835$) and the single-component LMM (RMSE = $0.990$). This sequential reduction in RMSE shows that accounting for within correlations via random effects—beyond merely capturing population heterogeneity—is essential for accurate PBC progression modeling.

Furthermore, we construct the $95\%$ predictive sets $(\Omega_{0.05}(x_{\text{new}}))$ and record whether $y_{\text{new}}\in \Omega_{0.05}(x_{\text{new}})$  for $100$ randomly selected test observations (see Figure \ref{fig:placeholder}).
The overall cover probability is 94.76\%.
Most prediction sets are contiguous, while the MEMoE adaptively produces disjoint intervals (visible as gaps in the red lines) in regions where the predictive distribution is highly skewed or exhibits multiple modes. 
This flexibility reflects genuine ambiguity about future $\log(\text{bilirubin})$ levels, yet even in these complex cases, the vast majority of observations are correctly covered by the prediction sets.

\begin{figure}[H]
    \centering
    \includegraphics[width=0.85\linewidth]{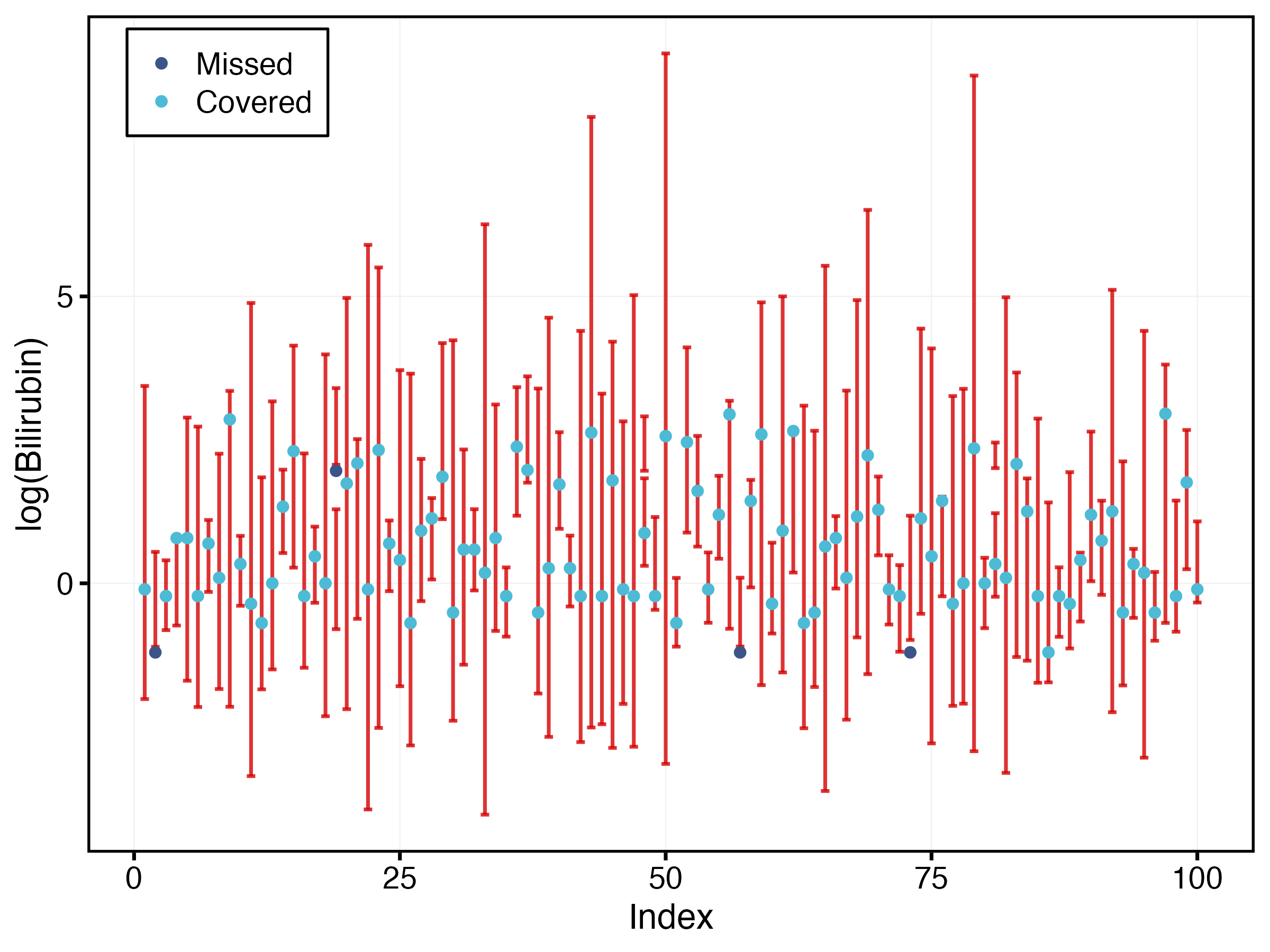}
    \vspace{-0.5cm}
    \caption{Plot of MEMoE 95\% prediction sets for the PBCseq bilirubin. The red bars are  prediction sets for each observation. The dark and light blue points correspond to the covered and missed  $\log(\text{bilirubin})$ values by the prediction sets, respectively. }
    \label{fig:placeholder}
    
\end{figure}

\section{Conclusions}\label{sec-conc}
In this paper, we propose a MEMoE model that combines subject-level random effects with a flexible expert structure and a  gating function. The  MEMoE framework provides a unified approach that simultaneously addresses between-subject heterogeneity and within correlations by integrating a covariate-dependent gating function with expert-specific linear mixed-effects models. To overcome the computational intractability of the likelihood function, we develop a robust estimation scheme that combines a Laplace approximation for latent-variable integration with a generalized EM algorithm augmented by a majorize-minimize strategy. This method ensures numerical stability and the accuracy of the model parameter estimates. Simulation studies and the real data application show that MEMoE can effectively identify hidden dynamic regimes and significantly improve predictive accuracy for new samples compared to LMMs and standard MoE. 
A key innovation in our framework is the construction of prediction sets. Unlike traditional interval estimates, the proposed method can easily handle multimodal predictive distributions, which commonly arise in heterogeneous populations, thereby facilitating more comprehensive uncertainty quantification. Despite these positive aspects, the current model has some limitations. The Laplace approximation can introduce bias when the number of longitudinal observations per subject is small, and the non-convexity of the objective function poses challenges for avoiding local optima. 
In summary, MEMoE bridges the gap between the interpretability of mixed-effects models and the flexibility of finite mixture models, providing a valuable and principled tool for complex longitudinal data analysis.


\section{Disclosure statement}\label{disclosure-statement}

The authors report there are no competing interests to declare.

\section{Data Availability Statement}\label{data-availability-statement}

The data used in this study are available in the \texttt{[survival]} R package.

\phantomsection\label{supplementary-material}
\bigskip

\begin{center}

{\large\bf SUPPLEMENTARY MATERIAL}

\end{center}



Detailed mathematical proofs for Theorems 1 and 2 are provided in the supplementary material. 


\bibliography{Bibliography-MM-MC.bib}

\newpage
\section*{Proof Theorem 1}
We define the population objective function as the expected Laplace-approximated log-likelihood for a representative subject:$$M(\Psi):= \mathbb{E}\left[ \frac{1}{N} \sum_{i=1}^N \ell_{\text{LA},i}(\Psi) \right],$$
where
$$\ell_{\text{LA},i}(\Psi) = h_i(\hat{u}_i(\Psi); \Psi) + \frac{q}{2}\log(2\pi) - \frac{1}{2}\log|H_i(\Psi)|.$$
Correspondingly, the sample objective function is:
$$M_N(\Psi) := \frac{1}{N} \sum_{i=1}^N \ell_{\text{LA},i}(\Psi).$$



The norm $\|\cdot\|$ in these supplementary materials is defined as the $L_2$ norm. For a vector $v \in \mathbb{R}^d$,
$\|v\| = \sqrt{v^\top v}=(\sum_{k=1}^d v_k^2)^{1/2}$;  for a matrix $A \in \mathbb{R}^{m \times d}$, $\|A\| := \sup_{\|v\|=1} \|Av\| = \sqrt{\lambda_{\max}(A^\top A)}$, where $\lambda_{\max}$ is the largest eigenvalue.
To prove Theorem 1, we give the following Propositions 1 and 2.

\paragraph{ Proposition 1}
Under conditions (A1)-- (A4), the class of functions $\{\ell_{\text{LA},i}(\Psi):\Psi\in\Theta\}$, as $N \to \infty$:
$$\sup_{\Psi \in \Theta} \left|M_N(\Psi) - M(\Psi)\right| \xrightarrow{p} 0.$$

\noindent{\it Remark:}
By the Glivenko-Cantelli theorem, it suffices to verify that the function class $\mathcal{F} = \{\ell_{\text{LA},i}(\Psi): \Psi \in \Theta\}$ satisfies: each function $\Psi \mapsto \ell_{\text{LA},i}(\Psi)$ is continuous on the compact set $\Theta$; there exists a random variable $G_i$ with $\sup_{\Psi \in \Theta}|\ell_{\text{LA},i}(\Psi)| \le G_i$ and $\mathbb{E}[G_i] < \infty$.




\begin{proof}

We first construct an integrable envelope function that does not depend on $\Psi$. By the eigenvalue bounds in Condition (A3) and the compactness of the parameter space $\Theta$ in Condition (A2), all parameter components $(\alpha,\beta,\sigma^2,\kappa,\Sigma)$ are uniformly bounded over $\Psi\in\Theta$. The eigenvalues of $\Sigma$ are uniformly bounded away from zero and infinity. 

Condition (A2) imposes uniform bounds on all model parameters, and Condition (A1) ensures finite second moments for the covariates. 
Therefore, the unconditional second moment of the response
$$
\mathbb{E}[y_{ij}^2] \;\le\; C' \left( 1 + \mathbb{E}\|x_{ij}\|^2 + \mathbb{E}\|z_{ij}\|^2 + \mathbb{E}\|w_i\|^2 \right) \;<\; \infty.
$$


We will prove that $\ell_{\mathrm{LA},i}(\Psi)$ is continuous on the parameter space $\Theta$.
Recall that
$$\ell_{\text{LA},i}(\Psi) = h_i(\hat{u}_i(\Psi); \Psi) + 
\frac{q}{2}\log(2\pi) - \frac{1}{2}\log|H_i(\Psi)|,$$
where $\hat{u}_i(\Psi) = \arg\max_u h_i(u; \Psi)$ and $H_i(\Psi) = -\nabla_u^2 h_i(\hat{u}_i(\Psi); \Psi)$.

We first establish the continuity of the mapping $\Psi \mapsto \hat{u}_i(\Psi)$. Specifically, $\hat{u}_i(\Psi)$ denotes the unique solution to the first-order condition $\nabla_u h_i(u, \Psi) = 0$. Under the smoothness conditions imposed by the model, the function $F(u, \Psi):= \nabla_u h_i(u, \Psi)$ is continuously differentiable.
For an arbitrary  fixed $\Psi_0 \in \Theta$, let $u_0 = \hat{u}_i(\Psi_0)$. The partial derivative of $F(u, \Psi)$ with respect to $u$, evaluated at $(u_0, \Psi_0)$, is the Hessian matrix of the joint log-likelihood function $h_i(\cdot,\cdot)$:
$$
\frac{\partial}{\partial u} F(u, \Psi)|_{(u_0, \Psi_0) }= \nabla_u^2 h_i(u_0, \Psi_0) = -H_i(\Psi_0).
$$
$H_i(\Psi_0)$ is positive definite, and hence $\nabla_u^2 h_i(u_0; \Psi_0) = -H_i(\Psi_0)$ is negative definite and thus  invertible.
By the Implicit Function Theorem, there exists a neighborhood $U$ of $\Psi_0$ and a unique continuously differentiable function $g: U \to \mathbb{R}^q$ such that $F(g(\Psi), \Psi) = 0$ for all $\Psi \in U$. Given the global uniqueness of the maximizer $\hat{u}_i(\Psi)$, we identify $\hat{u}_i(\Psi) = g(\Psi)$ on $U$. Since $\Psi_0$ is chosen arbitrarily, the mapping $\Psi \mapsto \hat{u}_i(\Psi)$ is continuous over the entire compact set $\Theta$. 

The function $h_i(u; \Psi)$ is differentiable with respect to $u$ and $\Psi$, and the mapping $(u, \Psi) \mapsto h_i(u; \Psi)$ is jointly continuous. Since the mapping $\Psi \mapsto \hat{u}_i(\Psi)$ is continuous, 
$\Psi \mapsto h_i(\hat{u}_i(\Psi); \Psi)$
is continuous on $\Theta$.
The Hessian $(u, \Psi) \mapsto \nabla_u^2 h_i(u; \Psi)$ is continuous (as the second derivative of a three-times differentiable function). Since $H_i(\Psi) = -\nabla_u^2 h_i(\hat{u}_i(\Psi); \Psi)$, similarly, the mapping $\Psi \mapsto H_i(\Psi)$ is continuous on $\Theta$.

Since $H_i(\Psi)$ is positive definite,  all its eigenvalues are strictly positive. Since $\Psi \mapsto H_i(\Psi)$ is continuous,  and $\Theta$ is compact, the eigenvalue function $\Psi \mapsto \lambda_{\min}(H_i(\Psi))$ is continuous on the compact set. By the extreme value theorem,
$$
c:= \inf_{\Psi \in \Theta} \lambda_{\min}(H_i(\Psi)) 
= \min_{\Psi \in \Theta} \lambda_{\min}(H_i(\Psi)) > 0.
$$
Therefore, for all $\Psi \in \Theta$:
$$
|H_i(\Psi)| = \prod_{j=1}^q \lambda_j(H_i(\Psi)) 
\ge \lambda_{\min}(H_i(\Psi))^q \ge c^q > 0.
$$
Since $|H_i(\Psi)|$ is strictly positive and continuous on $\Theta$, and the logarithm function is continuous on $(0, \infty)$, the mapping
$$\Psi \mapsto \log|H_i(\Psi)|$$
is continuous on $\Theta$.

Finally, the function $\ell_{\text{LA},i}(\Psi)$ is a linear combination of continuous functions:
$$
\ell_{\text{LA},i}(\Psi) = \underbrace{h_i(\hat{u}_i(\Psi); \Psi)}_{\text{continuous}} 
+ \underbrace{\frac{q}{2}\log(2\pi)}_{\text{constant}} 
- \frac{1}{2}\underbrace{\log|H_i(\Psi)|}_{\text{continuous}}.
$$
Therefore, $\ell_{\text{LA},i}(\Psi)$ is continuous on $\Theta$.

We now prove that there exists a random variable $G_i$ such that 
$$
\sup_{\Psi \in \Theta} |\ell_{\mathrm{LA},i}(\Psi)| \le G_i, \quad \text{and} \quad \mathbb{E}[G_i] < \infty.
$$
The maximizer $\hat u_i(\Psi) = \arg\max_{u_i}h_i(u_i; \Psi) $ satisfies the first-order condition 
$$ 
\nabla_{u_i} h_i( u_i ; \Psi) = -\Sigma^{-1} \left( u_i - \kappa w_i \right) + \sum_{j=1}^{n_i} \sum_{k=1}^{K} \gamma_{ijk}(\Psi) \frac{z_{ij} \left ( y_{ij} - x_{ij}^\top \beta_k - z_{ij}^\top u_i \right)}{\sigma_k^2} = 0,
$$
where $\gamma_{ijk}(\Psi) \in (0,1)$ and $ \sum_{k=1}^K \gamma_{ijk}=1$. Rearrange the first-order condition, and we obtain
\begin{equation}\label{th1:eq1}
    \Sigma^{-1} \hat{u}_i+\sum_{j, k} \gamma_{ij k}(\Psi) \frac{z_{ij} z_{ij}^{\top}}{\sigma_{k}^{2}} \hat{u}_i=\Sigma^{-1} \kappa w_i+\sum_{j, k} \gamma_{j k}(\Psi) \frac{z_{ij}\left(y_{ij}-x_{ij}^{\top} \beta_{k}\right)}{\sigma_{k}^{2}}.
\end{equation}
According to
$$
H_i(\Psi) =  \Sigma^{-1} + \sum_{j=1}^{n_i} \sum_{k=1}^K \frac{\pi_k(x_{ij}, z_{ij}; \alpha) \varphi_{ijk}(\hat{u}_i)}{\sum_{\ell=1}^K \pi_\ell(x_{ij}, z_{ij}; \alpha) \varphi_{ij\ell}(\hat{u}_i)} \frac{1}{\sigma_k^2} z_{ij}z_{ij}^\top,
$$
the equation (\ref{th1:eq1}) becomes 
$$
H_i(\Psi)\hat{u}_i=\Sigma^{-1} \kappa w_i+\sum_{j, k} \gamma_{j k}(\Psi) \frac{z_{ij}\left(y_{ij}-x_{ij}^{\top} \beta_{k}\right)}{\sigma_{k}^{2}}.
$$
Under Condition (A3), 
the eigenvalues of \(\Sigma\) are bounded as $c_1\le\lambda_{\min}(\Sigma)\le\lambda_{\max}(\Sigma)\le c_2$. According to Condition (A2), the variances satisfy
$0<\sigma_{\min}^2\le\sigma_k^2\le\sigma_{\max}^2<\infty$ for all $k$. It follows that
$$
H_i(\Psi)\succeq\Sigma^{-1}
\quad\Longrightarrow\quad
\lambda_{\min}(H_i(\Psi))\ge\lambda_{\min}(\Sigma^{-1})
=\frac{1}{\lambda_{\max}(\Sigma)}\ge \frac{1}{c_2},
$$
and hence
$$
\|H_i(\Psi)^{-1}\| =\frac{1}{\lambda_{\min}(H_i(\Psi))} \le c_2
$$
uniformly over \(\Psi\in\Theta\).


Taking norms on both sides of the normal equation and using the triangle inequality, we obtain
$$
\|\hat{u}_i(\Psi) \| \leq\left\|H_i(\Psi)^{-1}\right\| \left(\|\Sigma^{-1} \kappa w_i\|+ \left\|\sum_{j, k} \gamma_{j k}(\Psi) \frac{z_{ij}\left(y_{ij}-x_{ij}^{\top} \beta_{k}\right) }{\sigma_{k}^{2}}\right\|\right),
$$
Under Conditions  (A2) and (A3), there exists a constant
\(C_\kappa>0\) such that \(\|\kappa\|\le C_\kappa\), and similarly
\(\|\beta_k\|\le B\) for all \(k\), where $B:=\sup _{k}\left\|\beta_{k}\right\|<\infty$. Moreover,
\(\|\Sigma^{-1}\|\le c_1^{-1}\) and \(\sigma_k^2\ge\sigma_{\min}^2\). Hence
$$
\left\|\Sigma^{-1} \kappa w_i\right\| \leq\left\|\Sigma^{-1}\right\|\|\kappa\|\|w_i\| \leq c_{1}^{-1} C_{\kappa}\|w_i\|,
$$
we can obtain 
$$
\left\|\sum_{j, k} \gamma_{j k}(\Psi) \frac{z_{ij}\left(y_{ij}-x_{ij}^{\top} \beta_{k}\right)}{\sigma_{k}^{2}}\right\| \leq \frac{1}{\sigma_{\min }^{2}} \sum_{j=1}^{n_i}\left\|z_{ij}\right\|\left(\left|y_{ij}\right|+\max _{k}\left|x_{ij}^{\top} \beta_{k}\right|\right) \le \frac{1}{\sigma_{\min}^2}
     \sum_{j=1}^{n_i}\|z_{ij}\|\bigl(|y_{ij}|+B\|x_{ij}\|\bigr),
$$
where $\sigma_{\text {min }}^{2}>0$ is the lower bound of the variances, and $\left|x_{ij}^{\top} \beta_{k}\right| \leq\left\|x_{ij}\right\|\left\|\beta_{k}\right\| \leq B\left\|x_{ij}\right\|$.
Altogether,
$$
\|\hat u_i(\Psi)\| \leq c_{2}\left(c_{1}^{-1} C_{\kappa}\|w_i\|+\frac{1}{\sigma_{\min }^{2}} \sum_{j=1}^{n_i}\left\|z_{ij}\right\|\left(\left|y_{ij}\right|+B\left\|x_{ij}\right\|\right)\right),
$$
Using $(a + b)^2 \le 2(a^2 + b^2)$:
$$
\|\hat{u}_i(\Psi)\|^2 
\le 2c_2^2\left[\frac{C_\kappa^2}{c_1^2}\|w_i\|^2 + 
\frac{1}{\sigma_{\min}^4}\left(\sum_{j=1}^{n_i}\|z_{ij}\|(|y_{ij}| + B\|x_{ij}\|)\right)^2\right].
$$
For the second term, using the Cauchy-Schwarz inequality:
\begin{align*}
\left(\sum_{j=1}^{n_i}\|z_{ij}\|(|y_{ij}| + B\|x_{ij}\|)\right)^2 
&\le \sum_{j=1}^{n_i}\|z_{ij}\|^2\cdot \sum_{j=1}^{n_i}\left(|y_{ij}| + B\|x_{ij}\|\right)^2 \notag\\
&\le \sum_{j=1}^{n_i}\|z_{ij}\|^2 \cdot \sum_{j=1}^{n_i} 2(y_{ij}^2 + B^2\|x_{ij}\|^2) \notag\\
&= 2\sum_{j=1}^{n_i}\|z_{ij}\|^2 \cdot \sum_{j=1}^{n_i}(y_{ij}^2 + B^2\|x_{ij}\|^2). \label{eq:cauchy-schwarz-applied}
\end{align*}
Substituting:
\begin{align*}
\|\hat{u}_i(\Psi)\|^2 
&\le 2c_2^2\left[\frac{C_\kappa^2}{c_1^2}\|w_i\|^2 + 
\frac{2}{\sigma_{\min}^4}\left(\sum_{j=1}^{n_i}\|z_{ij}\|^2\right)\left(\sum_{j=1}^{n_i}(y_{ij}^2 + B^2\|x_{ij}\|^2)\right)\right] \notag\\
&\le C_1'\left[\|w_i\|^2 + \left(\sum_{j=1}^{n_i}\|z_{ij}\|^2\right)\left(\sum_{j=1}^{n_i}(y_{ij}^2 + \|x_{ij}\|^2)\right)\right], 
\end{align*}
where $C_1' := \max\{2c_2^2 C_\kappa^2/c_1^2, 4c_2^2 B^2/\sigma_{\min}^4, 4c_2^2/\sigma_{\min}^4\}$.

Consequently, we define a random variable:
$$
W_i^{(1)} := 1 + \|w_i\|^2 + \sum_{j=1}^{n_i}(|y_{ij}|^2 + \|x_{ij}\|^2 + \|z_{ij}\|^2).
$$
Since all terms are non-negative, the individual components are bounded by the envelope:
$$
\|w_i\|^2 \le W_i^{(1)}, \quad \sum_{j=1}^{n_i} \|z_{ij}\|^2 \le W_i^{(1)}, \quad \text{and} \quad \sum_{j=1}^{n_i} (|y_{ij}|^2 + \|x_{ij}\|^2) \le W_i^{(1)}.
$$
Applying these bounds to the inequality derived above:
\begin{align*}
\|\hat{u}_i(\Psi)\|^2 &\le C_1'\left[\|w_i\|^2 + \sum_{j=1}^{n_i}\|z_{ij}\|^2\cdot\sum_{j=1}^{n_i}(|y_{ij}|^2 + \|x_{ij}\|^2)\right] \\
&\le C_1' \left[ W_i^{(1)} + W_i^{(1)} \cdot W_i^{(1)} \right] \\
&= C_1' \left[ W_i^{(1)} + (W_i^{(1)})^2 \right].
\end{align*}
Since $W_i^{(1)} \ge 1$, we have $W_i^{(1)} \le (W_i^{(1)})^2$. Therefore,
$$
\|\hat{u}_i(\Psi)\|^2 \le C_1' \left[ (W_i^{(1)})^2 + (W_i^{(1)})^2 \right] = 2C_1' (W_i^{(1)})^2.
$$
Let $C_1:= 2C_1'$. We conclude that
$$
\sup_{\Psi \in \Theta} \|\hat{u}_i(\Psi)\|^2 \le C_1 (W_i^{(1)})^2.
$$

Regarding integrability, since $W_i^{(1)}$ consists of linear terms of the data, its square $(W_i^{(1)})^2$ consists of squared terms (e.g., $\|w_i\|^4, y_{ij}^4, \|x_{ij}\|^4$) and cross-products. Condition (A1) guarantees finite second moments for the response and covariates, and by the Generalized Hölder's Inequality, the expectation of the cross-products is finite provided the individual fourth moments are finite. Thus, $\mathbb{E}[(W_i^{(1)})^2] < \infty$.

In summary, under Conditions (A1), (A2) and (A3), there exists a constant $C_1 > 0$ such that
\begin{equation} \label{th1:eq2}
\sup_{\Psi \in \Theta} \|\hat{u}_i(\Psi)\|^2 \le C_1 (W_i^{(1)})^2,
\end{equation}
where $W_i^{(1)}$ is a random variable defined by linear terms, satisfying $\mathbb{E}(W_i^{(1)}) < \infty$.
Recall that
$$
h_i(u_i; \Psi) = \log\varphi(u_i; \kappa w_i, \Sigma) + 
\sum_{j=1}^{n_i} \log\left[\sum_{k=1}^K \pi_k(x_{ij}, z_{ij}; \alpha) 
\varphi(y_{ij}; x_{ij}^\top\beta_k + z_{ij}^\top u_i, \sigma_k^2)\right].
$$

We bound the random effects term and the likelihood term separately.
Recall the expression for the log-density of the Gaussian random effects:
$$
\log \varphi(u_i ; \kappa w_i, \Sigma)=-\frac{q}{2} \log (2 \pi)-\frac{1}{2} \log |\Sigma|-\frac{1}{2}(u_i-\kappa w_i)^{\top} \Sigma^{-1}(u_i-\kappa w_i) .
$$
We bound the three terms on the right-hand side uniformly over $\Theta$, using the eigenvalue bounds from Condition (A3), 
$\lambda_{\max}(\Sigma^{-1}) = 1/\lambda_{\min}(\Sigma) \le 1/c_1$, 
$$
(u_i-\kappa w_i)^{\top} \Sigma^{-1}(u_i-\kappa w_i) \leq \lambda_{\max }\left(\Sigma^{-1}\right)\|u-\kappa w_i\|^2 \leq c_1^{-1}\left(\|u\|+\|\kappa\|\|w\|\right)^2 \leq 2 c_1^{-1}\left(\|u\|^2+C_\kappa^2\|w\|^2\right).
$$
Therefore,
$$
|\log\varphi(u_i; \kappa w_i, \Sigma)| \le \frac{q}{2}\log(2\pi) + \frac{1}{2}|\log|\Sigma|| + 
\frac{1}{2} \cdot \frac{2}{c_1}(\|u\|^2 + C_\kappa^2\|w_i\|^2) \le C_2(1 + \|u_i\|^2 + \|w_i\|^2),
$$
where $C_2:= \max\{q\log(2\pi)/2, {1}/{2}\log(c_2^q), {1}/{c_1}, {C_\kappa^2}/{c_1}\}$. 
Substituting $u_i = \hat{u}_i(\Psi)$  and using equation (\ref{th1:eq2})
$$
\sup _{\Psi \subset \Theta}|\log \varphi(\hat{u}_i(\Psi) ; \kappa w_i, \Sigma)| \leq C_2\left(1+\left(C_1 W_i^{(1)}\right)^2+\|w_i\|^2\right) \leq C_3\left(1+\left(W_i^{(1)}\right)^2+\|w_i\|^2\right) .
$$


For the $j$-th observation of the $i$-th subject ($j=1, \dots, n_i$), define the contribution to the likelihood as:
$$
\ell_{ij}(\Psi) := \log \left[\sum_{k=1}^K \pi_k(x_{ij}, z_{ij}; \alpha) \varphi\left(y_{ij} ; x_{ij}^{\top} \beta_k+z_{ij}^{\top} \hat{u}_i(\Psi), \sigma_k^2\right)\right].
$$
Since $\sum_{k=1}^K \pi_k=1$ and  $\varphi(\cdot;\cdot,\cdot)$ is a density function, we have
$\sum_{k=1}^K \pi_k \varphi(y_{ij} ; x_{ij}^{\top} \beta_k+z_{ij}^{\top} \hat{u}_i(\Psi), \sigma_k^2) \leq\max _k \varphi(y_{ij} ; x_{ij}^{\top} \beta_k+z_{ij}^{\top} \hat{u}_i(\Psi), \sigma_k^2).$
The Gaussian density is bounded  by $\left(2 \pi \sigma_{\text {min }}^2\right)^{-1 / 2}$. Hence
$$
\ell_{ij}(\Psi)\leq \log \left(K \max _k \varphi(y_{ij} ; x_{ij}^{\top} \beta_k+z_{ij}^{\top} \hat{u}_i(\Psi), \sigma_k^2)\right) \leq \log K-\frac{1}{2} \log \left(2 \pi \sigma_{\min }^2\right)=: C^{+},
$$
a deterministic constant. 

For any fixed component $k$,
$$
\varphi\left(y_{ij} ; x_{ij}^{\top} \beta_k+z_{ij}^{\top} \hat{u}_i(\Psi), \sigma_k^2\right)=\frac{1}{\sqrt{2 \pi \sigma_k^2}} \exp \left(-\frac{\left(y_{ij}-x_{ij}^{\top} \beta_k+z_{ij}^{\top} \hat{u}_i(\Psi)\right)^2}{2 \sigma_k^2}\right).
$$
Thus
$$
\log \varphi\left(y_{ij} ; x_{ij}^{\top} \beta_k+z_{ij}^{\top} \hat{u}_i(\Psi), \sigma_k^2\right)=-\frac{1}{2} \log \left(2 \pi \sigma_k^2\right)-\frac{\left(y_{ij}-x_{ij}^{\top} \beta_k+z_{ij}^{\top} \hat{u}_i(\Psi)\right)^2}{2 \sigma_k^2}.
$$
Due to $\sigma_{ \min }^2\leq \sigma_k^2 \leq \sigma_{\max }^2$, $(y_{ij} - x_{ij}^\top \beta_k - z_{ij}^\top \hat{u}_i(\Psi))^2 \le 2y_{ij}^2+ 2(x_{ij}^\top \beta_k - z_{ij}^\top \hat{u}_i(\Psi))^2$,
then
$$
\log \varphi\left(y_{ij} ; x_{ij}^{\top} \beta_k+z_{ij}^{\top} \hat{u}_i(\Psi), \sigma_k^2\right) \geq-\frac{1}{2}\log(2\pi\sigma_{\max}^2)-\frac{1}{\sigma_{\min }^2}\left(y_{ij}^2 + (x_{ij}^\top \beta_k - z_{ij}^\top \hat{u}_i(\Psi))^2\right).
$$
Using the inequality $(a+b+c)^2 \le 3(a^2+b^2+c^2)$:
$$
(y_{ij} - x_{ij}^\top \beta_k - z_{ij}^\top \hat{u}_i(\Psi))^2 \le 3|y_{ij}|^2 + 3\|x_{ij}\|^2 \|\beta_k\|^2 + 3\|z_{ij}\|^2 \|\hat{u}_i(\Psi)\|^2.
$$
Substituting the bound for $\|\hat{u}_i(\Psi)\|$ derived  in equation \ref{th1:eq2}, i.e., $\|\hat{u}_i(\Psi)\|^2 \le C_1 (W_i^{(1)})^2$:
$$
(y_{ij} - x_{ij}^\top\beta_k - z_{ij}^\top \hat{u}_i(\Psi))^2 \le 3[|y_{ij}|^2 + B^2\|x_{ij}\|^2 + \|z_{ij}\|^2\|\hat{u}_i(\Psi)\|^2], 
$$

Therefore,
$$
\log \varphi\left(y_{ij} ;x_{ij}^{\top} \beta_k+z_{ij}^{\top} \hat{u}_i(\Psi), \sigma_k^2\right) 
 \geq-\frac{1}{2}\log(2\pi\sigma_{\max}^2) - \frac{3}{2\sigma_{\min}^2}[|y_{ij}|^2 + B^2\|x_{ij}\|^2 + \|z_{ij}\|^2\|\hat{u}_i(\Psi)\|^2].
$$
Since the mixture includes at least one component with a positive weight 
\begin{align*}
\ell_{ij}(\Psi)&=\log \sum_{k=1}^K \pi_k \varphi_k \geq \log \left(\min _k \pi_k \cdot \max _k \varphi_k\right) = \log\pi_{\min} + \max_k \log\varphi(y_{ij}; x_{ij}^\top\beta_k + z_{ij}^\top \hat{u}_i(\Psi), \sigma_k^2) \\
&\ge -C_3 - \frac{3}{2\sigma_{\min}^2}[|y_{ij}|^2 + B^2\|x_{ij}\|^2 + \|z_{ij}\|^2\|\hat{u}_i(\Psi)\|^2],
\end{align*}
where $C_3 := {1}/{2}\log(2\pi\sigma_{\max}^2) + |\log\pi_{\min}|$.

The mixture is at least one component times its weight, and the weights are positive and uniformly bounded below on a compact set $\Theta$ for fixed data.  Combining constants, we derive
$$
|\ell_{ij}(\Psi)| \le C_4\left(1+|y_{ij}|^2+\left\|x_{ij}\right\|^2+\left\|z_{ij}\right\|^2 \|\hat{u}_i(\Psi)\|^2\right),
$$
where $C_4 := \max\{C^+, C_3, {3B^2}/{2\sigma_{\min}^2}, {3}/{2\sigma_{\min}^2}\}$. 

Summing over all $n_i$ observations:
\begin{align*}
\left|\sum_{j=1}^{n_i} \ell_{ij}(\Psi)\right| 
&\le \sum_{j=1}^{n_i} |\ell_{ij}( \Psi)| 
\le C_4 \sum_{j=1}^{n_i} (1 + |y_{ij}|^2 + \|x_{ij}\|^2 + \|z_{ij}\|^2\|\hat{u}_i(\Psi)\|^2) \\
&= C_4\left[n_i + \sum_{j=1}^{n_i}(|y_{ij}|^2 + \|x_{ij}\|^2) + 
\|\hat{u}_i(\Psi)\|^2\sum_{j=1}^{n_i}\|z_{ij}\|^2\right],
\end{align*}
thus
$$
\sup_{\Psi \in \Theta}\left|\sum_{j=1}^{n_i} \ell_{ij}(\hat{u}_i(\Psi); \Psi)\right| 
\le C_4\left[n_i + \sum_{j=1}^{n_i}(|y_{ij}|^2 + \|x_{ij}\|^2) + 
C_1(W_i^{(1)})^2\sum_{j=1}^{n_i}\|z_{ij}\|^2\right]. 
$$
Combining the bounds 
\begin{align*}
\sup_{\Psi \in \Theta}|h_i(\hat{u}_i(\Psi); \Psi)| 
&\le \sup_{\Psi}|\log\varphi(\hat{u}_i(\Psi); \kappa w_i, \Sigma)| + 
\sup_{\Psi}\left|\sum_{j=1}^{n_i}\ell_{ij}(\hat{u}_i(\Psi); \Psi)\right| \\
&\le C_2(1 + C_1^2(W_i^{(1)})^2 + \|w_i\|^2) \\
&\quad + C_4\left[n_i + \sum_{j=1}^{n_i}(|y_{ij}|^2 + \|x_{ij}\|^2) + 
C_1(W_i^{(1)})^2\sum_{j=1}^{n_i}\|z_{ij}\|^2\right].
\end{align*}
The term $C_1(W_i^{(1)})^2 \sum_j \|z_{ij}\|^2$ can be bounded using
$$C_1(W_i^{(1)})^2 \sum_{j=1}^{n_i}\|z_{ij}\|^2 \le C_1(W_i^{(1)})^2 \cdot W_i^{(1)} = C_1(W_i^{(1)})^3.$$
Since $W_i^{(1)} \ge 1$, we have
$$
1 \le W_i^{(1)} \le (W_i^{(1)})^2 \le (W_i^{(1)})^3 .
$$

We now simplify the bound by defining an appropriate random variable $W_i^{(2)}$.
Define
$$
W_i^{(2)}:=n_i+\left(W_i^{(1)}\right)^3+\sum_{j=1}^{n_i}\left(|y_{ij}|^2+\left\|x_{ij}\right\|^2+\left\|z_{ij}\right\|^2\right)+\|w_i\|^2 .
$$
Note that by the definition of $W_i^{(1)}$,
$$
W_i^{(2)} = n_i + (W_i^{(1)})^3 + [W_i^{(1)} - 1].
$$

Therefore: 
\begin{align*}
\sup_{\Psi}|h_i(\hat{u}_i(\Psi); \Psi)| 
&\le C_2(1 + C_1(W_i^{(1)})^2 + \|w_i\|^2) + 
C_4(n_i + W_i^{(1)} + C_1(W_i^{(1)})^3) \\
&\le [C_2 + C_4] W_i^{(2)} + 
[C_2C_1 + C_4C_1] W_i^{(2)}  = C_5 \cdot W_i^{(2)},
\end{align*}
where $C_5 := (C_2 + C_4)(2 + C_1).$  

To establish that the envelope is integrable, i.e., $\mathbb{E}[W_i^{(2)}] < \infty$, we must examine the moments of the dominating terms.
Since $u_i \sim \mathcal{N}(\kappa w_i, \Sigma)$
and $\|a+b\|^6 \le 2^5(\|a\|^6 + \|b\|^6)$,
$$
\mathbb{E}[\|u_i\|^6] \le 32\|\kappa\|^6 \mathbb{E}[\|w_i\|^6] + 32\mathbb{E}[\|\tilde{u}_i\|^6],
$$
where $\tilde{u}_i \sim \mathcal{N}(0, \Sigma)$. Since $\tilde{u}_i$ is zero-mean Gaussian, all its moments are finite. 
Under Condition (A1), we have $\mathbb{E}[\|u_i\|^6] < \infty$.

Conditionally on covariates and random effects,
$$
y_{ij} = x_{ij}^\top\beta_k + z_{ij}^\top u_i + \epsilon_{ij},
$$
we have
$$
\mathbb{E}[|y_{ij}|^6] \le C \left( \mathbb{E}[\|x_{ij}\|^6 \|\beta_k\|^6] + \mathbb{E}[\|z_{ij}\|^6 \|u_i\|^6] + \mathbb{E}[\epsilon_{ij}^6] \right)<+\infty.
$$
All cross-product terms can be bounded using Hölder's inequality,
$$
\mathbb{E}[\|w_i\|^4 |y_{ij}|^2] \le \sqrt{\mathbb{E}[\|w_i\|^8]} \cdot \sqrt{\mathbb{E}[|y_{ij}|^4]} \le \mathbb{E}[\|w_i\|^6]^{2/3} \cdot \mathbb{E}[|y_{ij}|^6]^{1/3} < \infty.
$$
Therefore, $(W_i^{(1)})^3$ has a finite expectation under the strengthened moment conditions,
which implies
$$\mathbb{E}[W_i^{(2)}] < \infty. $$
Thus, there exists a constant $C_6 > 0$ and a random variable $W_i^{(2)}$ such that
$$
\sup_{\Psi \in \Theta} |h_i(\hat{u}_i, \Psi)| \le C_6 W_i^{(2)} 
$$
with  $E[W_i^{(2)}] < \infty.$
The Hessian is given by
$
H_i(\Psi)=\Sigma^{-1}+\sum_{j=1}^n \sum_{k=1}^K \gamma_{j k}(\Psi) {z_{ij} z_{ij}^{\top}}/{\sigma_k^2},
$
where $\gamma_{ijk}(\Psi) \in (0,1)$ and $\sum_{k=1}^K \gamma_{ijk}(\Psi) = 1$.
Since $H_i(\Psi)$  is positive definite, by Condition (A3), the eigenvalues are bounded  by
$$
\lambda_{\min }\left(H_i(\Psi)\right) \geq \lambda_{\min }\left(\Sigma^{-1}\right)=1 / \lambda_{\max }(\Sigma) \geq \frac{1}{c_2} :=\underline{\lambda}>0.
$$
By the sub-additivity of the maximum eigenvalue for positive definite matrices:
\begin{align*}
\lambda_{\max}(H_i(\Psi)) 
&\le \lambda_{\max}(\Sigma^{-1}) + \lambda_{\max}\left(\sum_{j=1}^{n_i}\sum_{k=1}^K 
\gamma_{ijk}(\Psi) \frac{z_{ij}z_{ij}^\top}{\sigma_k^2}\right) \notag\\
&\le \lambda_{\max}(\Sigma^{-1}) + \sum_{j=1}^{n_i}\sum_{k=1}^K \gamma_{ijk}(\Psi) 
\frac{\lambda_{\max}(z_{ij}z_{ij}^\top)}{\sigma_k^2} \notag\\
&= \lambda_{\max}(\Sigma^{-1}) + \sum_{j=1}^{n_i}\sum_{k=1}^K \gamma_{ijk}(\Psi) 
\frac{\|z_{ij}\|^2}{\sigma_k^2}.
\end{align*}
Since $\sum_{k=1}^K \gamma_{ijk}(\Psi) = 1$ and $\sigma_k^2 \ge \sigma_{\min}^2$,
$$
\lambda_{\max }\left(H_i(\Psi)\right) \leq \lambda_{\max }\left(\Sigma^{-1}\right)+\sum_{j, k} \gamma_{j k}(\Psi) \frac{\lambda_{\max }\left(z_{ij} z_{ij}^{\top}\right)}{\sigma_k^2} \leq c_1^{-1}+\sigma_{\min }^2 \sum_{j=1}^{n_i}\left\|z_{ij}\right\|^2:=\bar{\lambda}_i
$$
Thus, the eigenvalues of $H_i(\Psi)$ satisfy
$$
\underline{\lambda} \leq \lambda_r\left(H_i(\Psi)\right) \leq \bar{\lambda}_i, \quad r=1, \cdots, q.
$$
Consequently, since the determinant is the product of eigenvalues,
$$
\left|H_i(\Psi)\right|=\prod_{r=1}^q \lambda_r\left(H_i(\Psi)\right) \in\left[\underline{\lambda}^q, \bar{\lambda}_i^q\right].
$$
Since $|H_i(\Psi)| > 0$, we have 
$$
|\log | H_i(\Psi) \| \leq \max \left\{q|\log \underline{\lambda}|, q\left|\log \bar{\lambda}_i\right|\right\}.
$$
For the upper bound term,
$$
\bar{\lambda}_i = \frac{1}{c_1} + \frac{1}{\sigma_{\min}^2}\sum_{j=1}^{n_i}\|z_{ij}\|^2 
= \frac{1}{c_1}\left(1 + \frac{c_1}{\sigma_{\min}^2}\sum_{j=1}^{n_i}\|z_{ij}\|^2\right).
$$
Therefore,
$$
q\log\bar{\lambda}_i  = q\log\left[\frac{1}{c_1}\left(1 + \frac{c_1}{\sigma_{\min}^2}\sum_{j=1}^{n_i}\|z_{ij}\|^2\right)\right] = -q\log c_1 + q\log\left(1 + \frac{c_1}{\sigma_{\min}^2}\sum_{j=1}^{n_i}\|z_{ij}\|^2\right).
$$
Using the elementary inequality $\log (1+t) \leq t$ for $t \geq 0$,
$$
q\log\bar{\lambda}_i \le q|\log c_1| + q\cdot \frac{c_1}{\sigma_{\min}^2}\sum_{j=1}^{n_i}\|z_{ij}\|^2.
$$
Thus,
$$
|\log|H_i(\Psi)|| \le \max\left\{q\log c_2, q|\log c_1| + \frac{qc_1}{\sigma_{\min}^2}\sum_{j=1}^{n_i}\|z_{ij}\|^2\right\} \le C_6\left(1 + \sum_{j=1}^{n_i}\|z_{ij}\|^2\right),
$$
where $C_6 := \max\left\{q\log c_2, q|\log c_1|,{qc_1}/{\sigma_{\min}^2}\right\}.$

Define $W_i^{(3)} := 1 + \sum_{j=1}^{n_i}\|z_{ij}\|^2$.
We can derive 
$$
\sup_{\Psi \in \Theta}|\log|H_i(\Psi)|| \le C_6 W_i^{(3)}.
$$
By Condition (A1),
$$
\mathbb{E}[W_i^{(3)}] = 1 + \mathbb{E} [\sum_{j=1}^{n_i}\|z_{ij}\|^2] 
 < \infty.
$$
There exists a constant $C_6 > 0$ such that
$$
\sup_{\Psi \in \Theta} |\log|H_i(\Psi)|| \le C_6 W_i^{(3)}, 
$$
with  $\mathbb{E}[W_i^{(3)}] < \infty$.

Combine the bounds from the above proof to construct an integrable envelope for $\{\ell_{\text{LA}, i}(\Psi): \Psi \in \Theta\}$.
Recall the Laplace-approximated log-likelihood
$$
\ell_{\text{LA},i}(\Psi) = h_i(\hat{u}(\Psi), \Psi) + \frac{q}{2} \log(2\pi) - \frac{1}{2} \log |H_i(\Psi)|.
$$
Using the triangle inequality,
$$
|\ell_{\text{LA},i}(\Psi)| \le |h_i(\hat{u}_i(\Psi); \Psi)| + \frac{q}{2}|\log(2\pi)| + 
\frac{1}{2}|\log|H_i(\Psi)||. 
$$
Taking supremum over $\Psi \in \Theta$,
$$
\sup_{\Psi \in \Theta} |\ell_{\text{LA},i}(\Psi)| \leq \sup_{\Psi} |h_i(\hat{u}(\Psi), \Psi)| + \frac{q}{2} \log(2\pi) + \frac{1}{2} |\sup_{\Psi} |\log| H_i(\Psi)||.
$$
According to
$\sup_{\Psi \in \Theta}|h_i(\hat{u}_i(\Psi); \Psi)| \le C_6 W_i^{(2)}$
and 
$\sup_{\Psi \in \Theta}|\log|H_i(\Psi)|| \le C_7 W_i^{(3)}$,
we obtain
$$
\sup_{\Psi \in \Theta}|\ell_{\text{LA},i}(\Psi)| 
\le C_6 W_i^{(2)} + \frac{q}{2}\log(2\pi) + \frac{1}{2}C_7 W_i^{(3)}.
$$
Define the integrable envelope
$$
G_i :=  C_6 W_i^{(2)} + \frac{1}{2} C_7 W_i^{(3)} + \frac{q}{2} |\log(2\pi)|.
$$
Then for all $\Psi \in \Theta$, $|\ell_{\text{LA},i}(\Psi)| \le \sup_{\Psi \in \Theta}|\ell_{\text{LA},i}(\Psi)| \le G_i$.
Because $W_i^{(2)}$ and $W_i^{(3)}$ have finite expectations,
$$
\mathbb{E}[G_i] \leq C_6 \mathbb{E}[W_i^{(2)}] + \frac{1}{2} C_7 \mathbb{E}[W_i^{(3)}] + \frac{q}{2} |\log(2\pi)| < \infty.
$$
This establishes that $G_i$ is an integrable envelope for the class 
$\{\ell_{\text{LA},i}(\Psi): \Psi \in \Theta\}$.

Since the parameter space $\Theta$ is compact, and  $\ell_{\mathrm{LA}, i}(\Psi)$ are continuous in $\Psi$ and dominated by an integrable envelope $G_i$, the Glivenko–Cantelli property holds. By the Glivenko-Cantelli theorem for continuous functions on compact spaces,
$$
\sup_{\Psi \in \Theta}\left|\frac{1}{N}\sum_{i=1}^N \ell_{\text{LA},i}(\Psi) - 
\mathbb{E}[\frac{1}{N}\sum_{i=1}^N \ell_{\text{LA},i}(\Psi)]\right| \xrightarrow{p} 0 \quad \text{as } N \to \infty.
$$
Equivalently, in terms of the objective functions,
$$
\sup_{\Psi \in \Theta}|M_N(\Psi) - M(\Psi)| \xrightarrow{p} 0 \quad \text{as } N \to \infty.
$$
This completes the proof of uniform convergence. 
\end{proof}


\paragraph{Proposition 2}

Under Conditions (A2) and (A4), the true parameter $\Psi_0$ is the unique maximizer of the limiting objective function $M(\Psi)$. Specifically, for any $\epsilon > 0$, there exists a constant $\Delta(\epsilon) > 0$ such that:
$$\sup_{\|\Psi-\Psi_0\|\ge \epsilon} M(\Psi) \le M(\Psi_0) - \Delta(\epsilon).$$

This means the peak of the population objective function at $\Psi_0$ is strictly higher than its value anywhere outside an $\epsilon$-neighborhood of $\Psi_0$

\begin{proof}
    
Let $\hat\Psi$ be any value that maximizes $M_N(\Psi)$. By definition, $M_N(\hat\Psi) \ge M_N(\Psi)$ for all $\Psi \in \Theta$. In particular, $M_N(\hat\Psi) \ge M_N(\Psi_0)$.

Fix an arbitrary $\epsilon > 0$. According to Proposition 1, the well-separated maximum condition guarantees the existence of a constant $\Delta(\epsilon) > 0$ such that $M(\Psi) \le M(\Psi_0) - \Delta(\epsilon)$ whenever $\|\Psi - \Psi_0\| \ge \epsilon$. 
The Uniform Law of Large Numbers states that for any $\delta > 0$, we can find a sample size $N_\delta$ such that for all $N \ge N_\delta$, the event ${M}_N = \{\sup_{\Psi\in\Theta}|M_N(\Psi)-M(\Psi)| < \Delta(\epsilon)/2\}$ occurs with a probability of at least $1-\delta$.

First, by applying the triangle inequality at the true parameter $\Psi_0$, we have $M_N(\Psi_0) > M(\Psi_0) - \Delta(\epsilon)/2$. Second, for any parameter value $\Psi$ outside the $\epsilon$-ball around $\Psi_0$ (i.e., where $\|\Psi - \Psi_0\| \ge \epsilon$), we have $M_N(\Psi) < M(\Psi) + \Delta(\epsilon)/2$. Using the separation property, this becomes 
$$
M_N(\Psi) < (M(\Psi_0) - \Delta(\epsilon)) + \Delta(\epsilon)/2 = M(\Psi_0) - \Delta(\epsilon)/2.
$$
Combining these two inequalities on the event $\mathcal{E}_N$ reveals a crucial relationship:
$$
\sup_{\|\Psi-\Psi_0\|\ge \epsilon} M_N(\Psi) < M(\Psi_0) - \frac{\Delta(\epsilon)}{2} < M_N(\Psi_0).
$$
This chain of inequalities shows that the value of the sample objective function anywhere outside the $\epsilon$-ball around $\Psi_0$ is strictly less than its value at $\Psi_0$. Since $\hat\Psi$ is the maximizer of $M_N(\Psi)$, it must achieve a value at least as large as $M_N(\Psi_0)$. Therefore, $\hat\Psi$ cannot possibly lie outside the $\epsilon$-ball. This means that on the event $\mathcal{E}_N$, the estimator $\hat\Psi$ must be within a distance $\epsilon$ of $\Psi_0$.
\end{proof}

From Proposition 1 and Proposition 2, this implies that  $\{\|\hat\Psi - \Psi_0\| \ge \epsilon\}$ is a subset of the complement of $\mathcal{E}_N$, such $\{\|\hat{\Psi}_N - \Psi_0\| \ge \epsilon\} \subseteq \mathcal{E}_N^c$. Consequently, for all $N \ge N_\delta$, its probability is bounded: 
$$
P(\|\hat\Psi - \Psi_0\| \ge \epsilon) \le P(\mathcal{E}_N^c) < \delta. 
$$
Since $\delta$ can be made arbitrarily small by choosing a large enough $N$, we have shown that for any $\epsilon > 0$, $\lim_{N\to\infty} P(\|\hat\Psi - \Psi_0\| \ge \epsilon) = 0$. 

This is the definition of convergence in probability, which completes the proof of consistency.

\section*{Proof of Theorem 2 }

\subsection{ Proof of Lemma 1}
To prove Theorem 2, we provide the following Lemma.
\begin{lemma}\label{ui}
For subject $i$, let the random effects mode be $\hat{u}_i := \arg\max_u \ell_i(u)$. Under the conditions (B1)-(B3), When the number of observations $n_i \to \infty$, for subject $i$,
$$\|\hat{u}_i - u_i\| = O_p(n_i^{-1/2}).$$
\end{lemma}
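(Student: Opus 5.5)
The plan is a standard M-estimation argument for the mode of $h_i(\cdot;\Psi)$, here written $\ell_i(u)$, treating $u_i$ as fixed and conditioning on it. I expand the score around the true random effect $u_i$ and use the linear growth of the curvature in (B3) together with a $\sqrt{n_i}$ bound on the score at $u_i$.

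\emph{Step 1 (score at the truth).} Write
\[
S_i(u):=\nabla_u h_i(u;\Psi)=-\Sigma^{-1}(u-\kappa w_i)+\sum_{j=1}^{n_i}\sum_{k=1}^K\gamma_{ijk}(u)\,\frac{z_{ij}\,e_{ijk}(u)}{\sigma_k^2}.
\]
At $u=u_i$, conditional on $u_i$ and the covariates, the summands $s_{ij}:=\sum_k\gamma_{ijk}(u_i)z_{ij}e_{ijk}(u_i)/\sigma_k^2$ are independent across $j$. Each $s_{ij}$ is the derivative in the location of the log mixture density of $y_{ij}$ given $u_i$, i.e.\ a score. It therefore has conditional mean zero under the model (differentiate $\int f\,dy_{ij}=1$ in $u$). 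Its conditional variance is bounded by $C\|z_{ij}\|^2$ times the mixture Fisher information, which is bounded since $\sigma_k^2\ge\sigma_{\min}^2$. By (B1) and the moments in (A1), $\mathbb{E}\|\sum_j s_{ij}\|^2=O(n_i)$. The prior term $\Sigma^{-1}(u_i-\kappa w_i)$ is $O_p(1)$ by (A3). Chebyshev's inequality then gives $\|S_i(u_i)\|=O_p(n_i^{1/2})$.

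\emph{Step 2 (localization and curvature).} By (B3), $h_i$ is strictly concave near $u_i$ and $\hat u_i$ is the unique mode in that neighbourhood. Taylor's theorem with integral remainder gives
\[
0=S_i(\hat u_i)=S_i(u_i)-\bar H_i(\hat u_i-u_i),\qquad \bar H_i=\int_0^1 H_i(u_i+t(\hat u_i-u_i);\Psi)\,dt .
\]
I want $\lambda_{\min}(\bar H_i)\gtrsim n_i$. (B3) gives $\lambda_{\min}\ge\lambda_0+c_1n_i$ at $\hat u_i$. Transferring this along the segment uses (B4): the Hessian is Lipschitz with constant $O(n_i)$, so on the segment $H_i\succeq (\lambda_0+c_1n_i)I-Ln_i\|\hat u_i-u_i\|I$. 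This remains $\ge c_1n_i/2$ once $\|\hat u_i-u_i\|\le c_1/(2L)$.

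\emph{Step 3 (rate).} On that event, $\|\hat u_i-u_i\|\le\|\bar H_i^{-1}\|\,\|S_i(u_i)\|\le (2/(c_1n_i))\,O_p(n_i^{1/2})=O_p(n_i^{-1/2})$.

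The main obstacle is the preliminary consistency $\|\hat u_i-u_i\|=o_p(1)$ needed to enter the curvature region, since the mixture log-likelihood is not globally concave. I would handle it by a concavity/ball argument applied directly. On the sphere $\|u-u_i\|=r$ with $r=c_1/(2L)$, a second-order expansion gives
\[
h_i(u)-h_i(u_i)\le r\|S_i(u_i)\|-\tfrac{c_1n_i}{4}r^2 ,
\]
and the right-hand side is negative with probability tending to one because $\|S_i(u_i)\|=O_p(\sqrt{n_i})$. Hence a local maximizer lies inside the ball. The uniqueness of the mode in (B3) then identifies it with $\hat u_i$.

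If applying (B3) at $\hat u_i$ rather than at $u_i$ creates circularity, I would instead use the curvature bound at $u_i$, approximating $\sum_j\gamma_{ijk}z_{ij}z_{ij}^\top/\sigma_k^2$ by its expectation using (B1) and a law of large numbers. I would then run the same argument.
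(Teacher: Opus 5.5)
Your rate argument matches the paper's. You use the same expansion $\hat u_i-u_i=\bar H_i^{-1}S_i(u_i)$ of the score about the true $u_i$, and the same mean-zero, variance-$O(n_i)$, Chebyshev bound $\|S_i(u_i)\|=O_p(n_i^{1/2})$. You also use the same $O(n_i^{-1})$ bound on the inverse Hessian from linear eigenvalue growth. Your integral-form remainder is, if anything, more correct than the paper's single intermediate point $\bar u_i$ for a vector-valued map. The real difference is localization. The paper assumes $\bar u_i$ lies in the curvature neighbourhood and then argues, circularly, that the resulting rate justifies this. Your ball argument is the right way to break that circle, but three points need repair.

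\emph{Curvature on the ball.} The sphere bound $h_i(u)-h_i(u_i)\le r\|S_i(u_i)\|-\tfrac{c_1n_i}{4}r^2$ needs $-\nabla^2h_i\succeq\tfrac{c_1n_i}{2}I$ on all of $B(u_i,r)$. (B3) gives curvature only at $\hat u_i$. Transporting it to $B(u_i,r)$ via (B4) presupposes $\|\hat u_i-u_i\|\lesssim r$, which is what you are proving. So your primary route inherits the paper's circularity, and your fallback (curvature at $u_i$ itself) is mandatory, not optional.

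\emph{The fallback bounds the wrong matrix.} You must bound the exact negative Hessian
\[
-\nabla^2h_i(u)=\Sigma^{-1}+\sum_{j=1}^{n_i}z_{ij}z_{ij}^\top\Big\{\sum_{k}\frac{\gamma_{ijk}}{\sigma_k^2}-\mathrm{Var}_{\gamma_{ij\cdot}}\Big(\frac{e_{ijk}}{\sigma_k^2}\Big)\Big\},
\]
where $\mathrm{Var}_{\gamma_{ij\cdot}}$ is the variance over $k$ under weights $\gamma_{ijk}$. The matrix $\Sigma^{-1}+\sum_{j,k}\gamma_{ijk}z_{ij}z_{ij}^\top/\sigma_k^2$ dominates it from above, so a law of large numbers for that matrix gives no lower bound on the exact one; individual summands of the exact Hessian can be negative. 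What works is the information identity. At $u=u_i$ each summand has conditional mean $z_{ij}z_{ij}^\top\mathcal{I}_{ij}$, where $\mathcal{I}_{ij}>0$ is the location Fisher information of the mixture law of $y_{ij}$ given $u_i$. Then (B1) plus a law of large numbers over $j$ gives $\lambda_{\min}\{-\nabla^2h_i(u_i)\}\ge cn_i$ with probability tending to one. (B4) then extends this to a fixed ball $B(u_i,r)$.

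\emph{Identifying the global mode.} The ball argument only produces a local maximizer inside $B(u_i,r)$. Since $h_i$ is not globally concave, identifying it with the global $\hat u_i=\arg\max_u h_i(u)$ needs one of two things. Either read (B3) as unimodality of $h_i$, which is how the paper's remark after (B3) reads it. Or give a separate argument that $\sup_{\|u-u_i\|\ge r}\{h_i(u)-h_i(u_i)\}<0$ with probability tending to one.

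Finally, you rely on (B4), which the lemma does not list among its hypotheses. This is harmless in the setting of Theorem 2, where (B4) is assumed. The paper's own proof likewise tacitly uses a neighbourhood version of the eigenvalue bound that (B3) does not state.
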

\begin{proof}
Let $\hat{u}_i$ be the maximizer of  $h_i(u)$:
$$\hat{u}_i = \arg\max_{u \in \mathbb{R}^q} h_i(u).$$
By definition, the  mode $\hat{u}_i$ is a stationary point of the log-posterior function; it must satisfy the first-order optimality condition:$\nabla_u h_i(\hat{u}_i)  =0.$
The second-order Taylor expansion of this score vector around the true value $u_i$:
$$
0 = \nabla_u \ell_i(\hat{u}_i) = \nabla_u \ell_i(u_i) + \nabla_u^2 \ell_i(\bar{u}_i) (\hat{u}_i - u_i),
$$
where $\bar{u}_i$ is a point on the line segment between $\hat{u}_i$ and $u_i$. Let $s_i(u_i):= \nabla_u \ell_i(u_i)$ denote the score vector evaluated at the true value, and let $H_i(u):= -\nabla_u^2 \ell_i(u)$ be the negative Hessian matrix. The expansion can be rewritten as:
$$0 = s_i(u_i) - H_i(\bar{u}_i) (\hat{u}_i - u_i).$$
Rearranging this equation gives a fundamental expression for the estimation error:
$$\hat{u}_i - u_i = H_i(\bar{u}_i)^{-1} s_i(u_i).$$

Under the assumption of a correctly specified model, the conditional expectation of the score, given the true random effect $u_i$ and covariates, is zero. The variance of the score, being a sum of approximately independent, mean-zero terms, scales linearly with the number of observations. Thus, we have:
$$
\mathbb{E}[s_i(u_i) | u_i, x_{ij}, z_{ij}] = 0 \quad \text{and} \quad \text{Var}(s_i(u_i)) = O(n_i).
$$
According to Chebyshev's inequality, a mean-zero random vector with variance of order $O(n_i)$ has a stochastic magnitude of order $O_p(\sqrt{n_i})$. Therefore,  $\|s_i(u_i)\| = O_p(n_i^{1/2}).$ 

First, Condition (B1) directly provides the key property of the Hessian: its eigenvalues grow linearly with $n_i$. The norm of the inverse of a symmetric matrix is the reciprocal of its smallest eigenvalue. There exists a constant$c_h>0$ such that$\lambda_{\min}\{H_i(u)\} \ge c_h n_i$, so we have:
$\|H_i(u)^{-1}\| \le \frac{1}{c_h n_i}.$
So the inverse Hessian shrinks at a rate of $O(n_i^{-1})$ for any $u$ in the specified neighborhood.

Assuming for a moment that $\bar{u}_i$ is in the neighborhood, we can bound the error:
$$\|\hat{u}_i - u_i\| = \|H_i(\bar{u}_i)^{-1} s_i(u_i)\| \le \|H_i(\bar{u}_i)^{-1}\| \cdot \|s_i(u_i)\| = O_p(n_i^{-1}) \cdot O_p(n_i^{1/2}) = O_p(n_i^{-1/2}).
$$
This preliminary result shows that the error $\|\hat{u}_i - u_i\|$ converges to zero in probability as $n_i \to \infty$. Since $\bar{u}_i$ lies between $\hat{u}_i$ and $u_i$, it must also converge to $u_i$. Therefore, for any fixed neighborhood around $u_i$, $\bar{u}_i$ will eventually lie within it with probability approaching one. This validates the use of the Hessian bound at the point $\bar{u}_i$.
\end{proof}

\subsection{Proof of Theorem 2}
\begin{proof}

Suppose Conditions (A1)--(A5) and (B1)--(B4) hold, and let $\Psi_0=(\beta_0^\top,\theta_0^\top)^\top$ denote the true parameter, where $\beta_0=(\beta_{1,0}^\top,\dots,\beta_{K,0}^\top)^\top$. By Lemma \ref{ui}, we can replace 
$\hat{u}_i(\Psi_0)$ with $u_i$ in the score, enabling application of 
the central limit theorem.

For a fixed $k\in\{1,\dots,K\}$, define the subject-level score and Hessian blocks
$$
U_{i,\beta_k}(\Psi):=\frac{\partial   \ell_{{\rm LA},i}(\Psi)}{\partial\beta_k}
,
\qquad
H_{i,\beta_k}(\Psi):=
-\,\frac{\partial^2  \ell_{{\rm LA},i}(\Psi)}{\partial\beta_k\partial\beta_k^\top}.$$

Then $\mathbb E\big[H_{i,\beta_k}(\Psi_0)\big]^{-1} $ is positive definite and the Laplace–MLE $\hat\beta_k$ satisfies
$$
\sqrt{N}\,(\hat\beta_k-\beta_{k,0})
\ \xrightarrow{d}\
\mathcal N\big(0,\ V_{\beta_k}\big),
$$
where
$$
V_{\beta_k}=\mathbb E\big[H_{i,\beta_k}(\Psi_0)\big]^{-1} \mathbb E\big[U_{i,\beta_k}(\Psi_0)\,U_{i,\beta_k}(\Psi_0)^\top\big]\mathbb E\big[H_{i,\beta_k}(\Psi_0)\big]^{-1}.
$$

Our model's Laplace log likelihood is 
$$
\ell_{\rm LA}(\Psi) = \sum_{i=1}^N \left\{ h_i(\hat{u}_i; \Psi) + \frac{q}{2}\log(2\pi) - \frac{1}{2} \log|{H}_i| \right\}, 
$$
where
$$
h_i(\hat{u}_i; \Psi) = \log \varphi(\hat{u}_i; \kappa\omega_i, \Sigma) + \sum_{j=1}^{n_i} \log \left[ \sum_{k=1}^{K} \pi_k(x_{ij}, z_{ij}; \alpha) \varphi\left(y_{ij}; x_{ij}^\top\beta_k + z_{ij}^\top \hat{u}_i, \sigma_k^2\right) \right].
$$
Define 
$$
\gamma_{ij,k}(\Psi, \hat{u}_i) = \frac{\pi_k(x_{ij}, z_{ij}; \alpha) \varphi_{ij,k}}{\sum_{m=1}^{K} \pi_m(x_{ij}, z_{ij}; \alpha) \varphi_{ij,m}}.
$$
We obtain 
\begin{align*}
\left.\frac{\partial \ell_{\rm LA}}{\partial \beta_k}\right|_{u=\hat u} &=\sum_{i=1}^N \left.\frac{\partial h_i}{\partial \beta_k}\right|_{u=\hat u} = \sum_{i=1}^N \left( \sum_{j=1}^{n_i} \frac{\pi_k(x_{ij}, z_{ij}; \alpha) \varphi_{ij,k}}{\sum_{m=1}^{K} \pi_m(x_{ij}, z_{ij}; \alpha) \varphi_{ij,m}} \frac{y_{ij} -  x_{ij}^\top\beta_k + z_{ij}^\top \hat{u}_i }{\sigma_k^2} x_{ij} \right)  \\
&= \sum_{i=1}^N  \sum_{j=1}^{n_i} \gamma_{ij,k} \frac{y_{ij} -  x_{ij}^\top\beta_k + z_{ij}^\top \hat{u}_i }{\sigma_k^2} x_{ij}.
\end{align*}
According to
$$
\frac{\partial^2 \mathcal{L}}{\partial \beta \partial \beta} = \frac{\partial^2 h}{\partial \beta \partial \beta} - \frac{\partial^2 h}{\partial \beta \partial \hat{u}} \left( \frac{\partial^2 h}{\partial \hat{u} \partial \hat{u}^T} \right)^{-1} \frac{\partial^2 h}{\partial \hat{u} \partial \beta}, 
$$ 
 the second derivative of $\ell_{\rm LA}$ is
\begin{align*}
\frac{\partial^2 \ell_{\rm LA}}{\partial \beta_k \partial \beta_k^\top}
&= \sum_{i=1}^{N}\sum_{j=1}^{n_i}
\Biggl[
-\frac{\gamma_{ijk}}{\sigma_k^2}\, x_{ij} x_{ij}^\top
+ \frac{\gamma_{ijk}(1-\gamma_{ijk})}{\sigma_k^4}
\Bigl(y_{ij} - x_{ij}^\top\beta_k - z_{ij}^\top \hat{u}_i\Bigr)^2
\, x_{ij} x_{ij}^\top
\notag\\
&\qquad\qquad
- \frac{\gamma_{ijk}}{\sigma_k^2}\,
x_{ij} z_{ij}^\top [H_i]^{-1}
\sum_{j'=1}^{n_i} \frac{\gamma_{ij'k}}{\sigma_k^2}\,
z_{ij'} x_{ij'}^\top
\Biggr],
\end{align*}
where 

\begin{equation*}
\centering
\begin{aligned}
\frac{\partial^2 h_i}{\partial \beta_k \partial \beta_k^\top}
&= -\sum_{j=1}^{n_i} \frac{x_{ij}x_{ij}^\top}{\sigma_k^2}
\Biggl[
\gamma_{ijk}
- \gamma_{ijk}(1-\gamma_{ijk})
\frac{\bigl(y_{ij}-x_{ij}^\top\beta_k-z_{ij}^\top \hat{u}_i\bigr)^2}{\sigma_k^2}
\Biggr],\\[0.9em]
\frac{\partial^2 h_i}{\partial u_i \partial u_i^\top}
&= -\Sigma^{-1}
+ \sum_{j=1}^{n_i} z_{ij}z_{ij}^\top
\Biggl\{
\Biggl[
\sum_{k=1}^K \gamma_{ijk}
\left(\frac{y_{ij}-x_{ij}^\top\beta_k-z_{ij}^\top \hat{u}_i}{\sigma_k^2}\right)^{\!2}\\
&\qquad\qquad
-\left(
\sum_{k=1}^K \gamma_{ijk}
\frac{y_{ij}-x_{ij}^\top\beta_k-z_{ij}^\top \hat{u}_i}{\sigma_k^2}
\right)^{\!2}
\Biggr]
-\sum_{k=1}^K \frac{\gamma_{ijk}}{\sigma_k^2}
\Biggr\},
\end{aligned}
\end{equation*}

\begin{equation*}
\centering
\begin{aligned}
\frac{\partial^2 h_i}{\partial \beta_k \partial u_i^\top}
&= \sum_{j=1}^{n_i} \frac{\gamma_{ijk}}{\sigma_k^2}
\Biggl[
\frac{\bigl(y_{ij}-x_{ij}^\top\beta_k-z_{ij}^\top \hat{u}_i\bigr)^2}{\sigma_k^2}\\
&\qquad
-\bigl(y_{ij}-x_{ij}^\top\beta_k-z_{ij}^\top \hat{u}_i\bigr)
\left(
\sum_{\ell=1}^K \frac{\gamma_{ij\ell}\bigl(y_{ij}-x_{ij}^\top\beta_\ell-z_{ij}^\top \hat{u}_i\bigr)}{\sigma_\ell^2}
\right)
-1
\Biggr]\,
x_{ij} z_{ij}^\top,\\[0.9em]
\frac{\partial^2 h_i}{\partial u_i \partial \beta_k^\top}
&= \sum_{j=1}^{n_i} \frac{\gamma_{ijk}}{\sigma_k^2}
\Biggl[
\frac{\bigl(y_{ij}-x_{ij}^\top\beta_k-z_{ij}^\top \hat{u}_i\bigr)^2}{\sigma_k^2}\\
&\qquad
-\bigl(y_{ij}-x_{ij}^\top\beta_k-z_{ij}^\top \hat{u}_i\bigr)
\left(
\sum_{\ell=1}^K \frac{\gamma_{ij\ell}\bigl(y_{ij}-x_{ij}^\top\beta_\ell-z_{ij}^\top \hat{u}_i\bigr)}{\sigma_\ell^2}
\right)
-1
\Biggr]\,
z_{ij} x_{ij}^\top.
\end{aligned}
\end{equation*}
We assume that
$$s_{ijk} := \gamma_{ijk}- 
\gamma_{ijk}\frac{(y_{ij} -  x_{ij}^\top\beta_k - z_{ij}^\top \hat{u}_i)^2}{\sigma_k^2} + \gamma_{ijk}(y_{ij} -  x_{ij}^\top\beta_k - z_{ij}^\top \hat{u}_i) \left( \sum_{k=1}^K \frac{\gamma_{ijk} (y_{ij} -  x_{ij}^\top\beta_k - z_{ij}^\top \hat{u}_i)}{\sigma_k^2} \right),$$
$$
t_{ijk} := \sum_{k=1}^K \sum_{l=1}^K \gamma_{ijk}  \gamma_{ijl}\left[\left(\frac{{y_{ij} -  x_{ij}^\top\beta_k - z_{ij}^\top \hat{u}_i }}{\sigma_k^2}\right) - \left(\frac{{y_{ij} -  x_{ij}^\top\beta_l - z_{ij}^\top \hat{u}_i }}{\sigma_l^2}\right)\right]^2,
$$
$$q_{ijk} :=\gamma_{ijk} - \gamma_{ijk}(1-\gamma_{ijk}) \frac{({y_{ij} -  x_{ij}^\top\beta_k - z_{ij}^\top \hat{u}_i})^2}{\sigma_k^2}.
$$ 
Therefore, we can rewrite them as follows:
\begin{eqnarray*}
\frac{\partial^2 h_i}{\partial \beta_k \partial \beta_k^\top} &=& -\sum_{j=1}^{n_i} q_{ijk}\frac{x_{ij}x_{ij}^\top}{\sigma_k^2},
\quad
\frac{\partial^2 h_i}{\partial u_i \partial u_i^\top}  = - \sum_{j=1}^{n_i} \sum_{k=1}^{K} t_{ijk}\frac{z_{ij}z_{ij}^\top}{\sigma_k^2},\\
\frac{\partial^2 h_i}{\partial \beta_k \partial u_i^\top}& =&  - \sum_{j=1}^{n_i} \sum_{k=1}^{K} s_{ijk}\frac{x_{ij}z_{ij}^\top}{\sigma_k^2}, \quad
\frac{\partial^2 h_i}{\partial u_i \partial \beta_k^\top} =  - \sum_{j=1}^{n_i} \sum_{k=1}^{K} s_{ijk}\frac{z_{ij}x_{ij}^\top}{\sigma_k^2}. 
\end{eqnarray*}
Apply the uniform law of large numbers, for any compact neighborhood $\mathcal{N}(\Psi_0) \subset \Theta$:
$$
\sup_{\Psi \in \mathcal{N}(\Psi_0)} \left\|\frac{1}{N}\sum_{i=1}^N H_{i,\beta_k}(\Psi) - E[H_{i,\beta_k}(\Psi_0)]\right\| \xrightarrow{p} 0.
$$
We assume $ r_{ijk} = y_{ij} -  x_{ij}^\top\beta_k - z_{ij}^\top \hat{u}_i$. Differentiate $q_{ijk}$ with respect to $\beta_k$:
\begin{align*}
\frac{\partial q_{ijk}}{\partial \beta_k} &= \frac{\partial \gamma_{ijk}}{\partial \beta_k } - \frac{\partial \gamma_{ijk}}{\partial \beta_k } (1-\gamma_{ijk})\frac{{r_{ijk}}^2}{\sigma_k^2} - \gamma_{ijk}  \frac{ \partial (1-\gamma_{ijk} )}{\partial \beta_k } \frac{{r_{ijk}}^2}{\sigma_k^2}   - \gamma_{ijk}(1-\gamma_{ijk})  \frac{ \partial }{\partial \beta_k } \frac{{r_{ijk}}^2}{\sigma_k^2} \\
&= \frac{\partial \gamma_{ijk}}{\partial \beta_k} \left[ 1 - \frac{(1-\gamma_{ijk}){r_{ijk}}^2}{\sigma_k^2} \right] + \gamma_{ijk}\frac{\partial \gamma_{ijk}}{\partial \beta_k}\frac{{r_{ijk}}^2}{\sigma_k^2}  -  \frac{2\gamma_{ijk}(1-\gamma_{ijk})r_{ijk}}{\sigma_k^2} x_{ij} \\
& =   \frac{\partial \gamma_{ijk}}{\partial \beta_k} \left[ 1 - \frac{ (1- 2 \gamma_{ijk} )r_{ijk}^2}{\sigma_k^2} \right] - \frac{2\gamma_{ijk}(1-\gamma_{ijk})r_{ijk}}{\sigma_k^2} x_{ij},  
\end{align*}
and hence we can contribute
\begin{align*}
\left\| \frac{\partial q_{ijk}}{\partial \beta_k}  \right\| &\le  \left\| \frac{\partial \gamma_{ijk}}{\partial \beta_k}  \right\| \cdot \left| 1 - \frac{(1-2\gamma_{ijk})r_{ijk}^2}{\sigma_k^2} \right| +  \left\| \frac{2\gamma_{ijk}(1-\gamma_{ijk})r_{ijk}}{\sigma_k^2} x_{ij} \right\| 
\end{align*}
\begin{align*}
&\le  \left\| \frac{\partial \gamma_{ijk}}{\partial \beta_k}  \right\| \cdot \left[ 1 + \frac{1}{\sigma_k^2} \left( |\epsilon_{ijk}| + \|x_{ij}\|\|\beta_k\| + \|z_{ij}\|\|\hat{u}_i\| \right)^2 \right] \\ &+ \frac{1}{2\sigma_k^2} \left( |\epsilon_{ijk}| + \|x_{ij}\|\|\beta_k\| + \|z_{ij}\|\|\hat{u}_i\| \right) \|x_{ij}\|.
\end{align*}
Using the derivative property of Softmax,
\begin{eqnarray*}  
\frac{\partial \gamma_{ijk}}{\partial \beta_k} &=& \gamma_{ijk} \frac{x_{ij}r_{ijk}}{\sigma_k^2} - \gamma_{ijk} \sum_{\ell=1}^K \gamma_{ij\ell} \frac{x_{ij}r_{ij\ell}}{\sigma_\ell^2},\\
\left\|\frac{\partial \gamma_{ijk}}{\partial \beta_k}\right\| &\leq& \gamma_{ijk} \left( \left\| \frac{x_{ij}r_{ijk}}{\sigma_k^2} \right\| + \sum_{\ell=1}^K \gamma_{ij\ell} \left\| \frac{x_{ij}r_{ij\ell}}{\sigma_\ell^2} \right\| \right).
\end{eqnarray*}
Due to $\gamma_{ijk} \le 1$ and $\sigma_k^2 \ge \sigma_{\min}^2$, 
$$
\left\|\frac{\partial \gamma_{ijk}}{\partial \beta_k}\right\| \leq 1 \cdot \left( \frac{\|x_{ij}\|\max_\ell|r_{ij\ell}|}{\sigma_{\min}^2} + \left(\sum_{\ell}\gamma_{ij\ell}\right) \frac{\|x_{ij}\|\max_\ell|r_{ij\ell}|}{\sigma_{\min}^2} \right) = \frac{2\|x_{ij}\|\max_\ell|r_{ij\ell}|}{\sigma_{\min}^2}, 
$$ 
and then we derive 
$$ 
\left\|\frac{\partial \gamma_{ijk}}{\partial \beta_k}\right\|   \leq \frac{2\|x_{ij}\|}{\sigma_{\min}^2} [\max_k|\epsilon_{ijk}| + \|x_{ij}\|\|\beta_k\| + \|z_{ij}\|\|\hat{u}_i\|].
$$ 
Assume that $M_{ij} := |\max_k|\epsilon_{ijk}| + \|x_{ij}\|\|\beta_k\| + \|z_{ij}\|\|\hat{u}_i\|$, 
\begin{align*}
\left\| \frac{\partial q_{i j k}}{\partial \beta_{k}} \right\| &\leq \left\| \frac{\partial q_{ijk}}{\partial \beta_k}  \right\| \cdot[ 1 + \frac{1}{\sigma_k ^2}M_{ij} ^2 ] +  \frac{1}{2 \sigma_k ^2} M_{ij} \| x_{ij}\| \\
&  \leq  \frac{2\|x_{ij}\|}{\sigma_{\min}^2} M_{ij} + \frac{2\|x_{ij}\|}{\sigma_{\min}^2 \sigma_k^2} M_{ij}^3 + \frac{1}{2\sigma_k^2} M_{ij} \|x_{ij}\| \leq  \frac{1}{2 \sigma_{\min} ^2} B^2 \|x_{ij}\|^4.
\end{align*}
Under the stated assumptions, together with the normality of the noise and the finiteness of $K$,  the expectation of the preceding quantity is finite. This bounded gradient implies that $q_{ijk}$ is Lipschitz continuous with respect to $\beta_k$. Similarly, we can conclude that 
$s_{ijk}$ and $t_{ijk}$ is also Lipschitz continuous, with Lipschitz constant $O(2 M_{ij}^3{\|x_{ij}\|}/{\sigma_{\min}^4} )$. 

Next, we consider the second-order derivative term
$$
H_{i,\beta_k}(\Psi) = \frac{\partial^2 h_i(\hat{u}_i, \Psi)}{\partial \beta_k \partial \beta_k^\top} + \frac{\partial^2 h_i (\hat{u}_i, \Psi)}{\partial \beta_k \partial u_i^\top} \cdot [H_i(\hat{u}_i, \Psi)]^{-1} \cdot \frac{\partial^2 h_i(\hat{u}_i, \Psi)}{\partial u_i \partial \beta_k ^\top}.
$$
The difference between $H_{i,\beta_k}(\Psi)$ at $\Psi$ and $\tilde{\Psi}$ can be decomposed into four terms:
$$
H_{i,\beta_k}(\Psi) - H_{i,\beta_k}(\tilde{\Psi}) = \Delta_1 + \Delta_2 + \Delta_3 + \Delta_4.
$$
The first term represents the difference in the direct second derivatives with respect to $\beta_k$:
$$
\Delta_1 = \left[ \frac{\partial^2 h_i (\hat{u}_i, \Psi)}{\partial \beta_k \partial \beta_k^\top} - \frac{\partial^2 h_i (\hat{u}_i, \tilde{\Psi})}{\partial \beta_k \partial \beta_k^\top} \right].
$$
The remaining terms arise from the expansion of the product involving the inverse random-effect Hessian matrix $[H_i]^{-1}$. By adding and subtracting intermediate terms, we obtain
$$
\begin{aligned}
\Delta_2 &= \left[\frac{\partial^2 h_i (\hat{u}_i, \Psi)}{\partial \beta_k \partial u_i^\top} - \frac{\partial^2 h_i (\hat{u}_i, \tilde{\Psi})}{\partial \beta_k \partial u_i^\top}\right] [H_i(\hat{u}_i, \tilde{\Psi})]^{-1} \frac{\partial^2 h_i (\hat{u}_i, \tilde{\Psi})}{\partial u_i \partial \beta_k}, \\
\Delta_3 &= \frac{\partial^2 h_i (\hat{u}_i, \Psi)}{\partial \beta_k \partial u_i^\top} [H_i(\hat{u}_i, \tilde{\Psi})]^{-1} \left[\frac{\partial^2 h_i (\hat{u}_i, \Psi)}{\partial u_i \partial \beta_k} - \frac{\partial^2 h_i (\hat{u}_i, \tilde{\Psi})}{\partial u_i \partial \beta_k}\right], \\
\Delta_4 &= \frac{\partial^2 h_i (\hat{u}_i, \Psi)}{\partial \beta_k \partial u_i^\top} \left([H_i(\hat{u}_i, \Psi)]^{-1} - [H_i(\hat{u}_i, \tilde{\Psi})]^{-1}\right) \frac{\partial^2 h_i (\hat{u}_i, \Psi)}{\partial u_i \partial \beta_k}.
\end{aligned}
$$

Bound for $\Delta_1$: Recall that the Hessian (or its principal part) is given by
$$
\frac{\partial^2 h_i(\hat{u}_i; \Psi)}{\partial \beta_k \partial \beta_k^\top} = -\sum_{j=1}^{n_i} \frac{q_{ijk}}{\sigma_k^2} x_{ij} x_{ij}^\top.
$$
Using the Mean Value Theorem  to the scalar weights $q_{ijk}$, we have
\begin{equation}
    |q_{ijk}(\Psi) - q_{ijk}(\tilde{\Psi})| \le \|\nabla_\beta q_{ijk}\| \|\beta_k - \tilde{\beta_k}\|.
\label{eq:qijk_gradient_bound}
\end{equation}
Substituting the previously derived bound for $\|\nabla q_{ijk}\|$, we obtain
 $$\begin{aligned}
\|\Delta_1\| &\le \frac{1}{\sigma_{\min}^2} \sum_{j=1}^{n_i} \left( \frac{B^2}{\sigma_{\min}^2} \|x_{ij}\|^4 \|\beta_k - \tilde{\beta_k}\| \right) \|x_{ij}\|^2 \\
&= \left( \frac{B^2}{\sigma_{\min}^2} \sum_{j=1}^{n_i} \|x_{ij}\|^6 \right) \|\beta_k - \tilde{\beta_k}\|  .
\end{aligned}$$
This confirms that $\Delta_1$ scales linearly with the subject size $n_i$.
Define  
$$
B_i(\Psi) :=\|  \frac{\partial^2 h_i (\hat{u}_i, \Psi)}{\partial \beta_k \partial u_i^\top} \| = \left\| \sum_{j=1}^{n_i}  \frac{q_{ijk}}{\sigma_k^2}  x_{ij} z_{ij}^\top \right\|,
$$
we can get
\begin{eqnarray*}
\left\| \sum_{j=1}^{n_i} \frac{q_{ijk}(\tilde{\Psi})}{\sigma_k^2} x_{ij} z_{ij}^\top \right\| &\le& \sum_{j=1}^{n_i} \frac{|q_{ijk}|}{\sigma_{\min}^2} \|x_{ij}\| \|z_{ij}\| \le C_B \sum_{j=1}^{n_i} \|x_{ij}\| \|z_{ij}\|,\\
B_i(\Psi) - B_i(\tilde{\Psi}) &=& -\sum_{j=1}^{n_i} \frac{1}{\sigma_k^2} \left( q_{ijk}(\Psi) - q_{ijk}(\tilde{\Psi}) \right) x_{ij} z_{ij}^\top.
\end{eqnarray*}
By the Mean Value Theorem applied to the scalar function $q_{ijk}(\Psi)$, with the bound derived in equation \eqref{eq:qijk_gradient_bound}, and using the gradient bound derived earlier, substituting this back:
$$\begin{aligned}
\| B_i(\Psi) - B_i(\tilde{\Psi}) \| &\le \sum_{j=1}^{n_i} \frac{1}{\sigma_{\min}^2} \left( \frac{1}{2 \sigma_{\min} ^2} B^2 \|x_{ij}\|^4 \|\Psi - \tilde{\Psi}\| \right) \|x_{ij}\| \|z_{ij}\| \\
&\le \left( C_B' \sum_{j=1}^{n_i} \|x_{ij}\|^5 \|z_{ij}\| \right)\|\Psi - \tilde{\Psi}\|.
\end{aligned}$$

Decompose $\Delta_2$  into three components 
$$\|\Delta_2\| \le \| B_i(\Psi) - B_i(\tilde{\Psi}) \| \cdot \| [H_i(\tilde{\Psi})]^{-1} \| \cdot \| B_i(\tilde{\Psi}) \|,
$$
and 
$$  \Delta_ 3 \le \|{B_i(\Psi)} \| \cdot \| [H_i(\tilde{\Psi})] ^{-1} \| \cdot  \|  B_i(\Psi)^\top - B_i(\tilde{\Psi})^\top \|. 
$$
Bound for $\Delta_4$: According to $A^{-1} - B^{-1} = -A^{-1}(A - B)B^{-1}$, we expand the difference of inverses,
$$
\left [H_i(\hat{u}_i, \Psi)(\Psi)\right]^{-1}-\left[H_i(\hat{u}_i, \tilde{\Psi})\right]^{-1} = -\left[H_i(\hat{u}_i, \Psi)\right]^{-1}\left(H_i(\hat{u}_i, \Psi)-H_i(\hat{u}_i, \tilde{\Psi})\right)\left[H_i(\hat{u}_i, \tilde{\Psi})\right]^{-1},
$$  
and hence
$$
\Delta_4 = - \frac{\partial^2 h_i (\hat{u}_i, \Psi)}{\partial \beta_k \partial u_i^\top} \left(\left[H_i(\hat{u}_i, \Psi)\right]^{-1}\left(H_i(\hat{u}_i, \Psi)-H_i(\hat{u}_i, \tilde{\Psi})\right)\left[H_i(\hat{u}_i, \tilde{\Psi})\right]^{-1}\right) \frac{\partial^2 h_i (\hat{u}_i, \Psi)}{\partial u_i \partial \beta_k}.
$$
Because
$$ H_i(\hat{u}_i, \Psi) = \frac{\partial^2 h}{\partial \hat{u} \partial \hat{u}^{\top}} = -\Sigma^{-1} - \sum_{j=1}^{n_i} \sum_{k=1}^K q_{ijk}z_{ij}z_{ij}^\top,$$
we can derive
$$
\begin{aligned}
\mathbb{E}\left(\frac{\partial^2 h}{\partial \hat{u} \partial \hat{u}^{\top}}\right) &=  \mathbb{E}\left( \sum_{k=1}^K q_{ijk}z_{ij}z_{ij}^\top\right)  \geq \mathbb{E}\left( 2(k-1) \cdot \min_k( (x_{ij}^\top \beta_k+ z_{ij}^\top u_i - y_{ij}) ^2 +1 ) \cdot z_{ij}z_{ij}^\top) \right) \\
& \geq   \left[  \frac{(K -1)\min(\beta_k -\beta_l)^2 }{4 \sigma_{max}^2} +1\right] \mathbb{E}(z_{ij}z_{ij}^\top) \geq c I_q,
\end{aligned}
$$
where $I_q$ is a $q\times q$ dimensional identity matrix.
According to the Law of Large Numbers, when the sample size \(n_i\)is sufficiently large, $ H_i(\hat{u}_i, \Psi) $ converges in probability to its expectation. Thus, for sufficiently large \(n_i\), we have
$$\lambda_{\min}( H_i(\hat{u}_i, \Psi) ) \ge \frac{C}{2} = \lambda_1, \quad  \| [H_i(\hat{u}_i, \Psi)]^{-1} \| \le \frac{1}{\lambda_1}.
$$
 Therefore,
$$
\left\| \left[ H_i(\hat{u}_i, \Psi) \right]^{-1} \left( H_i(\hat{u}_i, \Psi) - H_i(\hat{u}_i, \tilde{\Psi}) \right) \left[ H_i(\hat{u}_i, \tilde{\Psi}) \right]^{-1} \right\| \leq \frac{1}{\lambda_1^2} \left\| H_i(\hat{u}_i, \Psi) - H_i(\hat{u}_i, \tilde{\Psi}) \right\|.
$$
Recall the structure of the Hessian matrix with respect to $u_i$,  and the difference is driven entirely by the weights $q_{ijk}$:
$$
H_i(\Psi) - H_i(\tilde{\Psi}) = \sum_{j=1}^{n_i} \sum_{k=1}^K (q_{ijk}(\Psi) - q_{ijk}(\tilde{\Psi})) z_{ij} z_{ij}^\top.
$$
Applying the gradient bound derived previously:
$$
|q_{ijk}(\Psi) - q_{ijk}(\tilde{\Psi})| \le \|\nabla_\beta q_{ijk}\| \|\Psi - \tilde{\Psi}\| \le \left( \frac{B^2}{\sigma_{\min}^2} \|x_{ij}\|^4 \right) \|\Psi - \tilde{\Psi}\|.
$$
Substituting this back:
$$\begin{aligned}
\| H_i(\Psi) - H_i(\tilde{\Psi}) \| &\le \sum_{j=1}^{n_i} \sum_{k=1}^K |q_{ijk}(\Psi) - q_{ijk}(\tilde{\Psi})| \cdot \|z_{ij} z_{ij}^\top\| \\
&\le \left( \sum_{j=1}^{n_i} \frac{K B^2}{\sigma_{\min}^2} \|x_{ij}\|^4 \|z_{ij}\|^2 \right) \|\Psi - \tilde{\Psi}\| \le  \sum_{j=1}^{n_i} \frac{K B^2}{\sigma_{\min}^2} \|x_{ij}\|^4 \|z_{ij}\|^2 \|\beta - \tilde{\beta}\|.
\end{aligned}$$
The term $\frac{\partial^2 h_i}{\partial \beta_k \partial u_i^\top}$  is bounded by
$$ 
\left\| \sum_{j=1}^{n_i} \frac{\partial \ell_{ij}}{\partial \beta_k \partial u_i^\top} \right\| \le C_B \sum_{j=1}^{n_i} \|x_{ij}\| \|z_{ij}\|.
$$
Now, assemble the bounds,
$$\begin{aligned}
\|\Delta_4\| &\le \| \frac{\partial \ell_{ij}}{\partial \beta_k \partial u_i^\top} \| \cdot \left\| [H_i(\Psi)]^{-1} - [H_i(\tilde{\Psi})]^{-1} \right\| \cdot \| \frac{\partial \ell_{ij}}{\partial \beta_k \partial u_i^\top} \| \\
&\le \| \frac{\partial \ell_{ij}}{\partial \beta_k \partial u_i^\top} \|^2 \cdot \frac{1}{\lambda_1^2} \cdot \| H_i(\Psi) - H_i(\tilde{\Psi}) \| \|\beta - \tilde{\beta}\|  \\
&\le \left( C_B \sum_{j=1}^{n_i} \|x_{ij}\| \|z_{ij}\| \right)^2 \cdot \frac{1}{\lambda_1^2} \cdot \left( \frac{K B^2}{\sigma_{\min}^2} \sum_{j=1}^{n_i} \|x_{ij}\|^4 \|z_{ij}\|^2 \right) \\
&= \frac{C_B}{\lambda_1^2} \left( \sum_{j=1}^{n_i} \|x_{ij}\| \|z_{ij}\| \right)^2 \left( \sum_{j=1}^{n_i} \|x_{ij}\|^4 \|z_{ij}\|^2\right) \|\beta - \tilde{\beta}\|.
\end{aligned}$$
Bound for $\Delta_2$ and $\Delta_3$: 
$$
\begin{aligned}
 \|\Delta_3\| \approx \|\Delta_2\| \le  \frac{C_b'}{\lambda_1} \left( \sum_{j=1}^{n_i} \|x_{ij}\|^5 \|z_{ij}\| \right) \left( \sum_{j=1}^{n_i} \|x_{ij}\| \|z_{ij}\| \right) \|\beta - \tilde{\beta}\|.
\end{aligned}
$$
According to Theorem 1, $\hat{\beta_k} \xrightarrow{p} \beta_k$; additionally, since the preceding term is bounded, $\Delta_1 + \Delta_2 + \Delta_3 + \Delta_4\xrightarrow{p} 0$.
Finally, regarding the first-order derivative part, the results hold because the mixed normal distribution density function is continuously twice differentiable, and the samples are independent. Further, by  Condition (B4) and Slutsky's theorem, we have
$$
\sqrt{N}(\hat\beta_k-\beta_k) \xRightarrow{d} \mathcal N\left(0, V_\beta \right),
$$
where 
$$
V_\beta = \bigg( \mathbb{E} \bigg[ \frac{\partial^2 \ell}{\partial \beta_k \partial \beta_k^\top} \bigg]_{\beta_0} \bigg)^{-1} 
\bigg( \mathbb{E} \bigg[ \frac{\partial \ell}{\partial \beta_k} \bigg(\frac{\partial \ell}{\partial \beta_k}\bigg)^\top \bigg]_{\beta_0} \bigg) 
\bigg( \mathbb{E} \bigg[ \frac{\partial^2 \ell}{\partial \beta_k \partial \beta_k^\top} \bigg]_{\beta_0} \bigg)^{-1}.
$$
\end{proof}

\end{document}